\newenvironment{bprooftree}
{\leavevmode\hbox\bgroup}
{\DisplayProof\egroup}
\newtheorem{theorem}{Theorem}[section]
\newtheorem{proposition}[theorem]{Proposition}
\newtheorem{lemma}[theorem]{Lemma}
\theoremstyle{definition}
\newtheorem{definition}[theorem]{Definition}
\newtheorem{example}[theorem]{Example}
\tikzset{curve/.style={settings={#1},to path={(\tikztostart)
    .. controls ($(\tikztostart)!\pv{pos}!(\tikztotarget)!\pv{height}!270:(\tikztotarget)$)
    and ($(\tikztostart)!1-\pv{pos}!(\tikztotarget)!\pv{height}!270:(\tikztotarget)$)
    .. (\tikztotarget)\tikztonodes}},
    settings/.code={\tikzset{quiver/.cd,#1}
        \def\pv##1{\pgfkeysvalueof{/tikz/quiver/##1}}},
    quiver/.cd,pos/.initial=0.35,height/.initial=0}
\tikzset{tail reversed/.code={\pgfsetarrowsstart{tikzcd to}}}
\tikzset{2tail/.code={\pgfsetarrowsstart{Implies[reversed]}}}
\tikzset{2tail reversed/.code={\pgfsetarrowsstart{Implies}}}
\tikzset{no body/.style={/tikz/dash pattern=on 0 off 1mm}}
\newrobustcmd{\R}{\mathbb{R}}
\newrobustcmd{\N}{\mathbb{N}}
\newrobustcmd{\Q}{\mathbb{Q}}
\newrobustcmd{\F}{\mathbb{F}}
\newrobustcmd{\Z}{\mathbb{Z}}
\newrobustcmd{\mC}{\mathcal{C}}
\newrobustcmd{\mD}{\mathcal{D}}
\newrobustcmd{\mDmet}{\widehat{\mathcal{D}}}
\newrobustcmd{\mP}{\mathcal{P}}
\newrobustcmd{\one}{\mathbb{1}}
\newrobustcmd{\bigplus}{\mathop{\mbox{\Large$+$}}}
\newrobustcmd{\norm}[1]{\left\lVert #1 \right\rVert}
\newrobustcmd{\inp}[2]{\left\langle #1, #2 \right\rangle}
\newrobustcmd{\id}{\mathrm{id}}
\newrobustcmd{\gln}{\text{GL}_n}
\newrobustcmd{\opp}[1]{#1^{\mathrm{op}}}
\newrobustcmd{\supp}[1]{\mathrm{supp}\left( #1 \right)}
\newrobustcmd{\inl}{\mathsf{inl}}
\newrobustcmd{\inr}{\mathsf{inr}}
\newrobustcmd{\act}[1]{\stackrel{#1}{\mapsto}}
\newrobustcmd{\unact}[1]{\stackrel{#1}{\mapsfrom}}
\newrobustcmd{\point}{\star}
\newrobustcmd{\bpoint}{\ast}
\newrobustcmd{\jeq}[1]{\stackrel{\eqref{#1}}{=}}
\newrobustcmd{\haus}{H}
\newrobustcmd{\kant}{K}
\newrobustcmd{\hk}{\haus\kant}
\newrobustcmd{\distr}{\varphi}
\newrobustcmd{\ddistr}{\Phi}
\newrobustcmd{\distrb}{\psi}
\newrobustcmd{\ddistrb}{\Psi}
\newrobustcmd{\distrc}{\theta}
\newrobustcmd{\EM}{\mathbf{EM}}
\newrobustcmd{\KL}{\mathrm{Kl}}
\newrobustcmd{\EMs}{\mathrm{EM}_{\mathrm{s}}}
\newrobustcmd{\LK}[1]{{#1}_{\textnormal{\L K}}}
\newrobustcmd{\LKp}[2]{L_{#2}(#1)}
\newrobustcmd{\dirac}[1]{\delta_{#1}}
\newrobustcmd{\LKAlg}{\mathbf{Alg}_{\textnormal{\L K}}}
\newrobustcmd{\mmin}{\mathfrak{min}}
\newrobustcmd{\bn}{\mathbf{n}}
\newrobustcmd{\suc}{\mathsf{succ}}
\newrobustcmd{\Set}{\mathbf{Set}}
\newrobustcmd{\FRel}{\mathbf{FRel}}
\newrobustcmd{\PreMet}{\mathbf{PreMet}}
\newrobustcmd{\PQMet}{\mathbf{PQMet}}
\newrobustcmd{\PSMet}{\mathbf{PSMet}}
\newrobustcmd{\DMet}{\mathbf{DMet}}
\newrobustcmd{\PMet}{\mathbf{PMet}}
\newrobustcmd{\QMet}{\mathbf{QMet}}
\newrobustcmd{\MMet}{\mathbf{MMet}}
\newrobustcmd{\SMet}{\mathbf{SMet}}
\newrobustcmd{\GMet}{\mathbf{GMet}}
\newrobustcmd{\Met}{\mathbf{Met}}
\newrobustcmd{\UMet}{\mathbf{UMet}}
\newrobustcmd{\op}{\mathsf{op}}
\newrobustcmd{\sem}[1]{\llbracket #1 \rrbracket}
\newrobustcmd{\term}[1]{T_{#1}}
\newrobustcmd{\eqns}[1]{\mathcal{V}_{#1}}
\newrobustcmd{\eqn}{\phi}
\newrobustcmd{\eqnb}{\psi}
\newrobustcmd{\clauses}[1]{\mathcal{H}_{#1}}
\newrobustcmd{\qterm}[1]{\widehat{T}_{#1}}
\newrobustcmd{\Mod}{\mathbf{Mod}}
\newrobustcmd{\Alg}{\mathbf{Alg}}
\newrobustcmd{\forget}{U}
\newrobustcmd{\algA}{\mathbb{A}}
\newrobustcmd{\algB}{\mathbb{B}}
\newrobustcmd{\Cat}{\mathbf{C}}
\newrobustcmd{\Catb}{\mathbf{D}}
\newrobustcmd{\mon}{M}
\newrobustcmd{\monb}{T}
\newrobustcmd{\vars}{X}
\newrobustcmd{\sig}{\Sigma}
\newrobustcmd{\qsig}{\widehat{\Sigma}}
\newrobustcmd{\satisfies}{\vDash}
\newrobustcmd{\satp}[1]{\vDash^{#1}}
\newrobustcmd{\assign}{\iota}
\newrobustcmd{\subst}{\sigma}
\newrobustcmd{\termeq}[1]{\equiv_{#1}}
\newrobustcmd{\termd}[1]{d_{#1}}
\newrobustcmd{\relgen}[1]{\vdash_{#1}}
\newrobustcmd{\qeqclass}[1]{\langle #1 \rangle}
\newrobustcmd{\collapse}[1]{\mathfrak{q}_{#1}}
\newrobustcmd{\LKD}{\mathcal{D}^{\textnormal{\L K}}}
\title{Beyond Nonexpansive Operations in Quantitative Algebraic Reasoning}
\author{Matteo Mio \and Ralph Sarkis \and Valeria Vingudelli}
\begin{document}

\maketitle
\begin{abstract}
   
    The framework of quantitative equational logic has been successfully applied to reason about algebras whose carriers are metric spaces and operations are nonexpansive. We extend this framework in two orthogonal directions: algebras endowed with generalised metric space structures, and operations being nonexpansive up to a lifting. We apply our results to the algebraic axiomatisation of the \L ukaszyk--Karmowski distance on probability distributions, which has recently found application in the field of representation learning on Markov processes.
    \end{abstract}
\section{Introduction}\label{intro}

Equational reasoning and algebraic methods are widespread in all areas of computer science, and in particular in program semantics. Indeed, initial algebra semantics and monads are cornerstones of the modern theory of functional programming and allow us to reason about inductive definitions, computational effects and specifications in a formal way (see, e.g., \cite{DBLP:journals/iandc/Moggi91,DBLP:conf/rex/RuttenT93, Hyland2006}). In elementary terms, this is due to the fact that many objects of interest in programming are free algebras of some algebraic theory, i.e., a \emph{signature} $\sig$ together with a set of equational \emph{axioms} $E$ between $\sig$--terms. Examples include:  finite sets (free algebras of the theory of semilattices)
$$
 \Sigma  = \{ \vee : 2 \} \ \ \ \ \  E = \begin{Bmatrix} x\vee y = y\vee x,\  x \vee x = x,\\
  x \vee(y \vee z) = (x \vee y) \vee z\end{Bmatrix} 
$$
\noindent
finite lists (free monoids), finitely supported distributions (free convex algebras) \emph{etc}. Since free algebras are (up to isomorphism) term algebras---i.e., sets of $\sig$--terms modulo the congruence relation $\termeq{E}$ generated from the axioms $E$ using the deduction rules of the syntactic apparatus of \emph{equational logic}---they are easy to manipulate formally in a computer. 

Objects definable as free algebras, as in the framework outlined above, are sets $X$ equipped with operations of type $X^n \rightarrow X$.
This means it is not straightforward, or even possible, to describe objects that are sets endowed with some additional structure such as, e.g., a metric $d:X^2\rightarrow[0,1]$. To address this limitation, in a series of recent papers (including \cite{DBLP:conf/lics/BacciMPP18,radu2016, DBLP:conf/lics/MardarePP17, DBLP:conf/calco/BacciMPP21,DBLP:conf/lics/MardarePP21}), the authors have proposed the notion of \emph{quantitative algebras}: algebras whose carriers are metric spaces. 

At the syntactic level, the apparatus of equational logic is replaced by a deductive system allowing the derivation of judgments of the form $s=_\varepsilon t$, where $s,t$ are $\sig$--terms and $\varepsilon \in [0,1]$, with the intended meaning that $d(s,t)\leq \varepsilon$. These judgments are derived using quantitative inferences, i.e.,  deduction rules of the form: 
\begin{center}
$\{s_1 =_{\varepsilon_1} t_1, \dots , s_n=_{\varepsilon_n} t_n\} \vdash s=_\varepsilon t$.
\end{center}
In particular, the deductive system includes rules such as: 
\begin{center}
$\emptyset \vdash x =_0 x\ \ \ \ \ \ \ \ \ \ \ \ \ \ \ \ \ \
\{x =_\varepsilon y\} \vdash y =_\varepsilon x $
\end{center}
\begin{center}
$\{ x =_{\varepsilon_1} y, y=_{\varepsilon_2}z\}  \vdash x=_{\varepsilon_1 + \varepsilon_2} z$
\end{center}
corresponding to properties of metrics such as reflexivity ($d(x,x)=0$), symmetry ($d(x,y)=d(y,x)$) and triangular inequality ($d(x,y)\leq d(x,y)+d(y,z)$). 
A \emph{quantitative theory} over a signature $\sig$ is generated from a set of quantitative inferences, playing the role of implicational axioms,
 by closing under deducibility in the apparatus. 
Models of quantitative theories are  \emph{quantitative algebras}, \todo{MM: Made small modification after skype discussion}
which are metric spaces $(A,d)$ equipped with interpretations $\sem\op: A^n \to A$ of the operations such that for each $\op \in \Sigma$
$$d(\sem{\op}(a_1,...,a_n), \sem{\op}(a'_1, ... , a'_n)) \leq \max\{d(a_i,a'_i)\}_{1\leq i \leq n}.$$
This is equivalent to requiring that
$\sem{\op}:(A^n, d_\times) \rightarrow (A,d)$ is nonexpansive (also known as $1$--Lipschitz), with  $d_\times$ being the (categorical) product metric on $A^n$.
This is reflected in the deductive system by a rule called \textsf{NE}:
\[\{x_i=_{\varepsilon_i} y_i\}_{1\leq i \leq n}
\vdash \op(x_1,...,x_n) =_{{\max(\varepsilon_1,...,\varepsilon_n)}} \op(y_1, ... , y_n).\]

Consider, for example, 
the theory of \emph{quantitative semilattices} of \cite{radu2016} having signature  $\sig \!= \!\{ \vee  : 2\}$ and implicational axioms (we just write $s=_\varepsilon t$ for $\emptyset\vdash s=_\varepsilon t$):
\[x \vee y =_0 y \vee x \quad  x \vee x =_0 x \quad
x \vee (y \vee z) =_0 (x\vee y) \vee z \]
These just state the usual axioms of semilattices. Indeed, since in any metric space it holds that $d(x,y)=0$ implies $x=y$, the judgment $s=_0 t$ expresses equality.  
From these axioms, further quantitative inferences can be obtained using the deductive apparatus, like the \textsf{NE} rule:\todo{I slightly changed because I didn't like "derive further implicational axioms"}
\[\left\{ x=_{\varepsilon_1} x^\prime, y=_{\varepsilon_2} y^\prime \right\} \vdash x \vee y =_{\max(\varepsilon_1,\varepsilon_2)} x^\prime \vee y^\prime\]
which expresses that the interpretation of the binary operation $\vee: 2$ must be nonexpansive.

Given a quantitative theory over a signature $\Sigma$ generated by a set of implicational axioms $E$, we have a category $\Alg(\Sigma, E)$ consisting of quantitative algebras modelling the theory and their homomorphisms, i.e., nonexpansive maps $f:(A,d_A)\rightarrow (B,d_B)$ preserving all operations $\sem{\op}$.
Among the main results of  \cite{radu2016, DBLP:conf/lics/MardarePP17, DBLP:conf/lics/BacciMPP18} the following is of key importance:\todo{Rewrote this part}

\vspace{2mm}
%
%
\noindent
\textbf{Theorem 3.3 in  \cite{DBLP:conf/lics/BacciMPP18}}. 
The free quantitative algebra generated by a metric space $(A,d)$ exists in $\Alg(\Sigma, E)$ and is isomorphic to the quantitative term algebra $\term{\Sigma,E}(A,d)$.\vspace{2mm}


More can be said if the implicational axioms $E$ have a constrained form, where all the terms in their premises are variables:
$
x_1 =_{\varepsilon_1} y_1, \dots, x_n =_{\varepsilon_n} y_n \vdash s=_\varepsilon t
$.
In this case, which covers several interesting examples (e.g., quantitative semilattices), we have a stronger result:
\vspace{2mm}

\noindent
\textbf{Theorem 4.2 in  \cite{DBLP:conf/lics/BacciMPP18}}. The Eilenberg--Moore category $\EM(\term{\Sigma,E})$ of the term monad $\term{\Sigma,E}$ is isomorphic to the category
$\Alg(\Sigma, E)$.

\vspace{2mm}

Several interesting metric spaces can be identified with free quantitative algebras. For example the collection of non-empty finite subsets of $(A,d)$, endowed with the Hausdorff metric and interpreting  $\sem{\vee} = \cup$ (union), can be shown (see \cite{radu2016}) to be isomorphic to the free quantitative semilattice generated by the metric space $(A,d)$. 

%


\subsection{Beyond Metric Spaces and Nonexpansive Maps}

The main purpose of this paper is to extend the framework of \cite{DBLP:conf/lics/BacciMPP18} outlined above, while maintaining its key characteristics and properties, in order to reason equationally about additional interesting mathematical objects which do not fit the constraints of the original framework.

We immediately discuss a specific example arising from recent research in the field of learning and artificial intelligence \cite{DBLP:journals/corr/abs-2106-08229}, which will serve as a main motivation. 
Other examples are discussed in Section \ref{examples}.
  In \cite{DBLP:journals/corr/abs-2106-08229}, the authors have developed new techniques for representation learning on Markov processes based on the \L ukaszyk–Karmowski (\L K for short) distance \cite{lukaszyk04}. This is a distance  $\LK{d}:\mD X\times\mD X\rightarrow [0,1]$ on finitely supported distributions on a set $X$ endowed with an arbitrary map $d:X^2\rightarrow[0,1]$ (i.e., $(X,d)$ is not necessarily a metric space). Even if $d$ is a metric, the \L K distance $\LK{d}$ does not satisfy all axioms of metric spaces. Specifically the reflexivity property is in general not satisfied: $\LK{d}(\distr,\distr) \neq 0$. However,  $\LK{d}$ always satisfies the symmetry and triangular inequality axioms (see Equations \eqref{eq:symm} and \eqref{eq:trineq} in Section \ref{background}) and, therefore, $(\mD X, \LK{d})$ is a \emph{diffuse metric space}  (see \cite{DBLP:journals/corr/abs-2106-08229} or Section \ref{sec:gmets} for precise definitions).
  %
If we consider the convex algebra operation $+_p: \mD(X)\times \mD(X)\rightarrow \mD(X)$ on probability distributions defined by
\[(\distr +_p \distrb)(x) = p\distr(x) + (1-p)\distrb(x),\]
then it can be shown (see Lemma \ref{lem:plusp_not_ne}) that $+_p$ fails to be nonexpansive with respect to the \L K distance: 
$$
  \LK{d} ( \distr +_p \distr^\prime,  \distrb +_p \distrb^\prime  )   > \max\{ \LK{d}(\distr, \distrb) , \LK{d}(\distr^\prime, \distrb^\prime)\}.
$$

Thus we have an interesting mathematical object,  the diffuse metric space $(\mD(X), \LK{d})$, whose underlying set $\mD(X)$ is the free convex algebra over the set $X$ (see, e.g., \cite{jacobs:2010}), not fitting the framework of \cite{DBLP:conf/lics/BacciMPP18} due to two reasons: (1) $\LK{d}$ is not a metric, and (2) the algebraic (convex algebra) operation $+_p$ is not nonexpansive. 

Our contribution is to extend the framework of \cite{radu2016, DBLP:conf/lics/MardarePP17, DBLP:conf/lics/BacciMPP18} along two orthogonal axes in order to accomodate examples (see Section \ref{examples}) such as the one just discussed.

\paragraph{First extension axis:} our framework can be instantiated on structures $(X,d)$ where $d:X^2\rightarrow [0,1]$ is a \emph{generalised metric} such as any of the following (see Section \ref{background} for details): an ultrametric, metric, pseudometric, quasimetric, diffuse metric or just a fuzzy relation (i.e., $d$ unconstrained).

This first contribution is natural, yet requires some technical care. 
Most notably, we need to carefully distinguish in the deductive apparatus between the notions of equality $(=)$ and zero distance $(=_0)$. This is due to the fact that, unlike the case of metric spaces, in generalised metric spaces (e.g., pseudometric or diffuse metrics) it does not hold that $d(x,y)=0$ implies $x=y$. 
As a consequence, the identification of $=$ and $=_0$ is generally unsound. Our deductive apparatus, unlike that of   \cite{DBLP:conf/lics/BacciMPP18}, will therefore handle both ordinary equations $(s=t)$ and quantitative equations ($s=_\varepsilon t$), connected by the following congruence principle:
\[ x = y  \Rightarrow \left( (x =_\varepsilon z \Rightarrow y =_\varepsilon z ) \textnormal{ and }
( z =_\varepsilon x \Rightarrow z =_\varepsilon y ) \right).\]

\paragraph{Second extension axis:} our framework can deal with quantitative algebras whose operations are not nonexpansive with respect to the categorical product. The motivating example being the diffuse metric space $(\mD(X), \LK{d})$ with the convex combination operation $+_p$ discussed earlier. This is in our opinion the main conceptual and technical contribution of the paper. 



To achieve this flexibility, we consider \emph{lifted signatures} $\qsig=\{ \op_i \!:\! n_i\! :\! L_{\op_i}\}_{i\in I}$. Each operation $\op$ has an arity $n\in \N$, as for standard signatures, and is further equipped with 
a lifting which maps any generalised metric space $(X,d)$ to a generalised metric space 
$(X^{n}, L_\op(d))$ whose underlying set is the product set $X^{n}$, subject to some technical constraints.

In this new setting, quantitative algebras for a lifted signature $\qsig$ are (generalised) metric spaces $(X,d)$ in $\GMet$  where, for each $\op\in \sig$, the interpretation $\sem{\op}: X^n\rightarrow X$ is \emph{nonexpansive up to $L_\op$}, namely:
\[\sem{\op}: (X^n, L_\op(d))\rightarrow (X,d) \quad \textnormal{ is nonexpansive}.\]
At the syntactic level, our deductive apparatus replaces the \textsf{NE} rule of \cite{DBLP:conf/lics/BacciMPP18}
%
with a rule denoted by \textsf{$L$--NE} (see Definition \ref{defn:deduce}) expressing that each  $\op: n:L_\op \in \qsig$ is nonexpansive up to $L_\op$.

The framework of \cite{DBLP:conf/lics/BacciMPP18} can be seen as a particular case of ours by taking $\GMet = \Met$ and restricting all $L_{\op_i}$ to be the standard $n$--ary (categorical) product in $\Met$:
$L_{\op_i}(X,d) = (X^{n_i}, d_{\times})$.

%
%

\subsection{Outline and Main Results}
After presenting some background material in Section \ref{background}, we introduce in Section \ref{section_definitions} our new framework for quantitative reasoning based on liftings, and we prove the soundness of the associated deductive apparatus. In Section \ref{section_termfree}, we define the term monad and
we recover the key results of the framework of \cite{DBLP:conf/lics/BacciMPP18} in our new ``lifted'' setting. In particular we obtain proofs of the corresponding variants of Theorem 3.3 (free algebras exist and are term algebras) and Theorem 4.2 ($\EM(\qterm{\qsig,E}) \cong \Alg(\qsig,E)$) from \cite{DBLP:conf/lics/BacciMPP18}.
We give examples of applications of our new apparatus in Section \ref{examples}, covering in particular the interesting case of the \L K diffuse metric on probability distributions.
Full proofs can be found in the appendix.
%
%
%

\section{Background}\label{background}
\subsection{Monads}
We present some definitions and results regarding monads. We assume the reader is familiar with basic concepts of category theory (see, e.g., \cite{Awodey}). Facts easily derivable from known results in the literature are systematically marked as ``Proposition'' throughout the paper.

\begin{definition}[Monad]\label{monad:main_definition}
A \emph{monad} on a category $\Cat$ is a triple $(\mon, \eta, \mu)$ comprising a functor $\mon\colon\Cat \rightarrow \Cat$ together with two natural transformations: a \emph{unit} $\eta\colon \id_{\Cat} \Rightarrow \mon$, where $\id_{\Cat}$ is the identity functor on $\Cat$, and a \emph{multiplication} $\mu \colon \mon^{2} \Rightarrow \mon$, satisfying
$\mu \circ \eta\mon = \mu \circ \mon\eta = \id_{\mon} $ and $\mu\circ \mon\mu = \mu \circ\mu\mon$.
\end{definition}
A monad $\mon$ has an associated category of $M$--algebras.
\begin{definition}[$M$--algebras]\label{def:algebra-of-a-monad}
Let $(\mon,\eta,\mu)$ be a monad on $\Cat$. An \emph{algebra} for $\mon$ (or \emph{$\mon$--algebra}) is a pair $(A,\alpha)$ where $A\in\Cat$ is an object and $\alpha:\mon (A)\rightarrow A$ is a morphism such that (1) $ \alpha \circ  \eta_A = \id_A$ and (2) $\alpha\circ \mon \alpha= \alpha \circ \mu_A $ hold. An \emph{$\mon$--algebra morphism} between two $\mon$--algebras $(A,\alpha)$ and $(A^\prime,\alpha^\prime)$ is a morphism $f:A\rightarrow A^\prime$ in $\Cat$ such that
$f\circ \alpha = \alpha^\prime \circ \mon(f)$. The category of $\mon$--algebras and their morphisms, denoted by $\EM(\mon)$, is called the Eilenberg--Moore category for $\mon$.
\end{definition}

%
%

\subsection{Universal Algebra}
\label{subsection_universal}
We recall basic definitions and results from universal algebra, \cite{univalgebrabook} is a standard reference.
\begin{definition}[Signature]
    A \emph{signature} is a set $\sig$ containing operations symbols each with an arity $n \in \N$. We denote $\op:n \in \sig$ for a symbol $\op$ with arity $n$ in $\sig$. With some abuse of notation, we also denote with $\sig$ the functor $\sig: \Set \rightarrow \Set$ with the following action:
    \[\sig(A) := \coprod_{\op:n\in \sig} A^n \quad \sig(f):= \coprod_{\op:n \in \sig} f^n.\] 
\end{definition}
\begin{definition}[$\sig$--algebra]
    A \emph{$\sig$--algebra} is an algebra for the functor $\sig$. Equivalently, it is a set $A$ equipped with a set $\sem{\sig}_A$ 
    of interpretations of the operation symbols, i.e., for every $\op: n\in \sig$ there is a function $\sem{\op}_A: A^n \rightarrow A$ in $\sem{\sig}_A$. We call $A$ the \emph{carrier set}. A \emph{homomorphism} between two $\sig$--algebras with carrier sets $A$ and $B$ is a function $f: A \rightarrow B$ preserving $\sem{-}$, i.e.,     satisfying $\forall \op :n\in \sig, \forall a_1, \dots, a_n$,
    \[f(\sem{\op}_A(a_1,\dots,a_n)) = \sem{\op}_B(f(a_1),\dots,f(a_n)).\]
    The category of $\sig$--algebras and their homomorphisms is denoted $\Alg(\sig)$.
\end{definition}
\begin{definition}[Term algebra]
    Let $\sig$ be a signature and $A$ be a set. We denote with  $\term{\sig}A$ the set of terms built from $A$ using the operations in $\sig$, i.e., the set inductively defined as follows: $a \in \term{\sig}A$ for any $a \in A$, and 
    $\op(t_1,\dots, t_n) \in \term{\sig}A$  for any  $ \op:n \in \sig$ and $t_1,\dots t_n \in \term{\sig}A$. The set $\term{\sig}A$ has a canonical $\sig$--algebra structure with the interpretation of the operations $\op:n\in\sig$, defined as:
    \[\sem{\op}(t_1,\dots, t_n) = \op(t_1,\dots,t_n).\]
    It is called the \emph{term algebra over $A$} and denoted $\term{\sig}A$ (like its carrier set). We often identify elements $a\in A$ with the corresponding terms $a\in  \term{\sig}A$.
\end{definition}

\begin{definition}[Term monad]
The assignment $A \mapsto \term{\sig}A$ can be turned into a functor $\term{\sig}:\Set\rightarrow \Set$ by inductively defining, 
for any function $f: A \rightarrow B$, the homomorphism $\term{\sig}f:\term{\sig}A \rightarrow \term{\sig}B$ as follows: for any $a \in A$, $(\term{\sig}f)(a) = f(a)$, and $\forall \op:n \in \sig$ and $\forall t_1,\dots t_n \in \term{\sig}A$, 
    \[\term{\sig}f(\op(t_1,\dots, t_n)) = \op(\term{\sig}f(t_1),\dots, \term{\sig}f(t_n)).\]
   This becomes a monad by defining the unit  $\eta^{\sig}_A: A \rightarrow \term{\sig}A$ as mapping $a\in A$ to the term $a\in \term{\sig}A$, and  the multiplication  $\mu^{\sig}_A : \term{\sig}(\term{\sig}A)\rightarrow \term{\sig}A$ as mapping a term built out of terms $t(t_1,\dots, t_n)$ 
   to the flattened term $t(t_1,\dots, t_n)$. 
    We call $(\term{\sig}, \eta^{\sig}, \mu^{\sig})$ the \emph{term monad} for $\sig$.
\end{definition}
\begin{proposition}\label{prop:algtermisalgsig}
    For any signature $\sig$, $\Alg(\sig) \cong \EM(\term{\sig})$.
\end{proposition}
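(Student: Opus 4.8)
The plan is to construct an explicit pair of functors $F\colon \Alg(\sig)\to\EM(\term{\sig})$ and $G\colon\EM(\term{\sig})\to\Alg(\sig)$ and to check that they are mutually inverse, giving an isomorphism of categories rather than a mere equivalence. Both functors will act as the identity on underlying sets and on the underlying functions of morphisms, so all the content lies in translating between the two descriptions of the algebraic structure on a fixed carrier $A$.

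Defining $G$ is immediate. Given an Eilenberg--Moore algebra $(A,\alpha)$ with structure map $\alpha\colon\term{\sig}A\to A$, every depth-one term $\op(a_1,\dots,a_n)$ with $\op:n\in\sig$ and $a_i\in A$ lives in $\term{\sig}A$, so I set $\sem{\op}_A(a_1,\dots,a_n):=\alpha(\op(a_1,\dots,a_n))$. No further axioms are required of a $\sig$--algebra, so $(A,\sem{-}_A)$ is one. For morphisms, $G$ keeps the same underlying function: if $f\circ\alpha=\alpha'\circ\term{\sig}f$, then evaluating at $\op(a_1,\dots,a_n)$ and using $\term{\sig}f(\op(a_1,\dots,a_n))=\op(f(a_1),\dots,f(a_n))$ recovers exactly the homomorphism condition.

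For $F$, given $(A,\sem{-}_A)$ I define $\alpha\colon\term{\sig}A\to A$ by structural induction on terms, setting $\alpha(a)=a$ for $a\in A$ and $\alpha(\op(t_1,\dots,t_n))=\sem{\op}_A(\alpha(t_1),\dots,\alpha(t_n))$. The unit law $\alpha\circ\eta^{\sig}_A=\id_A$ is exactly the base clause. A routine term induction shows that a $\sig$--homomorphism $f$ satisfies $f\circ\alpha=\alpha'\circ\term{\sig}f$, so $F$ may again keep the underlying function.

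The main obstacle is verifying the associativity law $\alpha\circ\term{\sig}\alpha=\alpha\circ\mu^{\sig}_A$ for the $\alpha$ produced by $F$, which I would prove by structural induction on $t\in\term{\sig}(\term{\sig}A)$. In the base case $t=s$ for $s\in\term{\sig}A$: here $\mu^{\sig}_A$ returns $s$ while $\term{\sig}\alpha$ returns $\alpha(s)\in A$, and since $\alpha$ fixes elements of $A$ both sides collapse to $\alpha(s)$. In the inductive case $t=\op(t_1,\dots,t_n)$, I use that both $\mu^{\sig}_A$ and $\term{\sig}\alpha$ preserve $\op$, apply the inductive clause defining $\alpha$ on $\op$--headed terms, and invoke the induction hypothesis on each $t_i$. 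Finally, the two composites are identities essentially by construction: $G\circ F=\id$ because reading off the depth-one action of the inductively defined $\alpha$ returns the original operations, while $F\circ G=\id$ follows by a term induction in which the only nontrivial step, $\alpha(\op(\alpha(t_1),\dots,\alpha(t_n)))=\alpha(\op(t_1,\dots,t_n))$, is precisely an instance of the associativity law for $\alpha$ applied at a depth-one term. Since both functors fix underlying functions, they are trivially inverse on morphisms, establishing $\Alg(\sig)\cong\EM(\term{\sig})$.
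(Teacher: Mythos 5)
Your proof is correct and is exactly the standard argument that the paper implicitly relies on: it marks this statement as a ``Proposition'' (i.e., a fact easily derivable from the literature) and gives no proof of its own. Your construction of the two identity-on-carriers functors, with the associativity law verified by structural induction and the round-trip $F\circ G=\id$ reduced to an instance of that law at depth-one terms, is the canonical way to establish $\Alg(\sig)\cong\EM(\term{\sig})$.
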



For the rest of this paper, let $\vars$ be a fixed countable set of variables. An \emph{interpretation} of $\vars$ in  a $\sig$--algebra $\algA = (A,\sem{\sig})$ is a map $\assign: \vars \rightarrow A$. The interpretation extends to arbitrary $\term{\sig}\vars$ terms by inductively defining $\sem{-}^{\assign}: \term{\sig}\vars \rightarrow A$ as:
\[\sem{x}^{\assign} = \assign(x) \text{ and }\sem{\op(t_1,\dots,t_n)}^{\assign} = \sem{\op}\left( \sem{t_1}^{\assign},\dots,\sem{t_n}^{\assign} \right).\]
In cases where $\assign: \vars \rightarrow \term{\sig}A$ is an interpretation in a term algebra, we denote $\sem{-}^{\assign}$ with $\assign^*$ to emphasize that its action is straightforward. It can be seen as a completely syntactical rewriting procedure, as $\assign^*$ takes a term in $\term{\sig}\vars$ and replaces all occurrences of $x$ with the term $\assign(x)$. 
\begin{definition}[Equations and their models]
    An \emph{equation} over $\sig$ is a pair of $\sig$--terms over $\vars$, i.e., an element of $\term{\sig}\vars \times \term{\sig}\vars$ which we denote $s=t$. 
We say a $\sig$--algebra $\algA = (A,\sem{\sig})$ \emph{satisfies} an equation $s=t$, denoted $\algA \satisfies s=t$, if for any $\assign: \vars \rightarrow A$, $\sem{s}^{\assign} = \sem{t}^{\assign}$. We write $\algA \satp{\assign} s= t$ when the equality holds for a particular interpretation $\assign$. Given a set $E$ of equations over $\sig$, we denote by $\Alg(\sig,E)$ the full subcategory of $\Alg(\sig)$ of all algebras that satisfy all equations in $E$.
    \end{definition}
    
  \begin{definition}
  A congruence relation on $\algA =(A,\sem{\sig}_A)\in \Alg(\sig)$  is an equivalence relation $R\subseteq A^2$ such that for every $\op:n\in \sig$, if $(a_1,b_1)\in R$, \dots, $(a_n,b_n)\in R$ then it holds that $(\sem\op_A(a_1,\dots, a_n) ,  \sem\op_A(b_1,\dots, b_n))\in R$. If $R$ is a congruence then the interpretation of each $\op\in\sig$ is well-defined on the set $A/R$ of $R$--equivalence classes, by:
\[\sem\op_{A/R} ([a_1]_R,\dots, [a_n]_R)= [\sem\op_A(a_1,\dots, a_n))]_R\]
Then we have the algebra $\algA /R=(A/R, \sem{\sig}_{A/R})$.
  \end{definition}
  


\begin{definition}[Term monad, with equations] 
    Let $\sig$ be a signature, $E$ a set of equations over $\sig$, and $A$ a set. Denote with $\termeq{E_A}$ the smallest congruence on the term algebra $\term{\sig}A$ such that $(\term{\sig}A)/{\termeq{E_A}}   \in \Alg(\sig,E)$, i.e., 
    $(\term{\sig}A)/{\termeq{E_A}}$ satisfies all equations in $E$. We define a variant of the term monad denoted $\term{\sig,E}$ that sends a set $A$ to $\term{\sig}A/{\termeq{E_A}}$. Given a function $f: A \rightarrow B$, we define the function $\term{\sig,E}f:\term{\sig,E}A \rightarrow \term{\sig,E}B$ using the already defined $\term{\sig}f$: for any $t \in \term{\sig}A$, $\term{\sig,E}f([t]_{\termeq{E_A}}) = \left[ \term{\sig}f(t) \right]_{\termeq{E_B}}$. One can check that $\term{\sig,E}f$ is well-defined and makes $\term{\sig,E}$ into a functor. In fact, it is a monad with unit  $\eta^{\sig,E}_A = a \mapsto [a]_{\termeq{E_A}}$ and multiplication $$ \mu^{\sig,E}_A  = \left[ t([t_1]_{\termeq{E_A}},\dots, [t_n]_{\termeq{E_A}}) \right]_{\termeq{E_{\term{\sig,E}A}}} \mapsto [t(t_1,\dots, t_n)]_{\termeq{E_A}}.$$
  We call $(\term{\sig,E}, \eta^{\sig,E}, \mu^{\sig,E})$ the \emph{term monad} for $(\sig,E)$.
\end{definition}

\begin{proposition}\label{prop:algtermisalgsigeq}
    For any signature $\sig$ and any set $E$ of equations over $\sig$, $\Alg(\sig,E) \cong \EM(\term{\sig,E})$.
\end{proposition}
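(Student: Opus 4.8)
The plan is to reduce the claim to the unconstrained version, Proposition~\ref{prop:algtermisalgsig}, which already gives $\Alg(\sig) \cong \EM(\term{\sig})$. Since $\Alg(\sig,E)$ is defined as a full subcategory of $\Alg(\sig)$, and $\EM(\term{\sig,E})$ sits over $\EM(\term{\sig})$ via the quotient monad morphism $\collapse{}\colon \term{\sig} \Rightarrow \term{\sig,E}$ (the natural transformation sending $t \mapsto [t]_{\termeq{E_A}}$), the strategy is to exhibit an equivalence of categories that restricts the known isomorphism. Concretely, I would first establish a characterisation: a $\sig$--algebra $(A,\sem{\sig})$ lies in $\Alg(\sig,E)$ if and only if its corresponding $\term{\sig}$--algebra structure map $\alpha\colon \term{\sig}A \to A$ factors through the quotient $\collapse{A}\colon \term{\sig}A \to \term{\sig,E}A$, yielding a well-defined $\term{\sig,E}$--algebra map $\bar\alpha\colon \term{\sig,E}A \to A$.

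The main steps are as follows. First I would define the two functors. On objects, a $\term{\sig,E}$--algebra $(A,\bar\alpha)$ is sent to the $\sig$--algebra whose operation interpretations are $\sem{\op}_A(a_1,\dots,a_n) = \bar\alpha([\op(a_1,\dots,a_n)]_{\termeq{E_A}})$; conversely, an algebra $(A,\sem{\sig}) \in \Alg(\sig,E)$ is sent to the $\term{\sig,E}$--algebra whose structure map is the factorisation $\bar\alpha$ of the term-algebra evaluation $\alpha = \sem{-}^{\mathrm{id}}\colon \term{\sig}A \to A$ through $\collapse{A}$. On morphisms both functors act as the identity on underlying functions, so checking functoriality is immediate once well-definedness is secured. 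Second, I would verify that these assignments are mutually inverse and that they preserve and reflect the relevant structure, which reduces to bookkeeping about how $\sem{-}^{\assign}$ interacts with substitution and the monad laws of $\term{\sig,E}$.

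The key obstacle — and the heart of the argument — is showing that the $\term{\sig}$--algebra map $\alpha$ of an algebra $(A,\sem{\sig})$ \emph{factors through $\collapse{A}$ exactly when $(A,\sem{\sig}) \in \Alg(\sig,E)$}. One direction is routine: if $\alpha$ factors, then for any equation $s=t$ in $E$ and any interpretation $\assign$, the two sides are identified by $\collapse{}$ and hence by $\alpha$, so $\algA \satisfies s=t$. The converse is the delicate part. Here I would use the universal property defining $\termeq{E_A}$ as the \emph{smallest} congruence on $\term{\sig}A$ making the quotient satisfy $E$: I must show that whenever $(A,\sem{\sig})$ satisfies all equations in $E$, the kernel congruence $\ker(\alpha) = \{(s,t) : \sem{s}^{\mathrm{id}} = \sem{t}^{\mathrm{id}}\}$ contains $\termeq{E_A}$. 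Since $\ker(\alpha)$ is visibly a congruence on $\term{\sig}A$, and the quotient $\term{\sig}A/\ker(\alpha) \cong \im(\alpha)$ is a subalgebra of $A$, which satisfies $E$ because $A$ does (equational satisfaction is inherited by subalgebras) — I can conclude $\termeq{E_A} \subseteq \ker(\alpha)$ by minimality, giving the desired factorisation.

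Finally, I would confirm that this object-level bijection upgrades to an isomorphism of categories by checking that a function $f\colon A \to B$ is a $\sig$--homomorphism precisely when it is a $\term{\sig,E}$--algebra morphism, which follows by composing the factorisations with naturality of $\collapse{}$ and using that $\collapse{A}$ is epic. The one point requiring care throughout is the interplay between the syntactic substitution operator $\assign^*$ and the multiplication $\mu^{\sig,E}$, but these are exactly the monad identities already spelled out in the definition of $\term{\sig,E}$, so no genuinely new computation is needed beyond the minimality argument above.
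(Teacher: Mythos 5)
Your argument is correct. Note that the paper itself gives no proof of this statement: by its stated convention, results labelled ``Proposition'' are facts easily derivable from the literature, and neither the body nor the appendix contains an argument for it, so there is nothing to compare against. Your route --- realising $\EM(\term{\sig,E})$ inside $\EM(\term{\sig})\cong\Alg(\sig)$ via the componentwise-surjective monad morphism $\term{\sig}\Rightarrow\term{\sig,E}$, and identifying its image with $\Alg(\sig,E)$ by the kernel-congruence/minimality argument (where the minimality of $\termeq{E_A}$ applied to $\ker(\alpha)$, whose quotient is isomorphic to $\im(\alpha)=A$ and hence satisfies $E$, is indeed the only non-routine step) --- is the standard proof and is sound.
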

A corollary of the above proposition is that the free $(\sig,E)$--algebra over a set $A$ is $(\term{\sig}A/{\termeq{E_A}}, \sem{\sig})$, with the canonical interpretation of operations:
$$\sem{\op}([t_1]_{\termeq{E_A}},\dots, [t_n]_{\termeq{E_A}}) = [\op(t_1,\dots,t_n)]_{\termeq{E_A}}.$$

\subsection{Generalized Metric Spaces}\label{sec:gmets}

\begin{definition}[$\FRel$]
    A \emph{fuzzy relation} on a set $A$ is a map $d: A\times A \rightarrow [0,1]$. A morphism between two fuzzy relations $(A,d)$ and $(B,\Delta)$ is a map $f: A \rightarrow B$ that is \emph{nonexpansive} (also referred to as $1$--Lipschitz\todo{Added Lip here}) namely, $\forall a,a' \in A, \Delta(f(a),f(a'))\leq d(a,a')$. We denote by $\FRel$ the category of fuzzy relations and nonexpansive maps.
\end{definition}
Here is a non-exhaustive\footnote{A wider class of implicational constraints can be handled in our framework. We chose these five as running example since they are well known.} list of constraints on fuzzy relations that have been considered in the literature.
\begin{align}
    \forall a,b \in A,\quad & d(a,b) = d(b,a) &&\label{eq:symm}\\
    \forall a\in A,\quad  & d(a,a) = 0 &&\label{eq:refl}\\
    \forall a,b \in A,\quad  & d(a,b) = 0 \implies a=b&&\label{eq:idofind}\\
    \forall a,b,c \in A,\quad  & d(a,c)\leq d(a,b)+d(b,c) &&\label{eq:trineq}\\
    \forall a,b,c \in A,\quad  & d(a,c) \leq \max\{d(a,b),d(b,c)\} &&\label{eq:strongtrineq}
\end{align}
Each has a somewhat standard name, \eqref{eq:symm} is symmetry, \eqref{eq:refl} is indiscernibility of identicals or reflexivity, \eqref{eq:idofind} is identity of indiscernibles, \eqref{eq:trineq} is triangle inequality, and \eqref{eq:strongtrineq} is strong triangle inequality. Restricting $\FRel$ to relations that satisfy a subset of the axioms above, we get many categories of interest whose objects were studied at least once in the literature. 
\begin{equation*}
{\fontsize{8.4}{8.4}
\begin{tikzcd}
	&[-14pt]& \DMet &[-5pt]& \MMet \\
	&&&& \PMet \\
	\FRel && \PSMet &&&&[-10pt] \Met & \UMet & {} \\
	&&&& \SMet \\
	&& \PQMet && \QMet
	\arrow["{\text{\eqref{eq:symm}, \eqref{eq:trineq}}}"{description}, no head, from=3-1, to=1-3]
	\arrow["{\text{\eqref{eq:refl}, \eqref{eq:trineq}}}"{description}, no head, from=3-1, to=5-3]
	\arrow["{\text{\eqref{eq:symm}, \eqref{eq:refl}}}"{description}, no head, from=3-1, to=3-3]
	\arrow["{\text{\eqref{eq:trineq}}}"{description}, no head, from=3-3, to=2-5]
	\arrow["{\text{\eqref{eq:idofind}}}"{description}, no head, from=1-3, to=1-5]
	\arrow["{\text{\eqref{eq:refl}}}"{description}, no head, from=1-3, to=2-5]
	\arrow["{\text{\eqref{eq:idofind}}}"{description, pos=0.8}, no head, from=3-3, to=4-5]
	\arrow["{\text{\eqref{eq:idofind}}}"{description}, no head, from=5-3, to=5-5]
	\arrow["{\text{\eqref{eq:symm}}}"{description, pos=0.8}, no head, from=5-3, to=2-5]
	\arrow["{\text{\eqref{eq:refl}}}"{description}, no head, from=1-5, to=3-7]
	\arrow["{\text{\eqref{eq:idofind}}}"{description}, no head, from=2-5, to=3-7]
	\arrow["{\text{\eqref{eq:trineq}}}"{description}, no head, from=4-5, to=3-7]
	\arrow["{\text{\eqref{eq:symm}}}"{description}, no head, from=5-5, to=3-7]
	\arrow["{\text{\eqref{eq:strongtrineq}}}"{description}, no head, from=3-7, to=3-8]
\end{tikzcd}}
\end{equation*}
For example, \emph{metrics} ($\Met$) are fuzzy relations that satisfy axioms \eqref{eq:symm}--\eqref{eq:trineq}, \emph{pseudometrics} ($\PMet$) satisfy \eqref{eq:symm}, \eqref{eq:refl} and \eqref{eq:trineq}, and \emph{diffuse metrics} ($\DMet$) satisfy \eqref{eq:symm} and \eqref{eq:trineq}. Other examples include: quasimetrics ($\QMet$), pseudoquasimetrics ($\PQMet$), metametrics ($\MMet$), semimetrics ($\SMet$), pseudosemimetrics ($\PSMet$), ultrametrics ($\UMet$).  Different notions of morphisms between these objects have been considered (e.g.: continuous functions, contracting maps, etc.) but, for our purposes, we will work with full subcategories of $\FRel$ and hence keep nonexpansiveness as the only condition on morphisms. This choice implies that isomorphisms of fuzzy relations are bijections that preserve distances. \todo{Added comment on distance pres.} In the sequel, we write $\GMet$ for a category of \emph{generalized metric spaces}, which can stand for any full subcategory of $\FRel$ satisfying a fixed subset of axioms \eqref{eq:symm}--\eqref{eq:strongtrineq}.


All products and coproducts exist in $\GMet$ and are easy to define. Let $\{(A_i, d_i)\mid i \in I\}$ be a non-empty family of generalized metric spaces. The product is $(\prod_{i \in I} A_i, \sup_{i \in I} d_i)$, 
with $\sup_{i \in I} d_i: \left( \prod_{i \in I} A_i \right)\times \left( \prod_{i \in I} A_i \right) \rightarrow [0,1]$ defined for $\vec{a},\vec{b} \in \prod_{i \in I} A_i$ as:
\[(\sup_{i \in I} d_i)(\vec{a},\vec{b}) = \sup_{i \in I}d_i(\vec{a}_i,\vec{b}_i).\]
We \todo{Added} denote the $\sup$--metric $\sup_{i \in I} d_i$ just as $d_\times$ when the index set $I$ is clear.
The coproduct 
is given by $\amalg_{i \in I} d_i: \left( \coprod_{i \in I} A_i \right)\times \left( \coprod_{i \in I} A_i \right) \rightarrow [0,1]$, defined for $a \in A_j$ and $b \in A_k$ as:
\[(\amalg_{i \in I} d_i)(a,b) = \begin{cases}
    d_j(a,b) &\text{if } j=k\\1&\text{otherwise}
\end{cases}\]
The empty product, i.e., the terminal object, is given by $d_{\mathbf{1}} : \{\ast\} \times \{\ast\} \rightarrow [0,1]$, defined by \[d_{\mathbf{1}}(\ast,\ast) = \begin{cases}
    0&\text{if constraint \eqref{eq:refl} holds in $\GMet$}\\1&\text{otherwise}
\end{cases}.\]
The empty coproduct, i.e., the initial object, is the only possible fuzzy relation on the empty set (which vacuously satisfies all the axioms that must hold in $\GMet$).

\begin{definition}[Isometric embedding]
    A nonexpansive map $f:(A,d) \rightarrow (B,\Delta)$ is an \emph{isometry} if for any $a,a' \in A$, $\Delta(f(a),f(a')) = d(a,a')$. An \emph{isometric embedding} is an isometry that is injective.\footnote{In the category $\Met$ of metric spaces, any isometry is injective, but this is not true for all $\GMet$.} For any generalized metric space $(A,d)$ and subset $A' \subseteq A$, the inclusion $i: (A',d|_{A'}) \rightarrow (A,d)$ is an isometric embedding.
\end{definition}

\section{Quantitative Reasoning with Liftings}\label{section_definitions}

We introduce in this section our novel framework. In Subsection \ref{subsection_qa} we present the notion of \emph{liftings of signatures}, the associated concept of quantitative $\qsig$--algebras and define the classes $\Alg(\qsig,S)$ definable by  sets of implicational axioms. In Subsection \ref{section:apparatus} we define the syntactical deductive apparatus 
used to reason about equality and distance in quantitative $\qsig$--algebras, and prove the soundness theorem, stating that the syntactic apparatus guarantees correct derivations.


\subsection{Generalized Quantitative Algebras}\label{subsection_qa}
In what follows, a given category $\GMet$ is fixed. 
\begin{definition}
Given functors $F:\Set\rightarrow\Set$ and $L: \GMet \rightarrow \GMet$ we say that $L$ is a \emph{lifting} of $F$ (from $\Set$ to $\GMet$) if the following diagram commute, where $U$ is the expected forgetful functor:
\begin{equation*}
\begin{tikzcd}[row sep=1.9em, column sep=scriptsize]
	\GMet & \GMet \\
	\Set & \Set
	\arrow["U"', from=1-1, to=2-1]
	\arrow["F"', from=2-1, to=2-2]
	\arrow["L", from=1-1, to=1-2]
	\arrow["U", from=1-2, to=2-2]
\end{tikzcd}
\end{equation*} 
\end{definition}
Hence, for any lifting $L$, on objects we have $L(A,d) = (F(A), d')$ for some $d'$ which we denote with $d' = L(d)$. We will interchangeably use both notations $L(A,d)$ and $(F(A), L(d))$.
\begin{definition}
A lifting $L$  \emph{preserves isometric embeddings} if, whenever $f:(A,d)\rightarrow (B,\Delta)$ is an isometric embedding then $L(f):L(A,d) \rightarrow L(B,\Delta)$ is also an isometric embedding. 
\end{definition}
 Informally, this property holds when $L$ is compatible with the operation of taking subspaces. In the rest of this paper, we will be only interested in liftings that preserve  isometric embeddings and often just refer to them as liftings. 

\begin{example}\label{example_1}
Take as $\GMet$ the category $\Met$ of metric spaces. Consider the functor $F=\id$. Then, for any lifting $L$, $(A,d) \stackrel{L}{\mapsto} (A, L(d))$, so $L(d)$ is a distance on $A$. As examples of liftings of $F$ preserving isometric embeddings, we list: 
\begin{enumerate}
\item the identity:  $L(d)(a,a^\prime) = d(a,a^\prime)$, 
\item the scaling: $L(d)(a,a^\prime) = r\cdot d(a,a^\prime)$ for $r\in (0,1)$, 
\item the discrete distance: $L(d)(a,a^\prime) = 1$ if  $a\neq a^\prime$.
\end{enumerate}
Similarly, consider $F=(-)^2$, i.e., $F(A) = A\times A$ and $F(f) = f\times f$. In this case, $L(d)$ is a distance on $A\times A$.
Examples of liftings preserving isometric embeddings include:
\begin{itemize}
\item the standard product distance: $L(d)( ( a_1, a_1^\prime),  ( a_2 ,a_2^\prime) ) = \max\{d(a_1,a_2), d(a_1^\prime, a_2^\prime)\}$,
\item the discrete distance: $L(d)( ( a_1, a_1^\prime),  ( a_2 ,a_2^\prime) ) = 1$ if $ ( a_1, a_1^\prime) \neq ( a_2 ,a_2^\prime)$.
\end{itemize} 
\end{example}

We note, as in some of the examples above, that for any $\GMet$, if $F$ is the $n$--ary product endofunctor $(-)^n$ on $\Set$, then the $n$--ary product $d_\times$ in $\GMet$ is a lifting of $F$ preserving isometric embeddings. We refer to it as the $\sup$--product lifting and denote it by $L_\times$. Accordingly, for $n=0$, the lifting $L_\times$ maps any object to the terminal object in $\GMet$ and, for $n=1$,  the lifting $L_\times$ is the identity functor. 
 \todo{MM: Edited.}


\begin{definition}[Nonexpansiveness up to lifting]
Let $F:\Set\rightarrow \Set$ and $L$ a lifting of $F$. Let $(A,d)$ and $(B,\Delta$) in $\GMet$. We say that a function $f:F(A)\rightarrow B$ is \emph{nonexpansive up to $L$} (or \emph{$L$--nonexpansive}) if 
$f: (F(A),L(d)) \rightarrow (B,\Delta)$ is nonexpansive (i.e., it is a morphism in $\GMet$).
\end{definition}


\begin{example}\label{example_2}
As in the previous example, fix $\GMet=\Met$ and consider $F=\id$. Consider the metric space $([0,1],d)$, the unit interval with its standard Euclidean metric (i.e. $d(x,y) = |x-y|$), and the map $f:F([0,1])\rightarrow [0,1]$ defined as $f(x) = x^2$. If we take as lifting of $F$ the identity lifting $L$ from Example \ref{example_1} (i.e., the lifting $L_\times$) the function $f$ is not $L$--nonexpansive because, e.g., $\frac{4}{10} = d(\frac{6}{10}, 1) < d((\frac{6}{10})^2, 1^2) = \frac{64}{100}$. By contrast, if we take as lifting $L$ the discrete lifting then $f$ is trivially $L$--nonexpansive. In fact, any function $f:[0,1]\rightarrow [0,1]$ is nonexpansive up to the discrete lifting.
\end{example}

We are now ready to introduce the concept of \emph{lifted signature}, which extends the usual notion of signature $\Sigma$ from universal algebra.

\begin{definition}[Lifted signature]
    Given a signature $\sig=\{ \op_i : n_i\}_{i\in I}$, a \emph{lifting}  of $\sig$ to $\GMet$ is a choice, for each $i\in I$, of lifting $L_{\op_i}$ of the $n_i$--ary product $(-)^{n_i}: \Set \rightarrow \Set$.
%
An operation symbol $\op$ with arity $n$ and associated lifting $L_\op$ is now denoted $\op: n:L_\op$.
    We denote lifted signatures $\qsig = \{ \op_i: n_i :L_{\op_i}\}_{i \in I}$ to clearly distinguish them from ordinary signatures.
\end{definition}

Note that, given any signature $\Sigma$, it is possible to obtain a lifted signature $\qsig$ by choosing, for each $\op:n\in\Sigma$, the $\sup$--product lifting $L_\times$ of $(-)^n$. 

As in the classical case, any lifted signature $\qsig$ gives rise to an endofunctor on $\GMet$ (denoted $\qsig$ too) with the following action:
\[\qsig(A,d) := \hspace{-10pt}\coprod_{\op:n:L_\op\in \qsig} L_\op(A,d) \quad \quad \qsig(f) := \hspace{-10pt}\coprod_{\op:n:L_\op\in \qsig}L_\op(f).\]
\begin{definition}[Quantitative $\qsig$--algebra]
    A \emph{quantitative $\qsig$--algebra} is an algebra for the functor $\qsig$. Equivalently, it is a generalised metric space $(A,d) \in \GMet$ equipped with a set $\sem{\qsig}_A$ 
    of interpretations of operation symbols, as follows: every $\op:n:L_\op \in \qsig$ is interpreted as a map $\sem{\op}_A : A^{n} \rightarrow A$ which is $L_\op$--nonexpansive, i.e., such that  
    $$ \sem{\op}_A: (A^n,L_\op(d)) \rightarrow (A,d) \textnormal{ is nonexpansive.}$$
We call $(A,d)$ the \emph{carrier space}. A \emph{homomorphism} between two quantitative $\qsig$--algebras with carrier spaces $(A,d)$ and $(B,\Delta)$ is a nonexpansive map $f:A \rightarrow B$ preserving all operations, i.e., $\forall \op:n : L_\op \in\qsig$ and $\forall a_1,\dots, a_n \in A$,  \[f(\sem{\op}_A(a_1,\dots,a_n)) = \sem{\op}_B(f(a_1),\dots,f(a_n)).\]
    The category of quantitative $\qsig$--algebras is denoted $\Alg(\qsig)$.
\end{definition}

We remark that, in the particular case of $\GMet=\Met$ and $\qsig$ being the $\sup$--product lifting of some signature $\Sigma$, the notion of \emph{quantitative $\qsig$--algebra} coincides with that of quantitative algebra for the signature $\Sigma$ of the framework of \cite{DBLP:conf/lics/BacciMPP18}. 

Any quantitative $\qsig$--algebra yields a $\sig$--algebra by applying the forgetful functor to $\forget:\GMet\rightarrow \Set$ because $\forget\sem{\op}_A$ has type $A^n \rightarrow A$ and morphisms in $\Alg(\qsig)$ are already $\sig$--algebra homomorphisms. We obtain the following commutative square of forgetful functors.

\begin{equation}\label{diag:forgetsquare}
    \begin{tikzcd}[row sep=scriptsize, column sep=scriptsize]
        {\Alg(\qsig)} & \GMet \\
        {\Alg(\sig)} & {\mathbf{Set}}
        \arrow[from=1-1, to=2-1]
        \arrow[from=2-1, to=2-2]
        \arrow[from=1-1, to=1-2]
        \arrow[from=1-2, to=2-2]
    \end{tikzcd}
\end{equation}
\begin{definition}[Equations]
    Given a quantitative $\qsig$--algebra $\algA := (A,d,\sem{\qsig})$ and an equation $e \in \term{\sig} X \times \term{\sig} X$, we say that $\algA$ \emph{satisfies} $e$, denoted $\algA \satisfies e$, if its underlying $\sig$--algebra satisfies $e$.
\end{definition}
\begin{definition}[Quantitative equation]
    A \emph{quantitative equation} in the signature $\qsig$ is an element $e \in \term{\sig} X \times \term{\sig} X \times [0,1]$, i.e. a triple comprising two $\sig$--terms $s$ and $t$ and a real number $\varepsilon \in [0,1]$. We denote it $s=_\varepsilon t$. We say that $\algA := (A,d,\sem{\qsig})$ \emph{satisfies} $s=_\varepsilon t$, denoted $\algA \satisfies s=_\varepsilon t$, if for any variable assignment $\assign: X \rightarrow A$, $d(\sem{s}^\assign,\sem{t}^\assign)\leq \varepsilon$. We write $A \satp{\assign} s=_{\varepsilon} t$ when the inequality holds for a particual assignment $\assign$.
\end{definition}
Let $\eqns{\sig}X = \term{\sig}X\times \term{\sig}X \cup \term{\sig}X\times \term{\sig}X\times [0,1]$ denote the set of equations and quantitative equations over the signature $\sig$ and variables $X$. We use the letter $\eqn$ to range over $\eqns{\sig}X$.

Following \cite{DBLP:conf/lics/BacciMPP18}, we will consider classes of $\qsig$--algebras axiomatised by  (quantitative) equational implications, rather than just (quantitative) equations. While this level of generality is not required in many applications, as several useful examples (see Section \ref{examples}) 
are purely (quantitative) equational, it allows for a direct comparison of our results and those of \cite{DBLP:conf/lics/BacciMPP18}.
%
\begin{definition}[Horn clauses]
\label{def:hornclauses}
    In the sequel, we denote $\clauses{\sig}(X) = \mP(\eqns{\sig}X) \times \eqns{\sig}X$ the set of (possibly infinitary) \emph{Horn clauses} over the signature $\sig$ and variables $X$. A Horn clause $H \in \clauses{\sig}(X)$ is denoted $\bigwedge_{i \in I} \eqn_i \Rightarrow \eqn$ as its intended semantics is that $\eqn$ holds whenever each $\eqn_i$ holds. More formally, we say that an algebra $\algA = (A,d,\sem{\qsig}_A) \in \Alg(\qsig)$ \emph{satisfies} a clause $H= \bigwedge_{i \in I} \eqn_i \Rightarrow \eqn$, denoted $\algA \satisfies H$, if for any variable assignment $\assign: X \rightarrow A$, $\algA \satp{\assign} \eqn$ whenever $\algA \satp{\assign} \eqn_i$ for every $i$. We write $\algA \satp{\assign} H$ when the implication is true for a particular assignment $\assign$. We call $H = \bigwedge_{i \in I} \eqn_i \Rightarrow \eqn$ \emph{basic} if each premise $\eqn_i$ is a (quantitative) equation between variables: $\eqn_i$ is either of the form $x = y$ or $x =_\varepsilon y$, for $x,y\in X$ and $\varepsilon \in [0,1]$.

    Given a set $S \subseteq \clauses{\sig}(X)$, we denote $\Alg(\qsig,S)$ the full subcategory of $\Alg(\qsig)$ containing all quantitative $\qsig$--algebras that satisfy all clauses in $S$.
\end{definition}
\subsection{Syntactic Apparatus for Quantitative Reasoning}\label{section:apparatus}
%
%

Following \cite{DBLP:conf/lics/BacciMPP18}, we now introduce a logical apparatus for reasoning about quantitative $\qsig$--algebras.
We use the following notation to improve readability: for a set $\vdash$ of Horn clauses (${\vdash} \subseteq \clauses{\sig}(X)$) we write $\{\eqn_i\}_{i\in I} \vdash \eqn$ to denote that the Horn clause $\bigwedge_{i \in I} \eqn_i \Rightarrow \eqn$ belongs to the set $\vdash$.

\begin{definition}\label{defn:deduce}
    A \emph{quantitative theory} over $\qsig$ is a set of Horn clauses ${\vdash} \subseteq \clauses{\sig}(X)$ such that conditions \hyperlink{rules:struct}{(I)}--\hyperlink{rules:LNE}{(VI)} hold: \todo{VV:changed from "deducibility relation" to "quantitative theory", check it is consistent with the rest}

  \noindent       \hypertarget{rules:struct}{(I)} $\vdash$ is closed under the following inference rules for any $\Gamma, \Gamma' \subseteq \eqns{\sig}X$, $\eqn, \eqnb \in \eqns{\sig}X$ and substitution $\sigma: X \rightarrow \term{\sig}X$:
        \begin{gather*}
               \begin{bprooftree}
            \AxiomC{$\Gamma \vdash \eqn$}
            \RightLabel{\textsf{Sub}}
            \UnaryInfC{$\subst^*(\Gamma) \vdash \subst^*(\eqn)$}
        \end{bprooftree}\\
        \begin{bprooftree}
            \AxiomC{$\forall \eqn \in \Gamma', \Gamma \vdash \eqn$}
            \AxiomC{$\Gamma' \vdash \eqnb$}
            \RightLabel{\textsf{Cut}}
            \BinaryInfC{$\Gamma \vdash \eqnb$}
        \end{bprooftree}\quad 
        \begin{bprooftree}
            \AxiomC{$\eqn \in \Gamma$}
            \RightLabel{\textsf{Hyp}}
            \UnaryInfC{$\Gamma \vdash \eqn$}
        \end{bprooftree}
    \end{gather*}
\noindent \hypertarget{rules:eqlog}{(II)} $\vdash$ contains, for any $\op:n:L_\op\in\qsig$ and $x,y,z \in X$, the clauses:    
 \begin{align*}
        &(\textsf{Refl})& \emptyset &\vdash x = x\\
        &(\textsf{Sym})& x=y&\vdash y=x\\
        &(\textsf{Trans}) & x=y,y=z &\vdash x=z\\
        &(\textsf{App})& \{x_i= y_i \mid i \in 1,\dots, n\} &\vdash \op(\vec{x}) = \op(\vec{y})
            \end{align*}
 \noindent         \hypertarget{rules:fuzzy}{(III)}  $\vdash$ contains, for any  $x,y\in X,\varepsilon'\geq \varepsilon, \varepsilon_i \in [0,1]$, the clauses:
                \begin{align*}
        &(\textsf{1-bdd})& \emptyset &\vdash x=_1 y\\
        &(\textsf{Max})& x=_\varepsilon y &\vdash x=_{\varepsilon'} y\\
        &(\textsf{Arch})& \{x=_{\varepsilon_i} y \mid i \in I\} &\vdash x=_{\inf\{\varepsilon_i\mid i \in I\}} y
    \end{align*}
  \noindent           \hypertarget{rules:comp}{(IV)}  $\vdash$ contains, for any  $x,y,z\in X$, $\varepsilon \in [0,1]$, the clauses:
        \begin{align*}
        &(\textsf{Comp}_\ell)& x=y,x=_\varepsilon z &\vdash y=_\varepsilon z\\
        &(\textsf{Comp}_r)& x=y,z=_\varepsilon x &\vdash z=_\varepsilon y
    \end{align*}
 \noindent        \hypertarget{rules:gmet}{(V)} depending on the notion of $\GMet$ used, $\vdash$ contains an appropriate subset of the following clauses for any $x,y,z \in X$, $\varepsilon, \varepsilon^\prime \in [0,1]$:
    \begin{align*}
        x=_\varepsilon y &\vdash y=_\varepsilon x &&\eqref{eq:symm}\\
        \emptyset&\vdash x=_0 x&&\eqref{eq:refl}\\
        x=_0 y&\vdash x= y&&\eqref{eq:idofind}\\
        x=_{\varepsilon} y, y=_{\varepsilon'} z&\vdash x=_{\varepsilon+\varepsilon'} z&&\eqref{eq:trineq}\\
        x=_{\varepsilon} y, y=_{\varepsilon'} z&\vdash x=_{\max\{\varepsilon,\varepsilon'\}} z&&\eqref{eq:strongtrineq}
    \end{align*}

   \noindent      \hypertarget{rules:LNE}{(VI)} $\vdash$ is closed under the following inference rule, for any $\op:n:L_\op\in\qsig$ and for any set $\vec{x}\cup\vec{y}$ = $\{x_1,\dots,x_n,y_1,\dots,y_n\}$ of up to $2n$ variables (not necessarily distinct):
    \begin{gather*}
        \begin{bprooftree}
            \AxiomC{$(\vec{x}\cup \vec{y},\Delta) \in \GMet$}
            \AxiomC{$\delta = L_\op(\Delta)(\vec{x},\vec{y})$}
            \RightLabel{$L$--\textsf{NE}}
            \BinaryInfC{$\left\{ w =_{\Delta(w,z)} z \mid w,z \in \vec{x}\cup\vec{y} \right\} \vdash \op(\vec{x}) =_\delta \op(\vec{y})$}
        \end{bprooftree}
    \end{gather*}

 \end{definition}
 
Condition \hyperlink{rules:struct}{(I)} is standard and reflects the semantics of $\vdash$ as a theory of universally quantified implications. Condition \hyperlink{rules:eqlog}{(II)} includes the standard axioms of equational logic, thus \hyperlink{rules:struct}{(I)}+\hyperlink{rules:eqlog}{(II)} allows to perform equational reasoning regarding equations ($s=t$). Condition \hyperlink{rules:fuzzy}{(III)} poses the constraints on quantitative equations ($s=_\varepsilon t$) ensuring the intended semantics: $d(s,t)\leq \varepsilon$, for any fuzzy relation $d\in\FRel$. Condition \hyperlink{rules:comp}{(IV)} adds two axioms governing the logical interplay between equality ($=$) and the quantitative relations $=_\varepsilon$. It expresses the fact that equality is a congruence relation (both on the left and the right argument) for the relation $=_\varepsilon$, for all $\varepsilon \in [0,1]$. Condition \hyperlink{rules:gmet}{(V)} adds to the deductive system the implicational axioms defining each category of generalised metric spaces $\GMet$. Finally, Condition \hyperlink{rules:LNE}{(VI)} expresses the property that, for any $\op:n:L_\op\in\qsig$, the operation $\op$ is $L_\op$--nonexpansive. The Horn clause introduced has up to $(2n)^2$ premises: quantitative equations of the form $w=_{\Delta(w,z)} z$, where $\Delta(w,z)$ is a number in $[0,1]$, for each choice of $w,z \in \vec{x}\cup\vec{y}$. We can see these numbers as defining a fuzzy relation $\Delta:  \vec{x}\cup\vec{y}\rightarrow [0,1]$. The proviso requires that  $( \vec{x}\cup\vec{y}, \Delta)$ is a $\GMet$ space. This is therefore a constraint on the ${(2n)}^2$ values $\Delta(w,z)$. If the proviso is satisfied, since $L$ is a $\GMet$ lifting, $L_\op( \vec{x}\cup\vec{y}, \Delta)$ is a $\GMet$ space too and the value in the quantitative equation in the conclusion (i.e., $\delta= L_\op(\Delta)(\vec{x},\vec{y})$) is defined.

\begin{example}
\label{example_LNE}
In order to improve readability when displaying instances of the $L$--\textsf{NE} rule, we will often omit some of the $(2n)^2$ premises $w=_{\Delta(w,z)} z$ when $\Delta(w,z)$ is implicitly understood from the context. For instance, consider the case $\GMet=\Met$ and a binary operation $\op:2:L_\times$ with $L_\times$ the $\sup$--product lifting. An instance of the $L$--NE rule is:
$$
            x_1 =_{\varepsilon_1} y_1, x_2 =_{\varepsilon_2} y_2 \vdash \op(x_1,x_2) =_{\max\{\varepsilon_1,\varepsilon_2\}}\op(y_1,y_2)
$$
thus implicitly assuming all other premises to be of the form $w =_1 z$ if $w\neq z$, and $w=_0z$ otherwise, for $w,z\in \vec{x}\cup\vec{y}$. The fuzzy relation on $\vec{x}\cup\vec{y}$ described by these premises is therefore:
\begin{equation*}
    \begin{tikzcd}
        {x_1} \arrow["0" description, no head, loop, distance=2em, in=215, out=145]& {y_1} \arrow["0" description, no head, loop, distance=2em, in=35, out=325] \\
        {x_2}\arrow["0" description, no head, loop, distance=2em, in=215, out=145] & {y_2}\arrow["0" description, no head, loop, distance=2em, in=35, out=325] 
        \arrow["1"', no head, from=1-1, to=2-1]
        \arrow["{\varepsilon_1}", no head, from=1-1, to=1-2]
        \arrow["1", no head, from=1-2, to=2-2]
        \arrow["{\varepsilon_2}"', no head, from=2-1, to=2-2]
        \arrow["1"{description, pos=0.2}, no head, from=1-1, to=2-2]
        \arrow["1"{description, pos=0.2}, no head, from=2-1, to=1-2]
    \end{tikzcd}
\end{equation*}
which indeed satisfies the axioms of $\Met$. Since we are considering the $\sup$--product lifting,  $\delta=L_\times(\Delta)(( x_1,x_2), ( y_1,y_2)) = \max\{\varepsilon_1,\varepsilon_2\}$. Hence all provisos of the $L$--\textsf{NE} rule are satisfied, meaning that this is a valid instance of the $L$--\textsf{NE} rule.

\end{example}
 
\begin{definition}
    Given a set of clauses $S \subseteq \clauses{\sig}(X)$, we let $\relgen{S}$ denote the smallest quantitative theory containing $S$, and refer to it as the $\GMet$ \emph{quantitative theory axiomatised by $S$}. 
\end{definition}
%
    
\todo{RS:removed paragraph about proof trees.}

Our first main result is the soundness theorem, stating that if a Horn clause $H$ is derivable in the deductive apparatus from an axiom set $S$ of Horn clauses (i.e., $H$ is in the quantitative theory axiomatised by $S$), then indeed $H$ holds true in any $\qsig$ algebra satisfying the axioms $S$. 

\begin{theorem}[Soundness]\label{thm:soundness}
    Let $\algA = (A,d,\sem{\qsig}) \in \Alg(\qsig,S)$ and $H \in {\relgen{S}}$. Then $\algA \satisfies H$.
\end{theorem}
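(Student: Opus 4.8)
The plan is to exploit the minimality of $\relgen{S}$: since $\relgen{S}$ is by definition the smallest quantitative theory containing $S$, it suffices to exhibit \emph{some} quantitative theory that contains $S$ and all of whose members are satisfied by $\algA$. The natural candidate is the \emph{semantic theory} of $\algA$,
\[
\mathrm{Th}(\algA) := \{ H \in \clauses{\sig}(X) \mid \algA \satisfies H \}.
\]
Because $\algA \in \Alg(\qsig, S)$ we have $S \subseteq \mathrm{Th}(\algA)$ for free. Thus the whole content of the theorem reduces to the single claim that $\mathrm{Th}(\algA)$ is itself a quantitative theory, i.e. that it satisfies each of conditions \hyperlink{rules:struct}{(I)}--\hyperlink{rules:LNE}{(VI)}; once this is established, minimality gives $\relgen{S} \subseteq \mathrm{Th}(\algA)$, so any $H \in \relgen{S}$ is satisfied by $\algA$.

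The conditions that merely assert membership of specific clauses---\hyperlink{rules:eqlog}{(II)}, \hyperlink{rules:fuzzy}{(III)}, \hyperlink{rules:comp}{(IV)} and \hyperlink{rules:gmet}{(V)}---I would dispatch by direct, pointwise verification against a fixed assignment $\assign: X \to A$. The equational axioms of \hyperlink{rules:eqlog}{(II)} hold because $=$ is interpreted as genuine equality of the elements $\sem{-}^\assign$ and each $\sem{\op}_A$ is an honest function (so \textsf{App} is just well-definedness). Condition \hyperlink{rules:fuzzy}{(III)} follows from elementary facts about $d$ taking values in $[0,1]$: \textsf{1-bdd} from $d \leq 1$, \textsf{Max} from monotonicity of the bound, and \textsf{Arch} from $\inf_i \varepsilon_i$ being the greatest lower bound. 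Condition \hyperlink{rules:comp}{(IV)} is immediate, since $x = y$ forces $\sem{x}^\assign = \sem{y}^\assign$, making the two terms interchangeable inside $d$. Finally, \hyperlink{rules:gmet}{(V)} holds precisely because $(A,d) \in \GMet$ satisfies exactly the corresponding subset of axioms \eqref{eq:symm}--\eqref{eq:strongtrineq}.

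The two closure conditions, \hyperlink{rules:struct}{(I)} and \hyperlink{rules:LNE}{(VI)}, carry the real work. For \hyperlink{rules:struct}{(I)}, \textsf{Hyp} is a tautology and \textsf{Cut} is the transitivity of semantic entailment under a fixed $\assign$; both are routine. The rule \textsf{Sub} requires a \emph{substitution lemma} stating that $\sem{\subst^*(t)}^\assign = \sem{t}^{\assign'}$, where $\assign' := \sem{-}^\assign \circ \subst$ is the assignment $x \mapsto \sem{\subst(x)}^\assign$; this is proved by a straightforward induction on $t$. Granting it, satisfaction of $\subst^*(\Gamma)$ under $\assign$ is equivalent to satisfaction of $\Gamma$ under $\assign'$, so instantiating the hypothesis $\algA \satisfies \bigwedge \Gamma \Rightarrow \eqn$ at $\assign'$ yields $\algA \satisfies \bigwedge \subst^*(\Gamma) \Rightarrow \subst^*(\eqn)$.

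The main obstacle, and the genuinely novel step, is the soundness of the $L$--\textsf{NE} rule in \hyperlink{rules:LNE}{(VI)}. Fix an instance with data $(\vec{x} \cup \vec{y}, \Delta) \in \GMet$ and $\delta = L_\op(\Delta)(\vec{x}, \vec{y})$, and fix $\assign$ satisfying all premises $w =_{\Delta(w,z)} z$, i.e. $d(\assign(w), \assign(z)) \leq \Delta(w,z)$ for all $w,z \in \vec{x} \cup \vec{y}$. These inequalities say exactly that the restricted assignment $\assign|_{\vec{x} \cup \vec{y}} : (\vec{x} \cup \vec{y}, \Delta) \to (A,d)$ is nonexpansive, hence a morphism in $\GMet$. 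Applying the functor $L_\op$ yields a nonexpansive map $L_\op(\assign|_{\vec{x} \cup \vec{y}}) : ((\vec{x} \cup \vec{y})^n, L_\op(\Delta)) \to (A^n, L_\op(d))$, whose underlying set map is the $n$-fold product of $\assign|_{\vec{x} \cup \vec{y}}$ (since $L_\op$ lifts $(-)^n$); it therefore carries the tuple $\vec{x}$ to $(\assign(x_1), \dots, \assign(x_n))$ and $\vec{y}$ to $(\assign(y_1), \dots, \assign(y_n))$, and nonexpansiveness gives $L_\op(d)\big((\assign(x_i))_i, (\assign(y_i))_i\big) \leq L_\op(\Delta)(\vec{x}, \vec{y}) = \delta$. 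Combining this with the defining property of a quantitative $\qsig$--algebra---that $\sem{\op}_A : (A^n, L_\op(d)) \to (A,d)$ is nonexpansive---we obtain
\[
d\big(\sem{\op(\vec{x})}^\assign, \sem{\op(\vec{y})}^\assign\big) \leq L_\op(d)\big((\assign(x_i))_i, (\assign(y_i))_i\big) \leq \delta,
\]
which is precisely $\algA \satp{\assign} \op(\vec{x}) =_\delta \op(\vec{y})$. The two crucial ingredients here---\emph{functoriality} of the lifting (turning a nonexpansive reindexing of variables into a nonexpansive map on products) and $L_\op$--nonexpansiveness of the interpreted operation---are exactly what replaces the classical argument that operations are nonexpansive for the product metric, and getting this interaction right is where I would expect the subtlety to lie.
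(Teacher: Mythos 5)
Your proposal is correct and follows essentially the same route as the paper: verify that the set of clauses satisfied by $\algA$ is closed under conditions \hyperlink{rules:struct}{(I)}--\hyperlink{rules:LNE}{(VI)} and contains $S$, with the $L$--\textsf{NE} case handled exactly as in the paper (the restricted assignment is a nonexpansive map into $(A,d)$, functoriality of $L_\op$ lifts it to the products, and $L_\op$--nonexpansiveness of $\sem{\op}$ closes the chain of inequalities). Your explicit appeal to minimality of $\relgen{S}$ via the semantic theory $\mathrm{Th}(\algA)$ merely makes precise the wrapper the paper leaves implicit, and your remark that \textsf{Sub} needs a substitution lemma is a point the paper glosses over.
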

\begin{proof}
We show that each rule in Definition \ref{defn:deduce} is valid in $\algA$. 

\hyperlink{rules:struct}{(I)} The inference rules \textsf{Sub}, \textsf{Cut} and \textsf{Hyp} are valid by purely logical arguments, as the semantics of clauses $\{\phi_i \}_{i\in I}\relgen{S} \phi$ are universally quantified implications: $\forall{\vec{x}}.\big(\bigwedge_{i \in I} \eqn_i \Rightarrow \eqn\big)$.

\hyperlink{rules:eqlog}{(II)} The clauses \textsf{Refl}, \textsf{Sym}, \textsf{Trans} and \textsf{App} are valid because equality ($=$) is an equivalence relation and  is trivially compatible with all operations $\op\in\qsig$ (i.e., it is a congruence).  

\hyperlink{rules:fuzzy}{(III)} The clauses \textsf{1-bdd}, \textsf{Max} and \textsf{Arch} are valid because the distance $d$ has type $d: A \times A \rightarrow [0,1]$ and the interpretation of quantitative equations $x=_\varepsilon y$ is $d(\assign(x),\assign(y))\leq\varepsilon$ for any variable assignment $\assign: X \rightarrow A$. 

%

\hyperlink{rules:comp}{(IV)} The rules $\textsf{Comp}_\ell$ and $\textsf{Comp}_r$ are valid because equality ($=$) is trivially a congruence for all relations $=_\varepsilon$.

\hyperlink{rules:gmet}{(V)} The clauses corresponding to axioms in $\GMet$ are valid because $\algA = (A,d,\sem{\qsig})$ is in $\GMet$, and
 the interpretation of the Horn clauses (universally quantified implications) coincides with the axioms  of $\GMet$ as stated in Section \ref{sec:gmets}.

 \hyperlink{rules:LNE}{(VI)} The inference $L$--\textsf{NE} has a proviso ($(\vec{x}\cup\vec{y},\Delta) \in \GMet$) stating that the finite set $\vec{x}\cup\vec{y}$ of variables endowed with the distances $\Delta(w,z)$, for $w,z\in\vec{x}\cup\vec{y}$,  is a $\GMet$ space. Assume this as hypothesis. Since $\op:n:L_\op\in\qsig$, we know that $L_\op$ is a lifting on $\GMet$. Therefore the set
 $(\vec{x}\cup\vec{y})^n$ equipped with the distance $L_\op(\Delta)$ is an element of $\GMet$. Hence, the numerical value
$$\delta =  L_\op(\Delta)\left(( x_1\dots, x_n),( y_1,\dots, y_n) \right)$$
is defined. We need to prove that the Horn clause
\begin{equation}\label{eq_to_prove_soundness:6}
\left\{ w =_{\Delta(w,z)} z \mid w,z \in \vec{x}\cup\vec{y} \right\} \vdash \op(\vec{x}) =_{\delta} \op(\vec{y})
\end{equation}
holds in $\algA$. Let $\assign: X \rightarrow A$ be an assignment and assume that, for all $w,z \in \vec{x}\cup\vec{y}$, 
$$d(\sem{w}^{\assign},\sem{z}^{\assign})\leq \Delta(w,z)$$
holds. As consequence, the map \[f: (\vec{x}\cup\vec{y},\Delta) \rightarrow (A,d) = w \mapsto \sem{w}^{\assign}\] 
is nonexpansive. Thus, the lifting 
$$L_\op(f): L_{\op}(\vec{x}\cup\vec{y},\Delta) \rightarrow L_{\op}(A,d)$$
i.e., 
\begin{equation} \label{non_ex_property:C}
L_\op(f): \big((\vec{x}\cup\vec{y})^n,L_{\op}(\Delta)\big) \rightarrow \big(A^n,L_\op(d)\big)
 \end{equation}
 is also nonexpansive, and we have the following derivation which implies (\ref{eq_to_prove_soundness:6}), i.e., the validity of the conclusion:
        \begin{align*}
            &d(\sem{\op(\vec{x})}^{\assign},\sem{\op(\vec{y})}^{\assign}) & \\
            &= d\left(\sem{\op}(\sem{{x}_1}^{\assign},\dots,\sem{{x}_n}^{\assign}),\sem{\op}(\sem{{y}_1}^{\assign},\dots,\sem{{y}_n}^{\assign})\right) & \\
            &\leq L_{\op}(d)\left( (\sem{{x}_1}^{\assign},\dots,\sem{{x}_n}^{\assign}),(\sem{{y}_1}^{\assign},\dots,\sem{{y}_n}^{\assign}) \right) & \mathrm{(A)}\\
            &= L_{\op}(d)(L_{\op}(f)(\vec{x}),L_{\op}(f)(\vec{y})) & \mathrm{(B)}\\
            &\leq L_{\op}(\Delta)(\vec{x},\vec{y}) & \mathrm{(C)}\\ 
                   &= \delta
        \end{align*}
        where $(A)$ applies the fact that $\sem{\op}$ is $L_\op$--nonexpansive, (B) follows as $L_{\op}(f)$ applies $f$ pointwise to $n$-ary tuples, and $(C)$ uses nonexpansiveness of $L_{\op}(f)$ from (\ref{non_ex_property:C}).
 \end{proof}

\section{Term Monad and Free Quantitative Algebras}\label{section_termfree}



Given a lifted signature $\qsig$ and a set of Horn clauses $S$ axiomatising a theory $\vdash_S$, we describe in Subsection \ref{subsection_term} the construction of the term $(\qsig,S)$--algebra (denoted $\qterm{\qsig,S}(A,d)$) on a given $\GMet$ space $(A,d)$. We then show (Theorem \ref{thm:qtermmonad}) how this yields a monad $\qterm{\qsig,S}$ on $\GMet$.\todo{RS: changed}

Next, in Subsection \ref{subsection_free} we show two main results regarding this monad. First (Theorem \ref{term_is_free:thm}), for any given $(A,d)\in\GMet$, the algebra $\qterm{\qsig,S}(A,d)$ is the free algebra in $\Alg(\qsig,S)$ generated by $(A,d)$. Second (Theorem \ref{EM:theorem}), if all Horn clauses in $S$ are basic (see Definition \ref{def:hornclauses}), then $\Alg(\qsig,S)\cong \EM(\qterm{\qsig,S})$.

The definition of the monad $\qterm{\qsig,S}$ and the proof techniques used to establish the two theorems are inspired by those of \cite{radu2016,  DBLP:conf/lics/BacciMPP18}. In fact, the latter can be seen as special instances, in our framework, when $\GMet = \Met$ and all liftings in $\qsig$ are $\sup$--product liftings.
\subsection{The Term Monad}
\label{subsection_term}
Fix a quantitative theory $\vdash$ over a lifted signature $\qsig$. The construction of the term monad is done via several steps. First, we consider the set of ground terms, i.e., the set of terms without variables ($\term{\sig}\emptyset$) and we define on them a congruence $\termeq{\vdash}$ and a fuzzy relation $d_{\vdash}$ induced by the equations and quantitative equations in $\vdash$. We then show in Lemma \ref{equation_extraction_lemma} how these allow us to build a quantitative $\qsig$--algebra over quotiented $\term{\sig}\emptyset$ terms.
\begin{definition}\label{equation_extraction}
We let $\mathrm{E}(\vdash)$ (resp. $\mathrm{QE}(\vdash)$) be the set of equations (resp. quantitative equations) over $\term{\sig}\emptyset$ that are conclusions of Horn clauses $H\in\ \vdash$ having no premises. Formally:\todo{RS: fixed the theory}
\begin{align*}
&\mathrm{E}(\vdash) = \{ s=t \mid \emptyset\vdash s=t  \text{, for } s,t\in \term{\sig}\emptyset \}\\
&\mathrm{QE}(\vdash) = \{ s=_\varepsilon t \mid \emptyset\vdash s=_\varepsilon t  \text{, for } s,t\in\term{\sig}\emptyset  \}.
\end{align*}
Based on these, we define the following relation and fuzzy relation over $\term{\sig}\emptyset$:
$$\termeq{\vdash}\subseteq \term{\sig}\emptyset\times \term{\sig}\emptyset \ \ \ \  s \termeq{\vdash} t \Leftrightarrow (s,t)\in \mathrm{E}(\vdash)$$
$$
d_{\vdash}: \term{\sig}\emptyset\times \term{\sig}\emptyset\rightarrow [0,1] \ \ \  d_{\vdash}(s,t) = \inf\left\{ \varepsilon \mid  s=_\varepsilon t \in \mathrm{QE}(\vdash) \right\}.$$
\end{definition}

\begin{lemma}\label{equation_extraction_lemma} \todo{VV: changed statement of the lemma and modified/expanded the proof, check}
The following hold:
\begin{enumerate}
\item\label{termlemma:eqcong} The relation $\termeq{\vdash}$ is an equivalence relation on $\sig$--terms without variables and is compatible with all operations.
\item\label{termlemma:freesetalg} $(\term{\sig}\emptyset/{\termeq{\vdash}}, \sem{\sig})$ is the free $(\sig, \termeq{\vdash})$--algebra on the empty set, with carrier $\term{\sig}\emptyset/{\termeq{\vdash}}$ and operations $\sem{\sig}$:\todo{RS:removed up to iso}
$$\sem{\op}([t_1]_{\termeq{\vdash}},\dots, [t_n]_{\termeq{\vdash}}) = [\op(t_1,\dots,t_n)]_{\termeq{\vdash}}.$$
\item\label{termlemma:fuzzyrel} The fuzzy relation $d_{\vdash}$ satisfies the following properties:
\begin{enumerate}
\item\label{termlemma:fuzzyrel:arch}  $d_{\vdash}(s,t) \leq\varepsilon $ if and only if $ (s=_\varepsilon t)\in \mathrm{QE}(\vdash)$
\item\label{termlemma:fuzzyrel:comp}  $d_{\vdash}$ preserves the equivalence $\termeq{\vdash}$, i.e., $d_{\vdash}$ is well defined on $\termeq{\vdash}$--equivalence classes: $$d_{\vdash} :   {\term{\sig}\emptyset}/{\termeq{\vdash}} \times  {\term{\sig}\emptyset}/{\termeq{\vdash}}\rightarrow [0,1].$$
\end{enumerate}
\item\label{termlemma:termdgmet} $({\term{\sig}\emptyset/{\termeq{\vdash}}}, d_{\vdash})$ is a $\GMet$ space.
\item\label{termlemma:termalgisaglg}  $({\term{\sig}\emptyset/{\termeq{\vdash}}}, d_{\vdash}, \sem{\sig})$ is a  quantitative $\qsig$--algebra.
\end{enumerate}
\end{lemma}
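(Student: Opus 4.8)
The plan is to prove the five items in order, each relying on a specific subset of the rules of Definition~\ref{defn:deduce}. For item~\ref{termlemma:eqcong}, I would show $\termeq{\vdash}$ is a congruence by a routine argument: reflexivity, symmetry, transitivity and compatibility with operations follow by instantiating the closed clauses \textsf{Refl}, \textsf{Sym}, \textsf{Trans} and \textsf{App} at ground terms via \textsf{Sub}, and then discharging the variable premises with \textsf{Cut} against the derived ground equations. Item~\ref{termlemma:freesetalg} is then immediate from the universal-algebra corollary following Proposition~\ref{prop:algtermisalgsigeq}: since $\termeq{\vdash}$ is exactly the congruence generated from the ground equations $\mathrm{E}(\vdash)$ by the equational rules, it coincides with the smallest congruence making the quotient satisfy $\mathrm{E}(\vdash)$, so the quotient is the free $(\sig,\termeq{\vdash})$--algebra on $\emptyset$ with the stated canonical operations.

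For item~\ref{termlemma:fuzzyrel}, the crux of part~\ref{termlemma:fuzzyrel:arch} is that the infimum defining $d_{\vdash}$ is attained and that the set of admissible $\varepsilon$ is upward closed. The ``$\Leftarrow$'' direction is by definition of the infimum. For ``$\Rightarrow$'', I would apply \textsf{Arch} to the (possibly infinite) family of all derivable $s=_{\varepsilon'}t$ to obtain $s=_{d_{\vdash}(s,t)}t \in \mathrm{QE}(\vdash)$, showing the infimum is realised, and then use \textsf{Max} to climb to any $\varepsilon \geq d_{\vdash}(s,t)$. Part~\ref{termlemma:fuzzyrel:comp}, well-definedness on $\termeq{\vdash}$--classes, follows from $\textsf{Comp}_\ell$ and $\textsf{Comp}_r$: these show that replacing $s,t$ by $\termeq{\vdash}$--equivalent terms leaves the set $\{\varepsilon \mid s=_\varepsilon t \in \mathrm{QE}(\vdash)\}$ unchanged, hence preserves its infimum.

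Item~\ref{termlemma:termdgmet} is checked axiom by axiom, using only the clauses included in condition~(V) for the particular $\GMet$ at hand, combined with the characterisation of part~\ref{termlemma:fuzzyrel:arch}. For instance, \eqref{eq:trineq} follows by taking the attained witnesses $s=_{d_{\vdash}(s,u)}u$ and $u=_{d_{\vdash}(u,t)}t$, applying the triangle clause to derive $s=_{d_{\vdash}(s,u)+d_{\vdash}(u,t)}t$, and reading off the bound; \eqref{eq:symm}, \eqref{eq:refl}, \eqref{eq:idofind} and \eqref{eq:strongtrineq} are entirely analogous.

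The last item~\ref{termlemma:termalgisaglg} is the main obstacle, since it requires showing each $\sem{\op}$ is $L_\op$--nonexpansive, which is where the $L$--\textsf{NE} rule and the lifting hypotheses enter. Fix representatives $s_1,\dots,s_n,t_1,\dots,t_n$ and let $V \subseteq \term{\sig}\emptyset/{\termeq{\vdash}}$ be the finite set of classes they determine. The subspace $(V, d_{\vdash}|_V)$ lies in $\GMet$ and its inclusion into the carrier is an isometric embedding; since $L_\op$ preserves isometric embeddings, $L_\op$ applied to this inclusion is again isometric, so the value of $L_\op(d_{\vdash})$ on $([s_1],\dots,[s_n])$ and $([t_1],\dots,[t_n])$ is computed entirely from $d_{\vdash}|_V$. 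I would then instantiate $L$--\textsf{NE} using one variable per class in $V$ (so the tuples $\vec{x},\vec{y}$ repeat variables exactly when classes coincide, which is what keeps the proviso $(\vec{x}\cup\vec{y},\Delta)\in\GMet$ valid even when \eqref{eq:idofind} is imposed), with $\Delta$ the pullback of $d_{\vdash}|_V$; this produces $\op(\vec{x})=_\delta\op(\vec{y})$ with $\delta = L_\op(\Delta)(\vec{x},\vec{y})$ equal to the $L_\op(d_{\vdash})$--value above. Substituting each variable by its representative via \textsf{Sub} and discharging the premises by \textsf{Cut} against the quantitative equations supplied by part~\ref{termlemma:fuzzyrel:arch} yields $\emptyset \vdash \op(\vec{s}) =_\delta \op(\vec{t})$, whence part~\ref{termlemma:fuzzyrel:arch} together with congruence (items~\ref{termlemma:eqcong} and~\ref{termlemma:fuzzyrel:comp}) gives $d_{\vdash}(\sem{\op}([\vec{s}]),\sem{\op}([\vec{t}])) \leq \delta$, exactly $L_\op$--nonexpansiveness. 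The delicate points I expect are the bookkeeping of variables versus classes in the proviso and the use of isometric-embedding preservation to reduce the infinite lifted metric $L_\op(d_{\vdash})$ to finite data.
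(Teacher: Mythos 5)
Your proof is correct and follows essentially the same route as the paper's: each item is discharged by the corresponding closure conditions of Definition~\ref{defn:deduce}, and item~\ref{termlemma:termalgisaglg} is handled exactly as in the paper by instantiating $L$--\textsf{NE} on the finite set of representatives with $\Delta$ induced by $d_{\vdash}$, discharging the premises via \textsf{Sub}/\textsf{Cut} and item~\ref{termlemma:fuzzyrel:arch}, and invoking preservation of isometric embeddings to identify $L_\op(\Delta)$ with $L_\op(d_{\vdash})$. Your extra bookkeeping of one variable per $\termeq{\vdash}$--class, so that the proviso $(\vec{x}\cup\vec{y},\Delta)\in\GMet$ survives when axiom \eqref{eq:idofind} is imposed, is a detail the paper's proof glosses over, but it does not alter the argument.
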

\begin{proof} 
All points are enforced by the presence of certain rules and clauses in the syntactic proof system, and the fact that $\vdash$, being a theory, is closed under them. Item \ref{termlemma:eqcong} follows by $\textsf{Refl}$, $\textsf{Sym}$,  $\textsf{Trans}$ and $ \textsf{App}$. Item \ref{termlemma:freesetalg} follows from the characterisation of free $\sig$--algebras from Subsection \ref{subsection_universal}. Item \ref{termlemma:fuzzyrel:arch} follows from $\textsf{Max}$ and  $\textsf{Arch}$, and \ref{termlemma:fuzzyrel:comp} from $ \textsf{Comp}_\ell$ and $\textsf{Comp}_r$. Item \ref{termlemma:termdgmet} follows from the axioms in \hyperlink{rules:gmet}{(V)} corresponding to $\GMet$.\todo{RS:changed point to item and added labels and refs.}

Lastly, Item \ref{termlemma:termalgisaglg} is enforced by the $L$--\textsf{NE} rule. We discuss this case in greater detail. We need to show that, for  any $\op:n:L_\op\in\qsig$, the interpretation $\sem{\op}$
$$\sem{\op}([t_1]_{\termeq{\vdash}},\dots, [t_n]_{\termeq{\vdash}}) = [\op(t_1,\dots,t_n)]_{\termeq{\vdash}}$$
 is $L_\op$--nonexpansive. This means checking that
 $$\sem{\op}: \left(({\term{\sig}\emptyset/{\termeq{\vdash}}})^n, L_\op(d_{\vdash})\right)
 \rightarrow \left({\term{\sig}\emptyset}/{\termeq{\vdash}}, d_{\vdash}\right)
 $$
 is nonexpansive, i.e., that for any $\vec{s}=(s_1,\dots, s_n)$ and $\vec{t}= (t_1,\dots, t_n)$ in $({\term{\sig}\emptyset})^n$, 
\begin{equation}\label{op_Lopd}
d_{\vdash}(\op(\vec{s}),\op(\vec{t})) \leq L_\op(d_{\vdash})(\vec{s},\vec{t})
 \end{equation}
Using the \textsf{Sub} rule, we instantiate the $L$--\textsf{NE} rule with premises 
\[p =_{\Delta(p,q)} q \quad \text{for } p,q \in  \{s_1,\dots, s_n\}\cup\{t_1,\dots, t_n\}\]
where $\Delta(p,q)=d_{\vdash}(p,q)$. This set of premises satisfies the proviso of the $L$--\textsf{NE} rule, since $d_{\vdash}$ is a $\GMet$ relation (Item \ref{termlemma:termdgmet}) because all premises are in $\mathrm{QE}(\vdash)$ (Item \ref{termlemma:fuzzyrel:arch}). Hence, also the quantitative equation in the conclusion of the $L$--\textsf{NE} rule is in $\mathrm{QE}(\vdash)$ (apply \textsf{Cut}):
\begin{equation}\label{eq:LNEconclusion}
    \op(\vec{s}) =_{L_\op(\Delta)(\vec{s},\vec{t})} \op(\vec{t}) \in \mathrm{QE}(\vdash)
\end{equation}

Now, since we have the isometric embedding
\[\left((\{s_1,\dots, s_n\}\cup\{t_1,\dots, t_n\}), \Delta\right)\hookrightarrow ({\term{\sig}\emptyset}, d_{\vdash})\]
and $L_\op$ preserves isometric embeddings, this implies 
$$
L_\op(\Delta)(\vec{s},\vec{t})=L_\op(d_{\vdash})(\vec{s},\vec{t}),$$
and thus by Item \ref{termlemma:fuzzyrel:arch} and \eqref{eq:LNEconclusion}, we conclude \eqref{op_Lopd} holds.
\end{proof}

We remark that the last step of this proof uses the technical assumption that liftings $L_\op$ preserve isometric embeddings. In contrast, the proof of Theorem \ref{thm:soundness} (soundness) can be carried out without this hypothesis. 
Therefore, this technical assumption is not needed to reason syntactically about equality and distance in quantitative algebras but is required to ensure that the construction of the term algebra (à la \cite{radu2016}) is valid. It is also used in the proof of Theorem \ref{EM:theorem}.\looseness=-1

\todo{check that we introduce "term algebra", and possibly remove this ground term algebra terminology, as not used?}

Now, given a $\GMet$ space $(A,d)$, we aim at defining a $\qsig$--algebra over terms generated from $A$ (i.e., $\term{\sig}A$ instead of $\term{\sig}\emptyset$), taking into account the distance on $A$ given by $d$. We do so via an extension of the theory $\vdash$.
 
 \begin{definition}[Theory Extension]
Given a $\GMet$ space $(A,d)$, a lifted signature $\qsig$ and a theory $\vdash$ over $\qsig$, we define:
\begin{itemize}
\item a new lifted signature $$\qsig_A = \qsig \cup \{a:0:L_a\mid a \in A\},$$ where we add a fresh constant $a$ (of arity $0$) for each element $a\in A$ where $L_a = L_\times$, is the $0$--ary $\sup$--product lifting from Example \ref{example_1}. Note that we can identify $\term{\sig}A$ ($\sig$--terms with variables in $A$) with $\term{\sig_A}\emptyset$ ($\sig_A$--terms without variables).
\item a new theory $\vdash_A$ over $\qsig_A$ defined as the $\GMet$ theory generated by the set of clauses
$$\  \vdash\cup \ \{\emptyset\vdash a=_{d(a,a^\prime)} a^\prime \mid (a,a^\prime) \in A\times A\},$$
i.e., all the clauses in $\vdash$ and new ones describing the distances between the new constants in $A$. 
\end{itemize}
We refer to $\qsig_A$ as the signature $\qsig$ extended by the $\GMet$  space $(A,d)$. Similarly, $\vdash_A$ is the theory $\vdash$ extended by $(A,d)$.
\end{definition}

In what follows, we fix an axiom set of Horn clauses $S$ and the associated $\qsig$--theory $\vdash_S$ axiomatised by $S$. 
Its extension by a $\GMet$ space $(A,d)$ is the theory $\vdash_{S_A}$ over $\qsig_A$ whose term algebra (as in Lemma \ref{equation_extraction_lemma}, Item 5) is
$$({\term{\sig_A}\emptyset/{\termeq{\vdash_{S_A}}}}, d_{\vdash_{S_A}}, \sem{\sig_A})$$
or, identifying $\term{\sig_A}\emptyset$ with $\term{\sig}A$, the $\qsig_A$--algebra
$$({\term{\sig}A/{\termeq{\vdash_{S_A}}}}, d_{\vdash_{S_A}}, \sem{\sig_A}).$$
We can turn this into a $\qsig$--algebra
$$({\term{\sig}A/{\termeq{\vdash_{S_A}}}}, d_{\vdash_{S_A}}, \sem{\sig})$$
by forgetting the interpretations $\sem{a}$ of all constants $a\in A$. Since the set of Horn clauses $S$ is fixed, we introduce the following shortcuts to ease the notation:
\[{\termeq{A}} :={\termeq{\vdash_{S_A}}} \qquad \qterm{\qsig,S}A :=  \term{\sig}A/{\termeq{A}} \qquad \qterm{\qsig,S}d := d_{\vdash_{S_A}}\]
so that, for any $(A,d)$ we have a $(\qsig,S)$--algebra $(\qterm{\qsig,S}A, \qterm{\qsig,S}d, \sem{\sig})$.
The assignment
$$
(A,d) \mapsto (\qterm{\qsig,S}A, \qterm{\qsig,S}d, \sem{\sig})
$$
can be turned into a functor $\qterm{\qsig,S}:\GMet\rightarrow \Alg(\qsig,S)$ by defining, for each nonexpansive map $f:(A,d)\rightarrow(B,\Delta)$, 
\todo{VV:there was $\qterm{\qsig,S}(f) = \term{\sig,\termeq{A}}(f)$, but apparently not good, check def. below:}
$$\qterm{\qsig,S}(f):= [t]_{\termeq{A}} \mapsto [\term{\sig}(f)(t)]_{\termeq{B}} $$
which is equivalent to
$$\qterm{\qsig,S}(f) :=  [t(a_1,\dots, a_n)]_{\termeq{A}} \mapsto [t(f(a_1),\dots, f(a_n))]_{\termeq{B}} 
$$

To check that $\qterm{\qsig,S}$ is indeed a functor, one needs to verify that $\qterm{\qsig,S}(f)$ is well-defined on equivalence classes, nonexpansive, and commutes with operations in $\sig$, and that $\qterm{\qsig,S}$ preserves composition. The following lemma implies the first two properties.
\begin{lemma}\label{lem:termf}
    Let $f:(A,d)\rightarrow(B,\Delta)$ be an arrow in $\GMet$. For all $s,t \in \term{\sig} A$,
    \begin{align*}
        [s]_{\termeq{A}} = [t]_{\termeq{A}} &\Rightarrow [\term{\sig}f(s)]_{\termeq{B}} = [\term{\sig}f(t)]_{\termeq{B}}\\
        \qterm{\qsig,S}d([s]_{\termeq{A}}, [t]_{\termeq{A}})\leq \varepsilon &\Rightarrow \qterm{\qsig,S}\Delta([\term{\sig}f(s)]_{\termeq{B}}, [\term{\sig}f(t)]_{\termeq{B}})\leq \varepsilon
    \end{align*}
\end{lemma}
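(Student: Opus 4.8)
The plan is to reduce both implications to a single statement about how the theory $\vdash_{S_A}$ is transported along $f$. Let $\bar f\colon \term{\sig_A}X \to \term{\sig_B}X$ be the syntactic operation that fixes every variable in $X$ and replaces each constant $a \in A$ by the constant $f(a)\in B$; it extends termwise to equations, to quantitative equations, and to (possibly infinitary) Horn clauses, and on ground terms $\term{\sig}A = \term{\sig_A}\emptyset$ it coincides with $\term{\sig}f$. The key claim I would establish is
\[
    H \in {\vdash_{S_A}} \implies \bar f(H) \in {\vdash_{S_B}}.
\]
Both implications of the lemma then follow mechanically: by Definition \ref{equation_extraction}, $[s]_{\termeq{A}} = [t]_{\termeq{A}}$ is exactly $\emptyset \vdash_{S_A} s = t$, and by Lemma \ref{equation_extraction_lemma}(\ref{termlemma:fuzzyrel:arch}), $\qterm{\qsig,S}d([s]_{\termeq{A}},[t]_{\termeq{A}}) \leq \varepsilon$ is exactly $\emptyset \vdash_{S_A} s =_\varepsilon t$; applying the claim and reading the conclusions back through the same correspondences in $\vdash_{S_B}$ (using $\bar f = \term{\sig}f$ on ground terms) yields precisely the two stated implications.

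To prove the claim I would set $T = \{ H \in \clauses{\sig_A}(X) \mid \bar f(H) \in {\vdash_{S_B}} \}$. Since $\vdash_{S_A}$ is by definition the smallest quantitative theory over $\qsig_A$ containing the generators $S \cup \{\emptyset \vdash a =_{d(a,a')} a' \mid a,a' \in A\}$, it suffices to show that $T$ is a quantitative theory containing these generators. For the generators: every clause of $S$ is over $\sig$ and hence fixed by $\bar f$, and lies in $\vdash_{S_B}$ since $S$ generates it; each distance clause $\emptyset \vdash a =_{d(a,a')} a'$ is sent to $\emptyset \vdash f(a) =_{d(a,a')} f(a')$, which is derivable in $\vdash_{S_B}$ from the generator $\emptyset \vdash f(a) =_{\Delta(f(a),f(a'))} f(a')$ by the \textsf{Max} rule, because $\Delta(f(a),f(a')) \leq d(a,a')$ by nonexpansiveness of $f$. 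This is the \emph{only} place where nonexpansiveness of $f$ enters the argument.

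It remains to check that $T$ is a quantitative theory, i.e.\ that it contains the clause schemas of conditions \hyperlink{rules:eqlog}{(II)}--\hyperlink{rules:gmet}{(V)} and is closed under the rules of \hyperlink{rules:struct}{(I)} and \hyperlink{rules:LNE}{(VI)}. The schemas of \hyperlink{rules:fuzzy}{(III)}--\hyperlink{rules:gmet}{(V)} together with \textsf{Refl}, \textsf{Sym}, \textsf{Trans} involve only variables, on which $\bar f$ is the identity, so each such clause is mapped to the corresponding clause of $\vdash_{S_B}$; the schemas \textsf{App} and $L$--\textsf{NE} apply an operation $\op \in \qsig_A$ to variables, and $\bar f$ either fixes $\op$ (when $\op \in \qsig$) or renames a nullary constant $a$ to $f(a)$, in both cases landing on a genuine instance of the same schema in $\vdash_{S_B}$. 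Closure under \textsf{Hyp} and \textsf{Cut} is immediate because $\bar f$ acts imagewise on premise sets, and closure under $L$--\textsf{NE} is handled exactly as its schema above.

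The main obstacle is closure under \textsf{Sub}. For a substitution $\subst\colon X \to \term{\sig_A}X$, I would set $\tau := \bar f \circ \subst \colon X \to \term{\sig_B}X$ and prove the commutation identity $\bar f \circ \subst^* = \tau^* \circ \bar f$ by a routine induction on term structure, the base cases being a variable $x$ (both sides equal $\tau(x)$) and a constant $a \in A$ (both sides equal $f(a)$, since substitution fixes the constant $f(a)$ of $\sig_B$). Given this identity, if $\bar f(\Gamma \vdash \eqn) \in {\vdash_{S_B}}$ then applying \textsf{Sub} in $\vdash_{S_B}$ with $\tau$ produces $\tau^*(\bar f(\Gamma)) \vdash_{S_B} \tau^*(\bar f(\eqn))$, which by the identity is exactly $\bar f(\subst^*(\Gamma) \vdash \subst^*(\eqn))$, so $T$ is closed under \textsf{Sub}. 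Aside from this bookkeeping and the single content-bearing use of nonexpansiveness in the distance generators, every step is an immediate transfer of structure.
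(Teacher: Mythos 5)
Your proposal is correct, but it takes a genuinely different route from the paper. The paper proves this lemma \emph{semantically}: it applies Lemma \ref{lem:nonexpextendedalg} to the term algebra $(\qterm{\qsig,S}B,\qterm{\qsig,S}\Delta,\sem{\qsig})$ and the nonexpansive map $a\mapsto[f(a)]_{\termeq{B}}$, obtaining a model of the extended theory $\vdash_{S_A}$ in which each constant $a$ is interpreted as $[f(a)]_{\termeq{B}}$ and hence each ground term $t$ as $[\term{\sig}f(t)]_{\termeq{B}}$; soundness (Theorem \ref{thm:soundness}) then converts $\emptyset\relgen{S_A}s=t$ (resp.\ $s=_\varepsilon t$) into the two stated conclusions. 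You instead argue \emph{syntactically}: you prove that the constant-renaming translation $\bar f$ carries every clause of $\vdash_{S_A}$ into $\vdash_{S_B}$, by checking that the preimage of $\vdash_{S_B}$ under $\bar f$ is a quantitative theory containing the generators, with the commutation identity $\bar f\circ\subst^*=\tau^*\circ\bar f$ handling the \textsf{Sub} case and nonexpansiveness of $f$ entering only through the distance generators via \textsf{Max}. Your check of the rule schemas and closure conditions is complete (including the nullary-constant instances of \textsf{App} and $L$--\textsf{NE}), so the argument goes through. The paper's route is shorter because it reuses machinery that is needed anyway for the freeness theorem; yours stays entirely inside the proof system, avoids any appeal to soundness, isolates the one content-bearing use of nonexpansiveness, and yields a strictly stronger, reusable statement (translation of arbitrary derivable Horn clauses, not just premise-free ground equations) at the cost of the per-rule bookkeeping.
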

The commutation with operations and preservation of composition follow from the fact that $\term{\sig}(f)$ commutes with the operations in $\sig$ and $\term{\sig}$ preserves composition.

Hence $\qterm{\qsig,S}:\GMet \rightarrow \Alg(\qsig,S)$ is indeed a functor. It can be turned, by application of the forgetful functor (every algebra in $ \Alg(\qsig,S)$ is a $\GMet$ space), to a functor of type
$$\qterm{\qsig,S}:\GMet \rightarrow \GMet. $$ The latter can be given the structure of a monad on $\GMet$ by defining unit  $\widehat{\eta}_{(A,d)} : (A,d) \rightarrow \qterm{\qsig,S}(A,d)$ and multiplication $\widehat{\mu}_{(A,d)} : \qterm{\qsig,S}\qterm{\qsig,S}(A,d) \rightarrow \qterm{\qsig,S}(A,d)$ as follows:
\begin{align*}
&\widehat{\eta}_{(A,d)}: a \stackrel{\widehat{\eta}_{(A,d)}}{\mapsto} [a]_{\termeq{A}}\\
&\widehat{\mu}_{(A,d)}: [{t([{t_1}]_{\termeq{A}},\dots,[{t_n}]_{\termeq{A}})}]_{\termeq{ \qterm{\qsig,S}A}} \stackrel{\widehat{\mu}_{(A,d)}}{\mapsto} [{t(t_1,\dots,t_n)}]_{\termeq{A}}
\end{align*}
\todo{VV: removed angled brackets notation and made it explicit}
It can be verified that these maps are nonexpansive and well defined, and that they satisfy the conditions in Definition \ref{monad:main_definition}. Therefore, we can state:
\begin{theorem}\label{thm:qtermmonad}
$(\qterm{\qsig,S}, \widehat{\eta},\widehat{\mu})$ is a monad on $\GMet$.
\end{theorem}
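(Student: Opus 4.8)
The plan is to split the verification into its purely set-theoretic content and its genuinely quantitative content. The formulas defining $\widehat{\eta}$ and $\widehat{\mu}$ are exactly those of the ordinary term monad on quotiented term sets, so on underlying sets the triple $(\forget\qterm{\qsig,S},\forget\widehat{\eta},\forget\widehat{\mu})$ coincides with the familiar term-monad structure. Since the forgetful functor $\forget:\GMet\rightarrow\Set$ is faithful, every equation between $\GMet$-morphisms that must be established---naturality of $\widehat{\eta}$ and $\widehat{\mu}$ and the two monad laws $\widehat{\mu}\circ\widehat{\eta}\qterm{\qsig,S}=\widehat{\mu}\circ\qterm{\qsig,S}\widehat{\eta}=\id$ and $\widehat{\mu}\circ\qterm{\qsig,S}\widehat{\mu}=\widehat{\mu}\circ\widehat{\mu}\qterm{\qsig,S}$---reduces to the corresponding equation between underlying functions. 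Thus the only facts left to check are that each component is well defined on $\termeq{A}$-classes and is a morphism of $\GMet$ (nonexpansive), and that the monad equations hold at the level of $\Set$.

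For the set-level content, recall that functoriality of $\qterm{\qsig,S}$ has already been established (Lemma~\ref{lem:termf} and the surrounding discussion). Because $\termeq{A}$ is a congruence compatible with the operations (Lemma~\ref{equation_extraction_lemma}, Item~\ref{termlemma:eqcong}) and $\widehat{\mu}$ is the flattening of nested terms, the well-definedness of $\widehat{\eta},\widehat{\mu}$ on classes, their naturality, and the unit and associativity laws all follow by the same computation as for the term monad $\term{\sig,E}$ recalled in Subsection~\ref{subsection_universal}. None of this involves the metric.

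For the unit component, $\widehat{\eta}_{(A,d)}$ is visibly well defined, and nonexpansiveness is immediate: the extended theory $\relgen{S_A}$ contains the clause $\emptyset\vdash a=_{d(a,a')}a'$ for every $a,a'\in A$, so $a=_{d(a,a')}a'\in\mathrm{QE}(\relgen{S_A})$ and, by Lemma~\ref{equation_extraction_lemma}, Item~\ref{termlemma:fuzzyrel:arch}, $\qterm{\qsig,S}d([a]_{\termeq{A}},[a']_{\termeq{A}})\leq d(a,a')$.

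The crux is the multiplication component. Write $B=\qterm{\qsig,S}(A,d)$, whose points are classes $[t]_{\termeq{A}}$ with $t\in\term{\sig}A$. Flattening is realised by the substitution $\subst$ sending each constant $[t]_{\termeq{A}}$ of $\qsig_B$ to a representative $t\in\term{\sig}A$, so that on representatives $\subst^*$ computes exactly $\forget\widehat{\mu}$. The extended theory $\relgen{S_B}$ is generated by $S$ together with the distance clauses $\emptyset\vdash[s]_{\termeq{A}}=_{\qterm{\qsig,S}d([s],[t])}[t]_{\termeq{A}}$; applying $\subst^*$ turns each such clause into $\emptyset\vdash s=_{d_{\relgen{S_A}}(s,t)}t$, which lies in $\mathrm{QE}(\relgen{S_A})$ since $\qterm{\qsig,S}d([s],[t])=d_{\relgen{S_A}}(s,t)$ and by Lemma~\ref{equation_extraction_lemma}, Item~\ref{termlemma:fuzzyrel:arch}. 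Consequently, using \textsf{Sub} to substitute and \textsf{Cut} to discharge the translated axioms, any derivation $\emptyset\relgen{S_B}T=_\varepsilon T'$ is transported to $\emptyset\relgen{S_A}\subst^*(T)=_\varepsilon\subst^*(T')$, i.e. $\widehat{\mu}(T)=_\varepsilon\widehat{\mu}(T')$; running the same argument with $=$ in place of $=_\varepsilon$ also gives well-definedness of $\widehat{\mu}$ on classes. Reading this through $d_{\relgen{S_A}}$ and $d_{\relgen{S_B}}$ yields $\qterm{\qsig,S}d(\widehat{\mu}(T),\widehat{\mu}(T'))\leq\qterm{\qsig,S}(\qterm{\qsig,S}d)(T,T')$, which is precisely nonexpansiveness of $\widehat{\mu}$. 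I expect this transfer-along-substitution step to be the main obstacle: one must check that $\subst$ is independent of the chosen representatives up to $\termeq{A}$, and that the structural rules genuinely allow derivations in the theory extended by $B$ to be pushed forward to the theory extended by $A$.
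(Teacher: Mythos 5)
Your overall decomposition---reduce naturality and the two monad laws to $\Set$ via faithfulness of the forgetful functor, so that only well-definedness and nonexpansiveness of the components remain---matches the paper, which likewise dispatches associativity, unitality and naturality by appeal to $\eta^{\sig}$ and $\mu^{\sig}$. Your argument for $\widehat{\eta}$ is correct and direct. The divergence is at the crux, nonexpansiveness (and well-definedness) of $\widehat{\mu}$: you argue syntactically, transporting derivations from the theory $\relgen{S_B}$ (for $B=\qterm{\qsig,S}A$) to $\relgen{S_A}$ along a substitution of representatives, whereas the paper argues semantically. It first proves a lemma (Lemma \ref{lem:nonexpextendedalg}) stating that any $(\qsig,S)$--algebra on $(B,\Delta)$ together with a nonexpansive map $f:(A,d)\to(B,\Delta)$ extends to a model of the extended theory by interpreting each new constant $a$ as $f(a)$; applying this with $f=\id$ on $\qterm{\qsig,S}(A,d)$ exhibits the term algebra itself as a model of $\relgen{S_{\qterm{\qsig,S}A}}$ in which the interpretation of a nested term is exactly its flattening, and soundness (Theorem \ref{thm:soundness}) then yields $\qterm{\qsig,S}d(\widehat{\mu}(T),\widehat{\mu}(T'))\le \qterm{\qsig,S}(\qterm{\qsig,S}d)(T,T')$ directly.

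The gap in your route is the appeal to \textsf{Sub}. The elements $[t]_{\termeq{A}}$ of $B$ enter the extended signature $\qsig_B$ as fresh \emph{constants} (nullary operation symbols), not as variables, and the rule \textsf{Sub} of Definition \ref{defn:deduce} only licenses substitutions $\sigma:X\to\term{\sig}X$ for the variables in the fixed set $X$. There is no rule in the apparatus that replaces a constant by a compound term, so ``using \textsf{Sub} to substitute and \textsf{Cut} to discharge the translated axioms'' is not a legal derivation step. What you are really invoking is a translation of theories along a map of signatures, and to justify that it preserves derivability you would need a separate induction over the inductive definition of the generated theory, checking compatibility with each of the closure conditions (I)--(VI); this is plausible (the clauses in (II)--(VI) mention only variables, and the translation commutes with genuine variable substitution), but it is a substantive additional argument, and you would additionally have to verify independence of the chosen representatives. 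The paper's semantic detour through a model and soundness avoids this induction entirely, which is presumably why the proof is organised that way; as written, your key transfer step is not justified.
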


\subsection{Freeness and Isomorphism Theorems}
\label{subsection_free}

We are now ready to prove that $\qterm{\qsig,S}(A,d)$ is free. 
%
%

%
%

\begin{theorem}\label{term_is_free:thm}
    Let $(A,d) \in \GMet$ and $(B,\Delta,\sem{\qsig}) \in \Alg(\qsig,S)$. For any nonexpansive map $f:(A,d) \rightarrow (B,\Delta)$, there exists a unique $\qsig$--algebra homomorphism $f^*:\qterm{\qsig,S}A \rightarrow B$ such that $f^* \circ \widehat{\eta}_{(A,d)} = f$. We summarize the statement in \eqref{diag:freealg}.
    \begin{equation}\label{diag:freealg}
\begin{tikzcd}
	{(A,d)} & {\qterm{\qsig,S}(A,d)} & {\qterm{\qsig,S}(A,d)} & {} \\
	& {(B,\Delta)} & {(B,\Delta,\sem{\qsig})}
	\arrow["{\widehat{\eta}_{(A,d)}}", from=1-1, to=1-2]
	\arrow["f"', from=1-1, to=2-2]
	\arrow[""{name=0, anchor=center, inner sep=0}, "{f^*}", dashed, from=1-3, to=2-3]
	\arrow["{\text{in }\GMet}", shift left=5, draw=none, from=1-1, to=1-2]
	\arrow[""{name=1, anchor=center, inner sep=0}, "{f^*}", dashed, from=1-2, to=2-2]
	\arrow["{\text{in }\Alg(\qsig,S)}", shift left=5, draw=none, from=1-2, to=1-4]
	\arrow["\forget"', shorten <=17pt, shorten >=17pt, from=0, to=1]
\end{tikzcd}
    \end{equation}
\end{theorem}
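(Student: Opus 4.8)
The plan is to combine the universal property of $\term{\sig}A$ as the free $\sig$--algebra on the set $A$ with the soundness theorem (Theorem \ref{thm:soundness}). The key conceptual point is that the nonexpansive map $f$ is exactly the extra data required to turn $(B,\Delta,\sem{\qsig})$ into a model of the extended theory $\vdash_{S_A}$.

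First I would extend $B$ to a $\qsig_A$--algebra by interpreting each fresh constant $a \in A$ as $\sem{a}_B := f(a)$; this interpretation is automatically $L_\times$--nonexpansive since the operations are $0$--ary. I would then check that this extended algebra satisfies all the clauses generating $\vdash_{S_A}$: it satisfies every clause of $S$ because $B \in \Alg(\qsig,S)$, and it satisfies each new clause $\emptyset \vdash a =_{d(a,a')} a'$ because this amounts to $\Delta(f(a),f(a')) \leq d(a,a')$, which is precisely the nonexpansiveness of $f$. Hence by Theorem \ref{thm:soundness} (applied to the signature $\qsig_A$), the extended $B$ satisfies every Horn clause in $\vdash_{S_A}$.

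Next, by freeness of $\term{\sig}A = \term{\sig_A}\emptyset$ in $\Set$, there is a unique $\sig$--algebra homomorphism $\widehat{f}:\term{\sig}A \rightarrow B$ with $\widehat{f}(a) = f(a)$, given inductively by $\widehat{f}(\op(t_1,\dots,t_n)) = \sem{\op}_B(\widehat{f}(t_1),\dots,\widehat{f}(t_n))$. I would then show $\widehat{f}$ factors through the quotient $\qterm{\qsig,S}A = \term{\sig}A/{\termeq{A}}$ by defining $f^*([t]_{\termeq{A}}) := \widehat{f}(t)$. Well-definedness reduces to proving $s \termeq{A} t \Rightarrow \widehat{f}(s) = \widehat{f}(t)$: since $s \termeq{A} t$ means $\emptyset \vdash_{S_A} s = t$, soundness forces $\widehat{f}(s) = \widehat{f}(t)$ (note that $\widehat{f}$ is exactly the interpretation of ground $\sig_A$--terms in the extended $B$). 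Preservation of operations and the equation $f^* \circ \widehat{\eta}_{(A,d)} = f$ are then immediate from the defining properties of $\widehat{f}$.

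Finally I would establish nonexpansiveness and uniqueness. For nonexpansiveness, I use $\qterm{\qsig,S}d([s]_{\termeq{A}},[t]_{\termeq{A}}) = d_{\vdash_{S_A}}(s,t) = \inf\{\varepsilon \mid s =_\varepsilon t \in \mathrm{QE}(\vdash_{S_A})\}$; for each such $\varepsilon$, soundness yields $\Delta(\widehat{f}(s),\widehat{f}(t)) \leq \varepsilon$, so taking the infimum gives $\Delta(f^*([s]_{\termeq{A}}),f^*([t]_{\termeq{A}})) \leq \qterm{\qsig,S}d([s]_{\termeq{A}},[t]_{\termeq{A}})$. Uniqueness holds because every class in $\qterm{\qsig,S}A$ is represented by a term built from elements of $A$ using the operations of $\sig$, so any homomorphism agreeing with $f$ along $\widehat{\eta}_{(A,d)}$ must coincide with $f^*$ by induction on term structure. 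The main obstacle is the well-definedness and nonexpansiveness step; its resolution hinges on the single observation that the nonexpansiveness of $f$ is exactly what makes the constant-extended $B$ a model of $\vdash_{S_A}$, allowing soundness to transport both the equational identifications and the quantitative bounds from the syntax into $B$.
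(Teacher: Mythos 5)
Your proposal is correct and follows essentially the same route as the paper: extend the signature by fresh constants from $A$ interpreted via $f$, observe that nonexpansiveness of $f$ is exactly what makes the extended $B$ a model of $\vdash_{S_A}$, and then use soundness to transport both the equational identifications (well-definedness on $\termeq{A}$--classes) and the quantitative bounds (nonexpansiveness of $f^*$) into $B$. The only cosmetic difference is that the paper invokes the universal property of the free $(\sig_A,E)$--algebra on $\emptyset$ where you construct $\widehat{f}$ by induction on terms and quotient explicitly.
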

\begin{proof}
    Let $E=\mathrm{E}(\vdash_{S_A})$ (see Definition \ref{equation_extraction}). We organise the proof in four steps.
    \todo{VV: changed this proof as step 3 was not correct, please recheck carefully. Also, I used $\termeq{A}$ instead of $\termeq{E}$, to be consistent with the notation used so far}
    
    Step 1. By Lemma \ref{equation_extraction_lemma}, the carrier of $\qterm{\qsig,S}(A,d)$ is $\term{\sig_A,E}\emptyset$ (equivalently: $\term{\sig}A/{\termeq{A}}$), i.e., the free $(\sig_A,E)$--algebra on $\emptyset$.
    
   Step 2.\todo{Say that it is Lemma 4.4} The algebra $(B,\Delta, \sem{\qsig}_B) \in \Alg(\qsig,S)$ can be expanded to become an algebra over the extended signature with the aid of the nonexpansive map $f:A\rightarrow B$. Namely, we interpret the added constants in $A$ as follows:
   $$
   \sem{a}_B := f(a).
   $$
   Since $f$ is nonexpansive, the expanded  $(B,\Delta, \sem{\qsig_A}_B)$ satisfies the additional clauses on constants:
   $$
   \Delta\big( \sem{a}_B , \sem{a^\prime}_B\big) = \Delta( f(a), f(a^\prime)) \leq d_A(a,a^\prime),
   $$
   and therefore it is a model of the extended theory $\vdash_{S_A}$. Hence  $(B,\Delta, \sem{\qsig_A}_B)\in \Alg(\qsig_A,S_A)$. This means that all equations in $E=\mathrm{E}(\vdash_{S_A})$ are validated in $B$. This in turn means that $(B, \sem{\sig_A}_B)$ (forgetting the metric) is a $(\sig_A,E)$--algebra. 

   Step 3. Combining the first two steps, we obtain a unique $(\sig_A,E)$--algebra homomorphism \[g^*_\emptyset: \term{\sig_A,E}\emptyset \rightarrow B\]
where $g^*_\emptyset$ is the homomorphic extension of the empty function $g_\emptyset:\emptyset \to B$. By identifying $\term{\sig_A,E}\emptyset$ with $\term{\sig}A/{\termeq{A}}$, we turn $g^*_\emptyset$ into a function of type $\term{\sig}A/{\termeq{A}} \rightarrow B$, which we denote $f^*$. By the definition of $g^*_\emptyset$ we have $f^*([ a]_{\termeq{A}})= \sem a_B=f(a)$, which implies that $f^* \circ \widehat{\eta}_{(A,d)}= f$.
%
   
   Step 4. We now conclude by proving that $f^*$ is a morphism in $\Alg(\qsig,S)$, namely, it is a $\sig$--algebra homomorphism and it is nonexpansive. The former follows from Step 3 which defined $f^*$ as $g^*_\emptyset$, which is a $(\sig_A,E)$--algebra homomorphism and thus preserves all operations in $\sig$. For the latter, take arbitrary elements  $[s]_{\termeq{A}},  [t]_{\termeq{A}}\in \qterm{\qsig,S}(A,d)$ and assume $\qterm{\qsig,S}(d)(  [s]_{\termeq{A}},  [t]_{\termeq{A}})=\varepsilon$, which means that $
\emptyset \relgen{S_A} s=_\varepsilon t$. We need to show that in $(B,\Delta, \sem{\qsig}_B) \in \Alg(\qsig,S)$ it holds:
   \begin{equation}
   \label{goal_step4}
   \Delta(f^*([s]_{\termeq{A}}), f^*([t]_{\termeq{A}})) \leq \varepsilon.
   \end{equation} 
   Since we already know that $(B,\Delta, \sem{\qsig_A}_B)\in \Alg(\qsig_A,S_A)$ (Step 2), we have that $(B,\Delta, \sem{\qsig_A}_B) \satisfies s=_\varepsilon t$, which means that 
\begin{equation}\label{eq_aux_step4a}
\Delta(\sem{s}_B,\sem{t}_B)\leq \varepsilon.
\end{equation}
 To conclude, it is sufficient to observe, using the definition of $g^*_\emptyset$, that: 
   \begin{equation}\label{eq_aux_step4b}
   f^*([t]_{\termeq{A}})=\sem{t}_B
   \end{equation} 
    where $\sem{-}_B$ is the extended interpretation to $\sig_A$. Hence, from (\ref{eq_aux_step4a}) and (\ref{eq_aux_step4b}) we derive the desired inequality (\ref{goal_step4}).

\end{proof}

We now focus our attention on the case of $\qsig$ theories $\vdash_S$ generated by a set of \emph{basic} Horn clauses (Definition \ref{def:hornclauses}), that is, of the form $\bigwedge^n_{i=1} \phi_i \Rightarrow \phi $, where each $\phi_i$ is a (quantitative) equation ($x=y$ or $x=_\varepsilon y$) between variables.
\todo{this is defined before, so possibly here we can save space}

\begin{theorem}\label{EM:theorem}
Let $\qsig$ be a lifted signature and $S$ a set of basic Horn clauses. Then $\EM(\qterm{\qsig,S}) \cong \Alg(\qsig,S)$.
\end{theorem}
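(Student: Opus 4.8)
The plan is to exhibit the comparison functor of the free/forgetful adjunction together with an explicit inverse, and to verify that the two compose to identities in both directions. Since Theorem \ref{term_is_free:thm} establishes $\qterm{\qsig,S} \dashv \forget$, with $\forget : \Alg(\qsig,S)\to\GMet$ the forgetful functor and induced monad exactly $(\qterm{\qsig,S},\widehat{\eta},\widehat{\mu})$ from Theorem \ref{thm:qtermmonad}, there is a canonical comparison functor $K:\Alg(\qsig,S)\to\EM(\qterm{\qsig,S})$. On objects it sends $(B,\Delta,\sem{\qsig})$ to the Eilenberg--Moore algebra $\big((B,\Delta),\beta_B\big)$, where $\beta_B:\qterm{\qsig,S}(B,\Delta)\to(B,\Delta)$ is the unique homomorphic extension of $\id_{(B,\Delta)}$ provided by freeness; concretely $\beta_B([t]_{\termeq{B}}) = \sem{t}^{\id}$, the interpretation in $B$ of the term $t$ with its $B$--constants read as themselves. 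On morphisms $K$ is the identity on underlying maps, and the Eilenberg--Moore and monad laws for $\beta_B$ follow from the uniqueness clause of Theorem \ref{term_is_free:thm}.

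For the inverse I would define $G:\EM(\qterm{\qsig,S})\to\Alg(\qsig,S)$ by equipping an Eilenberg--Moore algebra $\big((C,d_C),\gamma\big)$ with the operations
\[\sem{\op}_C(c_1,\dots,c_n) := \gamma\big([\op(c_1,\dots,c_n)]_{\termeq{C}}\big),\qquad \op:n:L_\op\in\qsig.\]
Three things must be checked for $G$ to land in $\Alg(\qsig,S)$. First, each $\sem{\op}_C$ is $L_\op$--nonexpansive: the map $(c_1,\dots,c_n)\mapsto[\op(c_1,\dots,c_n)]_{\termeq{C}}$ factors as $\sem{\op}\circ L_{\op}(\widehat{\eta}_{(C,d_C)})$, which is nonexpansive because $\widehat{\eta}$ is nonexpansive, $L_\op$ is a $\GMet$--functor, and $\sem{\op}$ on the term algebra is $L_\op$--nonexpansive by Lemma \ref{equation_extraction_lemma}(\ref{termlemma:termalgisaglg}); postcomposing with the nonexpansive $\gamma$ preserves this. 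Second, $\gamma$ is then a surjective (split by $\widehat{\eta}$) nonexpansive $\qsig$--homomorphism $\qterm{\qsig,S}(C,d_C)\twoheadrightarrow(C,d_C,\sem{\qsig}_C)$, preservation of operations being exactly the multiplication law $\gamma\circ\widehat{\mu}=\gamma\circ\qterm{\qsig,S}(\gamma)$ read off at the terms $\op([t_1],\dots,[t_n])$.

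The crux is the third point: $(C,d_C,\sem{\qsig}_C)$ satisfies every clause of $S$, and this is where basicness is essential. Let $H=\bigwedge_{i}\eqn_i\Rightarrow\eqn\in S$ be basic, so each premise $\eqn_i$ is $x_i=y_i$ or $x_i=_{\varepsilon_i}y_i$ between \emph{variables}, and let $\assign:X\to C$ satisfy all premises. The idea is to lift $\assign$ to $\tilde{\assign}:=\widehat{\eta}_{(C,d_C)}\circ\assign$ into the free algebra $\qterm{\qsig,S}(C,d_C)$, which already lies in $\Alg(\qsig,S)$ and hence satisfies $H$. Basicness guarantees that $\tilde{\assign}$ still satisfies the premises: an equational premise lifts because $\assign(x_i)=\assign(y_i)$ forces $[\assign(x_i)]_{\termeq{C}}=[\assign(y_i)]_{\termeq{C}}$, and a quantitative premise lifts because nonexpansiveness of $\widehat{\eta}$ gives $\qterm{\qsig,S}(d_C)([\assign(x_i)],[\assign(y_i)])\le d_C(\assign(x_i),\assign(y_i))\le\varepsilon_i$. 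Since the free algebra satisfies $H$, the conclusion $\eqn$ holds under $\tilde{\assign}$ there, and applying the nonexpansive homomorphism $\gamma$ together with $\gamma\circ\tilde{\assign}=\assign$ transports $\eqn$ down to $C$ (equalities are preserved by a homomorphism, distances do not increase under the nonexpansive $\gamma$). This is exactly the step that breaks for non--basic premises: a premise $s=_\varepsilon t$ between compound terms yields only $d_C(\sem{s}^{\assign},\sem{t}^{\assign})\le\varepsilon$, which gives no bound on the corresponding distance in the free algebra, so $\tilde{\assign}$ need not satisfy it.

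Finally I would verify that $G$ is functorial (an Eilenberg--Moore morphism is automatically a nonexpansive $\qsig$--homomorphism) and that $K$ and $G$ are mutually inverse. The composite $G\circ K$ recovers the original operations since $\beta_B([\op(b_1,\dots,b_n)]_{\termeq{B}})=\sem{\op}_B(b_1,\dots,b_n)$, and $K\circ G$ recovers the original structure map via $\beta_C=\gamma$, proved by induction on terms: the base case uses the unit law $\gamma\circ\widehat{\eta}=\id$, and the inductive step uses the multiplication law precisely as in the homomorphism check above. I expect the satisfaction of clauses to be the main obstacle, since it is the sole place the basic hypothesis enters and requires carefully matching each syntactic premise against the nonexpansiveness of $\widehat{\eta}$ and the homomorphism property of $\gamma$; the standing assumption that liftings preserve isometric embeddings is inherited at this stage through Lemma \ref{equation_extraction_lemma}.
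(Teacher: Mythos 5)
Your proof is correct and, at its core, coincides with the paper's: both directions of the isomorphism are given by exactly the same assignments (an Eilenberg--Moore algebra $\gamma$ is equipped with $\sem{\op}_\gamma(\vec c)=\gamma([\op(\vec c)])$, and a $(\qsig,S)$--algebra is sent to the homomorphic extension of its identity), and the crux --- satisfaction of the clauses in $S$, where basicness enters by lifting an assignment along $\widehat\eta$ into the free algebra and pushing the conclusion back down along the nonexpansive homomorphism $\gamma$ --- is argued exactly as in the paper, including your correct diagnosis of why non--basic premises break the lifting step. You deviate in two inessential but slightly cleaner ways. For $L_\op$--nonexpansiveness of $\sem{\op}_\gamma$ you factor the map through $L_\op(\widehat\eta)$ followed by the term algebra's operation, so that preservation of isometric embeddings is used only indirectly via Lemma \ref{equation_extraction_lemma}; the paper instead restricts $d$ to the finite set $\vec a\cup\vec b$, invokes the $L$--\textsf{NE} rule directly, and then uses preservation of isometric embeddings to identify $L_\op(\Delta)$ with $L_\op(d)$ --- the two computations give the same bound. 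For the functor $\Alg(\qsig,S)\to\EM(\qterm{\qsig,S})$ you invoke the comparison functor of the adjunction furnished by Theorem \ref{term_is_free:thm}, whereas the paper routes through the classical isomorphism $\Alg(\sig,E)\cong\EM(\term{\sig,E})$ and a monad functor in the sense of Street; your route is shorter, but it tacitly assumes that the monad induced by that adjunction is precisely $(\qterm{\qsig,S},\widehat\eta,\widehat\mu)$ of Theorem \ref{thm:qtermmonad}, a small verification worth recording explicitly. Neither change affects correctness.
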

\begin{proof}[Proof sketch]
    Let $(A,d,\alpha) \in \EM(\qterm{\qsig,S})$, we define the interpretations $\sem{\qsig}_{\alpha}$ as follows: for any $\op :n \in \sig$ and $\vec{a}\in A^n$,
    \[\sem{\op}_{\alpha}(\vec{a}) = \alpha([{\op(\vec{a})}])\]
    where $[t]$ stands for $[t]_{\termeq{A}}$.
    \todo{VV: removed angled brackets here}
    We claim that $(A,d,\sem{\qsig}_{\alpha}) \in \Alg(\qsig,S)$. First, we show $\sem{\op}_{\alpha}$ is $L_\op$--nonexpansive. Given $\vec{a}, \vec{b} \in L_\op(A,d)$, let $\Delta$ be the restriction of $d$ on $\vec{a}\cup \vec{b}$, we have
    \begin{align*}
        d(\alpha([{\op(\vec{a})}]), \alpha([{\op(\vec{b})}])) &\leq \qterm{\qsig,S}d([{\op(\vec{a})}],[{\op(\vec{b})}])\\
        &\leq L_\op(\Delta)(\vec{a},\vec{b})\\
        &= L_\op(d)(\vec{a},\vec{b}).
    \end{align*}
    The first inequality holds because $\alpha$ is nonexpansive, the second inequality uses the rule $L$--\textsf{NE}, and the equality is the fact that $L_\op$ preserves isometric embeddings.

    An adaptation of the argument in the proof of Theorem 4.2 in \cite{DBLP:conf/lics/BacciMPP18} shows $(A,d,\sem{\qsig}_{\alpha})$ satisfies the clauses in $S$. This defines a functor $\widehat{P}:\EM(\qterm{\qsig,S}) \rightarrow \Alg(\qsig,S)$ acting trivially on morphisms and sending $(A,d,\alpha)$ to $(A,d,\sem{\qsig}_\alpha)$.

    In the converse direction, let $\algA = (A,d,\sem{\qsig}) \in \Alg(\qsig,S)$, we define $\widehat{\alpha}_\algA: \qterm{\qsig,S}(A,d) \rightarrow (A,d)$ inductively as follows: for any $a \in A$, $\widehat{\alpha}_\algA([{a}]) = a$ and $\forall \op:n \in \sig$, $\forall t_1,\dots, t_n\in \term{\sig}A$,
    \[\widehat{\alpha}_\algA([{\op(t_1,\dots,t_n)}]) = \sem{\op}\left( \widehat{\alpha}_\algA([{t_1}]),\dots, \widehat{\alpha}_\algA([{t_n}]) \right).\]
    This defines a functor $\widehat{P}^{-1}:\Alg(\qsig,S) \rightarrow \EM(\qterm{\qsig,S})$. It actstrivially on morphisms and sends $\algA = (A,d,\sem{\qsig})$ to $(A,d,\widehat{\alpha}_\algA)$.

    The functor $\widehat{P}$ and $\widehat{P}^{-1}$ are inverses and we conclude the desired isomorphism.
\end{proof}

\section{Examples}\label{examples}


In Sections \ref{section_definitions} and \ref{section_termfree} we have introduced the new notions of lifted signatures $\qsig$ and quantitative $\qsig$--algebras, the deductive apparatus to reason about them, and we stated our main results: Theorem \ref{thm:soundness} (soundness), Theorem \ref{term_is_free:thm} (free algebras) and Theorem \ref{EM:theorem} ($\EM(\qterm{\qsig,S}) \cong \Alg(\qsig,S)$ for basic theories). We now show the applicability of our framework. 


\subsection{Applications already studied in the literature}\label{sec:examples:1}
As already pointed out, the framework of \cite{radu2016, DBLP:conf/lics/MardarePP17, DBLP:conf/lics/BacciMPP18} can  be seen as a special case of our framework when: (1) the generalised metric space $\GMet$ considered is $\Met$ and (2) all liftings in the lifted signature $\qsig$ are the $sup$--product lifting $L_\times$ (see Example \ref{example_1}). For several interesting examples of applications, more can be said.

We first recall some definitions. Given a set $A$, we let $\mD(A)$ denote the set of finitely supported probability distributions on $A$, i.e., functions  $\distr:A\rightarrow [0,1]$ such that $|\{ a\mid \distr(a) >0 \}|$ is finite. For a given $a\in A$, the Dirac distribution $\dirac{a}\in\mD(A)$ assigns $1$ to $a$, and $0$ to all other elements. 
Convex algebras are algebras for the following signature and set of axioms:
$$
 \Sigma = \{ +_p : 2\}_{p\in (0,1)} \ \ \ \ \  E = \begin{Bmatrix}x+_px  =  x,\;\;  x+_py  =  y+_{1-p}x,\\
  (x+_qy)+_pz = x+_{pq}(y+_{\frac{p(1-q)}{1-pq}}z)\end{Bmatrix} 
$$
It is well-known (see, e.g., \cite{jacobs:2010}) that $\mD(A)$ with operations defined as:
$$\sem{+_p}(\distr,\distrb) :=   a\mapsto \big(p\cdot\distr(a) + (1-p)\cdot\distrb(a)\big)$$
is (up to isomorphism) the free convex algebra on the set $A$.

The ($\Met$) quantitative theory of convex algebras from \cite{radu2016, DBLP:conf/lics/BacciMPP18}
can be formalised in our framework by taking $\GMet=\Met$, lifted signature $\qsig=\{ +_p : 2: L_\times\}_{p \in (0,1)}$
%
%
%
%
%
and as generating set of Horn clauses the axioms $E$ of convex algebras together with the clause:
$$\{x_1 =_{\varepsilon_1} y_1,  x_2 =_{\varepsilon_1} y_2\}\Rightarrow x_1 +_p x_2 =_{p\varepsilon_1 + (1-p)\varepsilon_2} y_1 +_p y_2$$
known as ``Kantorovich rule''.
Note that, since the inequality 
$$
p\varepsilon_1 + (1-p)\varepsilon_2 \leq \max\{\varepsilon_1, \varepsilon_2\} 
$$
holds for all $p,\varepsilon_1,\varepsilon_2\in [0,1]$, the Kantorovich rule strictly subsumes (using the \textsf{Max} rule) the $L$--\textsf{NE} rule for $+_p$, which only states (omitting some premises, cf. Example \ref{example_LNE}):
$$  \{x_1 =_{\varepsilon_1} y_1,  x_2 =_{\varepsilon_1} y_2\}\Rightarrow x_1 +_p x_2 =_{ \max\{\varepsilon_1, \varepsilon_2\} } y_1 +_p y_2. $$
Hence, in quantitative ($\Met$) convex algebras, the operation $\sem{+_p}$ is not merely $L_\times$--nonexpansive, as it needs to satisfy the stronger constraint of the Kantorovich rule.

Consider now, for every $p\in(0,1)$, the lifting $L_K^p$ of the binary product defined as follows:
\begin{align*}
&L_K^p: (A,d) \mapsto (A\times A,L_K^p(d)) \\
&L_K^p(d)\big( ( a_1, a_2),  (b_1, b_2)\big) = d_{K}\big(  \sem{+_p}( \dirac{a_1}, \dirac{a_2}) ,  \sem{+_p}( \dirac{b_1}, \dirac{b_2})\big)
\end{align*}
where $d_{K}$ is the well-known Kantorovich distance over distributions $\mD(A)$.
 This lifting is easily seen to preserve isometric embeddings.

Then it can be shown that the  ($\Met$) quantitative theory of convex algebras, axiomatised above, can also be presented as the theory over the lifted signature $\qsig_K=\{ +_p : 2: L_K^p\}_{p \in (0,1)}$, taking as generating set of Horn clauses only the set $E$ of axioms of convex algebras.
In other words, we have cast the Kantorovich rule as a $L$--NE rule, by choosing the appropriate lifting $L_K^p$ for every operation $+_p$.  Note that the remaining clauses are just the purely equational axioms of the ($\Set$) theory of convex algebras.

The same applies in several other interesting examples. For example, also the $(\Met)$ quantitative theory of convex semilattices of \cite{DBLP:conf/concur/MioV20,DBLP:conf/lics/MioSV21} can be presented as the $(\Met)$ quantitative theory with generating clauses just the equational axioms of convex semilattices, by choosing the appropriate liftings in the lifted signature. 

\subsection{No constraints on algebraic operations}\label{sec:examples:2}

Among the variants of the framework of   \cite{DBLP:conf/lics/BacciMPP18}  that have been considered in the literature, the work of \cite{BacciBLM18} is relevant in our discussion. Indeed, the authors have observed that certain fixed--point operations on metric spaces fail to be nonexpansive (up to the $\sup$--product lifting $L_\times$) and, as such, cannot be cast in the framework of  \cite{radu2016, DBLP:conf/lics/MardarePP17, DBLP:conf/lics/BacciMPP18}. The solution adopted in  \cite{BacciBLM18} is to drop entirely all constraints on the interpretation of the algebraic operations $\sem{\op}$ and allow arbitrary maps $\sem{\op}^n:A^n\rightarrow A$. 

This approach can be seen as a particular instance of our framework by taking $\GMet = \Met$ and using lifted signatures $\qsig$ where for all $\op:n:L_\op\in\qsig$ the lifting $L_\op$ is the ``discrete'' lifting defined as follows:
$$
L_{\op}(d)\big( ( a_1,\dots, a_n) ,  ( b_1,\dots, b_n)\big) = 
\begin{cases}
0 & \textnormal{if } \forall^n_{i=1}.\  a_i = b_i  \\
1 & \textnormal{otherwise} \
\end{cases}
$$
Indeed, with this choice of lifting, the $L$--NE rule 
    \begin{gather*}
              \begin{bprooftree}
            \AxiomC{$(\vec{x}\cup \vec{y},\Delta) \in \GMet$}
            \AxiomC{$\delta = L_\op(\Delta)(\vec{x},\vec{y})$}
            \RightLabel{$L$--\textsf{NE}}
            \BinaryInfC{$\left\{ w =_{\Delta(w,z)} z \mid w,z \in \vec{x}\cup\vec{y} \right\} \vdash \op(\vec{x}) =_\delta \op(\vec{y})$}
        \end{bprooftree}
            \end{gather*}
%
 is rendered useless, as it can always by substituted with instances of the clause $\textsf{1-bdd}$, if $\vec{x} \neq \vec{y}$, or with instances of the clause $\emptyset \vdash   \op(\vec{x})=_0  \op(\vec{y})$ (coming from the axioms of $\Met$) if $\vec{x} = \vec{y}$. 

Therefore our free algebra and isomorphism theorems from Section  \ref{section_termfree} hold for the theory developed in \cite{BacciBLM18} and for further variants that can be conceived. Such results could not be automatically derived from the original framework of  \cite{DBLP:conf/lics/BacciMPP18} only allowing for $L_\times$--nonexpansive operations.

\subsection{The  \L ukaszyk--Karmowski distance on probability distributions}\label{sec:examples:3}

In this subsection we develop our main example, already presented in the introduction: the axiomatisation of  the \L ukaszyk--Karmowski distance ($\LK{d}$) on probability distributions \cite{lukaszyk04}. The distance $\LK{d}$ has very recently found application in the field of representation learning and it is at the core of the definition of the MICo (``matching under independent couplings'') behavioural distance on Markov processes of \cite{DBLP:journals/corr/abs-2106-08229}.

Recall that a diffuse metric space $(A,d)\in\DMet$ is a set $A$ with a fuzzy relation $d:A\times A\rightarrow [0,1]$ satisfying reflexivity and triangular inequality, i.e., for all $a,b,c\in A$:
\[d(a,b) = d(b,a)  \qquad  d(a,c) \leq d(a, b)  +  d(b,c).  \]
The notion of diffuse metric has been introduced in \cite[\S 4.2]{DBLP:journals/corr/abs-2106-08229}. The following diagrams depict some diffuse metric spaces $(A,d)$ with $A$ finite.  

\begin{center}
\!\!\!\begin{tikzcd}
a \arrow["0"', no head, loop, distance=2em, in=125, out=55]
\end{tikzcd}
\begin{tikzcd}
a \arrow["\frac{1}{2}"', no head, loop, distance=2em, in=125, out=55]
\end{tikzcd}
\begin{tikzcd}
a \arrow["1"', no head, loop, distance=2em, in=125, out=55]
\end{tikzcd}
\begin{tikzcd}
a \arrow["1"', no head, loop, distance=2em, in=125, out=55] \arrow[r, "\frac{1}{2}", no head] & b \arrow["0"', no head, loop, distance=2em, in=125, out=55]
\end{tikzcd}
\begin{tikzcd}
a \arrow["0"', no head, loop, distance=2em, in=125, out=55] \arrow[r, "0", no head] & b \arrow["0"', no head, loop, distance=2em, in=125, out=55]
\end{tikzcd}
\end{center}

\begin{definition}
Let $(A,d)$ be a diffuse metric space. The \L ukaszyk--Karmowski distance is the fuzzy relation $\LK{d}$ on the set of finitely supported probability distributions $\mD (A)$ defined for any $\distr, \distrb \in \mD (A)$ as
\[\LK{d}(\distr,\distrb) = \sum_{x \in \supp{\distr}} \sum_{y \in \supp{\distrb}} \distr(x)\cdot \distrb(y)\cdot d(x,y).\]
\end{definition}
\begin{proposition}\label{prop:LKdiffuse} For any diffuse metric space $(A,d)$, the space $(\mD(A), \LK{d})$ is a diffuse metric space.
\end{proposition}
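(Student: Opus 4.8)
The plan is to verify that $\LK{d}$ satisfies the two axioms defining $\DMet$, namely symmetry \eqref{eq:symm} and the triangle inequality \eqref{eq:trineq}. Well-definedness as a fuzzy relation is immediate: since $d$ takes values in $[0,1]$ and the coefficients $\distr(x)\distrb(y)$ are nonnegative and sum to $1$ (as $\distr,\distrb$ are probability distributions), we have $0 \le \LK{d}(\distr,\distrb)\le 1$, so $\LK{d}\colon\mD(A)\times\mD(A)\to[0,1]$ is indeed a fuzzy relation.

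Symmetry is the easy half. Because $d$ is itself symmetric, $d(x,y)=d(y,x)$, and multiplication in $[0,1]$ is commutative, I would simply reindex the double sum defining $\LK{d}(\distr,\distrb)$, swapping the roles of $\distr$ and $\distrb$, to obtain $\LK{d}(\distr,\distrb)=\LK{d}(\distrb,\distr)$.

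The triangle inequality carries the real content, though it too is short once the right manipulation is spotted. Given $\distr,\distrb,\distrc\in\mD(A)$, I would first extend all sums to range over the (finite) union of the supports $\supp{\distr}\cup\supp{\distrb}\cup\supp{\distrc}$, which changes nothing since zero-weight terms vanish, so that I may freely use $\sum_z \distrb(z)=1$. The key step is then to insert this ``partition of unity'' coming from the middle distribution $\distrb$ into the expression for $\LK{d}(\distr,\distrc)$:
\[\LK{d}(\distr,\distrc)=\sum_{x}\sum_{w}\distr(x)\distrc(w)\,d(x,w)=\sum_{x}\sum_{z}\sum_{w}\distr(x)\distrb(z)\distrc(w)\,d(x,w).\]
Applying the pointwise triangle inequality $d(x,w)\le d(x,z)+d(z,w)$ of $d$ inside the triple sum and splitting into two sums, the first summand no longer depends on $w$ and the second no longer depends on $x$; collapsing $\sum_w\distrc(w)=1$ in the first and $\sum_x\distr(x)=1$ in the second yields exactly $\LK{d}(\distr,\distrb)+\LK{d}(\distrb,\distrc)$, as required.

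I expect no genuine obstacle: the only point demanding a little care is the bookkeeping between sums over individual supports and sums over a common finite index set, which is what justifies inserting $\sum_z\distrb(z)=1$ and applying the triangle inequality of $d$ termwise. I would also remark that reflexivity \eqref{eq:refl} of $d$ is never used, and is not claimed for $\LK{d}$---indeed $\LK{d}(\distr,\distr)$ need not vanish---in agreement with $\DMet$ dropping \eqref{eq:refl}.
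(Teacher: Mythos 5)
Your proof is correct and follows essentially the same route as the paper's: symmetry by inspection, and the triangle inequality by inserting the total mass $\sum_z\distrb(z)=1$ of the middle distribution into a triple sum, applying the pointwise triangle inequality of $d$ termwise, and collapsing the redundant sums. The paper merely runs the identical chain of (in)equalities in the opposite direction, starting from $\LK{d}(\distr,\distrb)+\LK{d}(\distrb,\distrc)$ and inserting $\sum_{(x,z)}\distr(x)\distrc(z)=1$ instead.
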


Recall from Subsection \ref{sec:examples:1} that convex algebras are algebras for the signature $\Sigma = \{ +_p : 2\}_{p\in (0,1)}$ satisfying the axioms $E$, and that the free convex algebra generated by $A$ is $\mD(A)$.
We now observe, however, that on probability distributions equipped with the \L ukaszyk--Karmowski distance the operation $\sem{+_p}$ generally fails to be nonexpansive (up to the $\sup$--product lifting $L_\times$).

\todo{VV: I redid the proof below, please recheck the numbers}
\begin{lemma}
\label{lem:plusp_not_ne}
There exists a diffuse metric space $(A,d)$ such that
the following map is not nonexpansive:
$$\sem{+_p}: (\mD(A), \LK{d}) \times (\mD(A), \LK{d}) \rightarrow (\mD(A), \LK{d})$$
\end{lemma}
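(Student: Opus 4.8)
The plan is to exhibit a concrete, minimal counterexample rather than argue abstractly. The conceptual engine is the failure of reflexivity for the \L ukaszyk--Karmowski distance: a distribution $\mu$ whose support contains two points at positive distance has $\LK{d}(\mu,\mu) > 0$, even when $d$ itself is reflexive. So I would pick arguments on the two sides of $\sem{+_p}$ that individually sit at distance $0$, yet whose $p$--combination spreads mass over two far-apart points, thereby forcing a strictly positive self-distance on the output.

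First I would take $A = \{a,b\}$ with the fuzzy relation $d$ given by $d(a,a) = d(b,b) = 0$ and $d(a,b) = d(b,a) = 1$. This $d$ is symmetric and satisfies the triangle inequality (both are immediate on a two-point set; in fact $d$ is an honest metric), so $(A,d) \in \DMet$ and Proposition \ref{prop:LKdiffuse} applies.

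Next I would choose the witnesses $\distr = \distrb = \dirac{a}$ and $\distr' = \distrb' = \dirac{b}$, and evaluate both sides of the nonexpansiveness inequality for the $\sup$--product on the domain. On the right-hand side, since Dirac distributions have singleton support, $\LK{d}(\dirac{a},\dirac{a}) = d(a,a) = 0$ and likewise $\LK{d}(\dirac{b},\dirac{b}) = 0$, so $\max\{\LK{d}(\distr,\distrb),\LK{d}(\distr',\distrb')\} = 0$. On the left-hand side, both $\sem{+_p}(\distr,\distr')$ and $\sem{+_p}(\distrb,\distrb')$ equal the distribution $\mu := \dirac{a} +_p \dirac{b}$ with $\mu(a) = p$ and $\mu(b) = 1-p$. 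Unfolding the definition gives
\[\LK{d}(\mu,\mu) = p(1-p)\,d(a,b) + (1-p)p\,d(b,a) = 2p(1-p),\]
which is strictly positive for every $p \in (0,1)$.

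Comparing the two computations yields $\LK{d}(\sem{+_p}(\distr,\distr'),\sem{+_p}(\distrb,\distrb')) = 2p(1-p) > 0 = \max\{\LK{d}(\distr,\distrb),\LK{d}(\distr',\distrb')\}$, exactly the violation of nonexpansiveness up to $L_\times$. I do not expect any real obstacle here: the only thing to get right is the choice of witnesses, and the point worth emphasising is that it is precisely the diffuseness (non-reflexivity) of $\LK{d}$ on non-Dirac distributions, rather than any subtlety of the convex structure, that breaks nonexpansiveness.
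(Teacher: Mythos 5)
Your proof is correct and follows essentially the same route as the paper's: a two-point base space, the witness pair $((\dirac{a},\dirac{b}),(\dirac{a},\dirac{b}))$, and a direct computation showing that the self-distance of $\dirac{a}+_p\dirac{b}$ under $\LK{d}$ exceeds the $\sup$--product distance of the inputs. The only difference is the choice of base distance: you use the discrete metric (giving $0 < 2p(1-p)$ uniformly in $p$), while the paper uses the non-reflexive $d(a,a)=d(b,b)=\tfrac{1}{2}$ (giving $\tfrac{1}{2}<\tfrac{3}{4}$ at $p=\tfrac{1}{2}$); both are valid diffuse metric spaces and the argument is otherwise identical.
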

\begin{proof}
Fix the $\DMet$ space $A=\{a,b\}$ with $d(a,a)=d(b,b)=\frac{1}{2}$ and $d(a,b)=d(b,a) = 1$. Take the Dirac distributions $\dirac{a},\dirac{b}\in\mD(A)$. We have $\LK{d}(\dirac{a},\dirac{a}) = \LK{d}(\dirac{b},\dirac{b}) = \frac{1}{2}$, and 
$$\LK{d}(\sem{+_{\frac 1 2}}( \dirac{a}, \dirac{b}),\sem{+_{\frac 1 2}}( \dirac{a}, \dirac{b})) = \tfrac{3}{4}.$$
Recall that $\LK{d} \times \LK{d}$ is the $\sup$--product lifting of $\LK{d}$.
Hence, $\sem{+_p}$ is not nonexpansive:
\begin{align*}
    \tfrac{1}{2} 
    &= \max\{ \LK{d}(\dirac{a},\dirac{a}), \LK{d}(\dirac{b},\dirac{b})\}\\
    &=  \LK{d} \times \LK{d} ( ( \dirac{a},\dirac{b} ),   ( \dirac{a},\dirac{b} ))\\
    &< \LK{d}(  \sem{+_{\frac 1 2}}( \dirac{a}, \dirac{b}) ,  \sem{+_{\frac 1 2}}( \dirac{a}, \dirac{b}))= \tfrac{3}{4}
\end{align*}
\end{proof}

We now introduce a new lifting $L_{\textnormal{\L K}}^p$ of the binary product ensuring that $\sem{+_p}$ is $L_{\textnormal{\L K}}^p$--nonexpansive.
\todo{removed definition}
For every $p\in(0,1)$, we define the $\DMet$ lifting of the binary product:
\begin{align*}
&L_{\textnormal{\L K}}^p: (A,d) \mapsto (A\times A, L_{\textnormal{\L K}}^p(d))\\
&L_{\textnormal{\L K}}^p(d)\big( ( a_1, a_2),  (b_1, b_2)\big)
=  \LK{d}\big(  \sem{+_p}(\dirac{a_1}, \dirac{a_2}),  \sem{+_p}(\dirac{b_1}, \dirac{b_2})\big).
\end{align*}
\begin{lemma}\label{lem:LKpliftpresiso}
The lifting $L_{\textnormal{\L K}}^p$ preserves isometric embeddings.
\end{lemma}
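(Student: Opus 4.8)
The plan is to unwind the two requirements hidden in ``preserves isometric embeddings''---that $L_{\textnormal{\L K}}^p(f)$ is injective and distance-preserving---and to verify each using the bilinearity of $\LK{d}$ in its two distribution arguments. First I would record that, since $L_{\textnormal{\L K}}^p$ is a lifting of the binary product functor $(-)^2$, its action on a morphism $f:(A,d)\to(B,\Delta)$ is forced by the commuting square defining a lifting ($U\circ L_{\textnormal{\L K}}^p = (-)^2\circ U$): on underlying sets $L_{\textnormal{\L K}}^p(f)=f\times f$, sending $(a_1,a_2)\mapsto(f(a_1),f(a_2))$. Injectivity of $f\times f$ is then immediate from injectivity of $f$, so the only real content is the isometry condition.

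For distance-preservation, the key observation is that $\LK{d}$ is bilinear in the pair $(\distr,\distrb)$: from the defining formula $\LK{d}(\distr,\distrb)=\sum_{x,y}\distr(x)\distrb(y)d(x,y)$ together with $\LK{d}(\dirac{a},\dirac{b})=d(a,b)$, expanding the convex combinations $\sem{+_p}(\dirac{a_1},\dirac{a_2})=p\dirac{a_1}+(1-p)\dirac{a_2}$ yields the closed form
\[
L_{\textnormal{\L K}}^p(d)\big((a_1,a_2),(b_1,b_2)\big)=\sum_{i,j\in\{1,2\}} w_i w_j\, d(a_i,b_j),
\qquad w_1=p,\ w_2=1-p.
\]
(The bilinear expansion remains valid even when some of the $a_i$ or $b_j$ coincide, since merging supports only regroups the weights.) I would then evaluate $L_{\textnormal{\L K}}^p(\Delta)$ on the image pair $\big((f(a_1),f(a_2)),(f(b_1),f(b_2))\big)$ via the same closed form, obtaining $\sum_{i,j} w_i w_j\,\Delta(f(a_i),f(b_j))$. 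Because $f$ is an isometry, $\Delta(f(a_i),f(b_j))=d(a_i,b_j)$ for every term, so the two weighted sums coincide termwise and $L_{\textnormal{\L K}}^p(f)$ preserves distances, completing the argument.

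An equivalent, more structural route would be to note that $\mD(f)$ is a convex-algebra homomorphism with $\mD(f)(\dirac{a})=\dirac{f(a)}$, hence $\mD(f)\big(\sem{+_p}(\dirac{a_1},\dirac{a_2})\big)=\sem{+_p}(\dirac{f(a_1)},\dirac{f(a_2)})$, reducing the claim to the functoriality fact $\LK{\Delta}(\mD(f)\distr,\mD(f)\distrb)=\LK{d}(\distr,\distrb)$ for any isometry $f$; the latter follows by reindexing the double sum along the support bijection induced by $f$. I expect no genuine obstacle: the statement is essentially a bilinearity-plus-isometry computation. The only point demanding care is the bookkeeping when the four points $a_1,a_2,b_1,b_2$ are not all distinct, which is why I would route the argument through bilinearity of $\LK{d}$ (so that coincidences are absorbed automatically) rather than through an ad hoc case split on support sizes.
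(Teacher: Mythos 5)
Your proof is correct and follows essentially the same route as the paper's: both reduce the claim to the bilinear expansion $L^p_{\textnormal{\L K}}(d)((a_1,a_2),(b_1,b_2)) = p^2 d(a_1,b_1) + p(1-p)d(a_1,b_2) + (1-p)p\,d(a_2,b_1) + (1-p)^2 d(a_2,b_2)$ and observe that an isometry preserves each summand. The only cosmetic difference is that the paper restricts without loss of generality to subset inclusions and compares $L^p_{\textnormal{\L K}}(d)$ with $L^p_{\textnormal{\L K}}(d|_{A'})$, whereas you work with a general isometric embedding $f$ directly and explicitly address injectivity and coinciding points.
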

\begin{lemma}\label{lem:convcombLKNE}
For every $\DMet$ space $(A,d)$, the operation $\sem{+_p}: \mD(A)\times\mD(A) \rightarrow  \mD(A)$ is $L_{\textnormal{\L K}}^p$--nonexpansive.
\end{lemma}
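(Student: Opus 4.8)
The plan is to verify $L_{\textnormal{\L K}}^p$--nonexpansiveness directly, and in fact to establish the stronger statement that $\sem{+_p}$ is an \emph{isometry} up to $L_{\textnormal{\L K}}^p$. Unfolding the definition of nonexpansiveness up to a lifting, the claim amounts to proving that for all $\distr_1,\distr_2,\distrb_1,\distrb_2 \in \mD(A)$,
\begin{equation}\label{eq:plusp-ne-goal}
\LK{d}\big(\sem{+_p}(\distr_1,\distr_2),\,\sem{+_p}(\distrb_1,\distrb_2)\big) \;\leq\; L_{\textnormal{\L K}}^p(\LK{d})\big((\distr_1,\distr_2),(\distrb_1,\distrb_2)\big),
\end{equation}
and I would argue that the two sides are in fact equal.

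First I would record a convenient closed form for the lifting. For any $\DMet$ space $(X,e)$ and $x_1,x_2,y_1,y_2 \in X$, unfolding $L_{\textnormal{\L K}}^p$ requires computing the \L K distance between the two two--point distributions $p\dirac{x_1}+(1-p)\dirac{x_2}$ and $p\dirac{y_1}+(1-p)\dirac{y_2}$, which by the defining sum of the \L K distance equals
\[
L_{\textnormal{\L K}}^p(e)\big((x_1,x_2),(y_1,y_2)\big) = p^2 e(x_1,y_1) + p(1-p)e(x_1,y_2) + p(1-p)e(x_2,y_1) + (1-p)^2 e(x_2,y_2).
\]
Instantiating $(X,e) := (\mD(A),\LK{d})$ with $x_i := \distr_i$ and $y_j := \distrb_j$---legitimate since $(\mD(A),\LK{d})$ is a $\DMet$ space by Proposition \ref{prop:LKdiffuse}---turns the right-hand side of \eqref{eq:plusp-ne-goal} into the bilinear expression
\[
p^2\LK{d}(\distr_1,\distrb_1) + p(1-p)\LK{d}(\distr_1,\distrb_2) + p(1-p)\LK{d}(\distr_2,\distrb_1) + (1-p)^2\LK{d}(\distr_2,\distrb_2).
\]

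Next I would expand the left-hand side. The key observation is that $(\distr,\distrb)\mapsto \LK{d}(\distr,\distrb) = \sum_{x,y}\distr(x)\distrb(y)\,d(x,y)$ is bilinear, while $\sem{+_p}$ acts linearly in each argument, since $\sem{+_p}(\distr_1,\distr_2)(x) = p\distr_1(x)+(1-p)\distr_2(x)$. Substituting this into the defining double sum of $\LK{d}\big(\sem{+_p}(\distr_1,\distr_2),\sem{+_p}(\distrb_1,\distrb_2)\big)$ and distributing produces exactly the same four-term bilinear expression as above, establishing \eqref{eq:plusp-ne-goal} with equality. The computation is routine; the only point requiring a little care is support bookkeeping, but this is harmless because the defining sum for $\LK{d}$ may be taken over all of $A\times A$ (the terms outside the supports contribute $0$), so bilinearity holds verbatim. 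Thus there is no genuine obstacle: the lifting $L_{\textnormal{\L K}}^p$ was engineered precisely so that both sides of \eqref{eq:plusp-ne-goal} collapse to the same bilinear form, making $\sem{+_p}$ an isometry---and hence, a fortiori, $L_{\textnormal{\L K}}^p$--nonexpansive.
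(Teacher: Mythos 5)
Your proposal is correct and follows essentially the same route as the paper: both expand $\LK{d}\big(\sem{+_p}(\distr_1,\distr_2),\sem{+_p}(\distrb_1,\distrb_2)\big)$ using bilinearity of the defining double sum and linearity of $+_p$, identify the result with the four-term expression $p^2\LK{d}(\distr_1,\distrb_1)+p(1-p)\LK{d}(\distr_1,\distrb_2)+p(1-p)\LK{d}(\distr_2,\distrb_1)+(1-p)^2\LK{d}(\distr_2,\distrb_2)$, and recognize this as $L^p_{\textnormal{\L K}}(\LK{d})\big((\distr_1,\distr_2),(\distrb_1,\distrb_2)\big)$, so that $\sem{+_p}$ is in fact an isometry up to the lifting. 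Your remark on support bookkeeping and the explicit appeal to Proposition \ref{prop:LKdiffuse} to justify applying the lifting to $(\mD(A),\LK{d})$ are both appropriate and implicit in the paper's computation.
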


We can then consider the following $\DMet$ lifting of the signature $\sig$ of convex algebras: 
$\qsig_{\textnormal{\L K}}:= \{ +_p:2:L_{\textnormal{\L K}}^p \}_{p\in (0,1)}$,
and the quantitative $\qsig_{\textnormal{\L K}}$--theory $\vdash_E$ generated by the set $E$ of axioms of convex algebras. In this theory the $L$--\textsf{NE} rule for $+_p$ takes the following form (omitting some premises, cf. Example \ref{example_LNE}):
$$\begin{Bmatrix}x_1=_{\varepsilon_{11}} x_1,  x_2=_{\varepsilon_{21}} x_1\\
  x_1=_{\varepsilon_{12}} y_2, y_2=_{\varepsilon_{22}} y_2
  \end{Bmatrix}
\vdash x_1+_p x_2 =_\delta y_1+_p y_2$$
with $\delta= p^2 \varepsilon_{11} + (1-p) p \varepsilon_{21} + p(1-p) \varepsilon_{12} +(1-p)^2 \varepsilon_{22}$.

By application of Theorem \ref{term_is_free:thm} we know that $\Alg(\qsig_{\textnormal{\L K}}, E)$ has free algebras on $(A,d)$, for every $\DMet$ space $(A,d)$, and that these are term algebras $\qterm{\qsig_{\textnormal{\L K}},S}(A,d)$ on which we can reason syntactically. The following theorem states that these term algebras are isomorphic to $(\mD(A),\LK{d},\sem{\sig})$, the collection of finitely supported probability distributions, with \L K distance and standard convex algebras operations.




\begin{theorem}\label{thm:freealgLK}
The free algebra in $\Alg(\qsig_{\textnormal{\L K}}, E)$ on a $\DMet$ space $(A,d)$ is $(\mD(A), \LK{d}, \sem{\sig})$.
\end{theorem}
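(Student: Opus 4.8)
The plan is to identify $(\mD(A),\LK{d},\sem{\sig})$ as the free algebra by showing it satisfies the universal property that, by Theorem~\ref{term_is_free:thm}, characterises $\qterm{\qsig_{\textnormal{\L K}},E}(A,d)$. Since free objects over $(A,d)$ are unique up to a unique isomorphism, establishing the universal property for $(\mD(A),\LK{d},\sem{\sig})$ yields the stated identification with the term algebra. Concretely I would (i) check that $(\mD(A),\LK{d},\sem{\sig})$ is an object of $\Alg(\qsig_{\textnormal{\L K}},E)$, (ii) exhibit the unit, and (iii) show that every nonexpansive map out of $(A,d)$ factors uniquely through the unit by a $\qsig_{\textnormal{\L K}}$--algebra homomorphism.

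For (i): $(\mD(A),\LK{d})$ is a diffuse metric space by Proposition~\ref{prop:LKdiffuse}; each $\sem{+_p}$ is $L_{\textnormal{\L K}}^p$--nonexpansive by Lemma~\ref{lem:convcombLKNE}; and the convex--algebra axioms $E$ hold because $\mD(A)$ is the carrier of the free convex algebra on the set $A$. For (ii), the unit is the Dirac embedding $a\mapsto\dirac{a}$, which is in fact an isometric embedding into $(\mD(A),\LK{d})$ since $\LK{d}(\dirac{a},\dirac{a'})=d(a,a')$, and hence a morphism of $\DMet$.

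For (iii), fix $(B,\Delta,\sem{\qsig})\in\Alg(\qsig_{\textnormal{\L K}},E)$ and a nonexpansive $f\colon(A,d)\to(B,\Delta)$. Forgetting distances, $B$ is a convex algebra and $\mD(A)$ is the free convex algebra on $A$, so there is a unique $\sig$--algebra homomorphism $f^\sharp\colon\mD(A)\to B$ with $f^\sharp(\dirac{a})=f(a)$, namely $f^\sharp(\distr)=\sum_i\lambda_i f(x_i)$ (convex combination in $B$) for $\distr=\sum_i\lambda_i\dirac{x_i}$. Uniqueness of the factoring homomorphism is immediate from this freeness. The only remaining, and main, obstacle is to show that $f^\sharp$ is nonexpansive, i.e.\ $\Delta(f^\sharp(\distr),f^\sharp(\distrb))\le\LK{d}(\distr,\distrb)$.

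I would establish this through the sharper bound
\[\Delta\bigl(f^\sharp(\distr),f^\sharp(\distrb)\bigr)\ \le\ \sum_{i,j}\lambda_i\kappa_j\,\Delta\bigl(f(x_i),f(y_j)\bigr),\]
for $\distr=\sum_i\lambda_i\dirac{x_i}$ and $\distrb=\sum_j\kappa_j\dirac{y_j}$; combined with nonexpansiveness of $f$ and the bilinearity of $\LK{d}$ this gives
\[\Delta\bigl(f^\sharp(\distr),f^\sharp(\distrb)\bigr)\ \le\ \sum_{i,j}\lambda_i\kappa_j\,d(x_i,y_j)\ =\ \LK{d}(\distr,\distrb).\]
The bound itself is proved by a double induction on support sizes, whose engine is the binary case of the $L$--\textsf{NE} rule for $+_p$: using the idempotency axiom $b=\sem{+_\kappa}(b,b)$ to reshape one argument and unfolding $L_{\textnormal{\L K}}^\kappa(\Delta)$ as a ŁK distance between two--point distributions on $B$, one obtains, for instance,
\[\Delta\bigl(f(x),\sem{+_\kappa}(c_1,c_2)\bigr)\ \le\ \kappa\,\Delta\bigl(f(x),c_1\bigr)+(1-\kappa)\,\Delta\bigl(f(x),c_2\bigr).\]
Iterating over the support of $\distrb$ yields the Dirac--versus--general case $\Delta(f(x),f^\sharp(\distrb))\le\sum_j\kappa_j\Delta(f(x),f(y_j))$, and a second induction on the support of $\distr$ --- decomposing $\distr=\dirac{x_1}+_{\lambda_1}\distr''$, rewriting $f^\sharp(\distrb)=\sem{+_{\lambda_1}}(f^\sharp(\distrb),f^\sharp(\distrb))$ by idempotency, applying the binary $L_{\textnormal{\L K}}^{\lambda_1}$--bound and then the two induction hypotheses --- delivers the general estimate. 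The crux is exactly this bootstrapping: the $L$--\textsf{NE} rule only constrains $\sem{+_p}$ on matched binary combinations, so the bilinear ŁK bound for arbitrary finite distributions must be recovered from it using idempotency and bilinearity. With $f^\sharp$ shown nonexpansive, the universal property holds, and the theorem follows from Theorem~\ref{term_is_free:thm}.
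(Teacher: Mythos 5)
Your proposal is correct and follows the same overall architecture as the paper's proof: check that $(\mD(A),\LK{d},\sem{\sig})$ is an object of $\Alg(\qsig_{\textnormal{\L K}},E)$ via Proposition~\ref{prop:LKdiffuse} and Lemma~\ref{lem:convcombLKNE}, take the Dirac map as the (isometric) unit, obtain the unique extension $f^\sharp$ from the $\Set$--level freeness of $\mD(A)$ as a convex algebra, and reduce everything to the nonexpansiveness of $f^\sharp$, proved by induction on support sizes with the binary $L_{\textnormal{\L K}}^p$--nonexpansiveness of $\sem{+_p}_B$ as the engine. The one point where you genuinely diverge is the organisation of that induction. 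The paper decomposes \emph{both} distributions simultaneously with a matched weight $p$ (chosen as a smallest weight occurring in one of them), so that $\distr=p\dirac{a}+\overline{p}\distr'$ and $\distrb=p\dirac{b}+\overline{p}\distrb'$ with strictly smaller total support; a single application of $L_{\textnormal{\L K}}^p$--nonexpansiveness then produces four cross terms, each closed directly by the induction hypothesis, and idempotency is never invoked. You instead bootstrap through the axiom $x+_p x=x$: first a Dirac--versus--general estimate, then peeling off one point of $\distr$ at a time, routed through the explicit intermediate bound $\Delta(f^\sharp(\distr),f^\sharp(\distrb))\le\sum_{i,j}\lambda_i\kappa_j\,\Delta(f(x_i),f(y_j))$. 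Both routes are sound (your derivation of $\Delta(f(x),\sem{+_\kappa}(c_1,c_2))\le\kappa\Delta(f(x),c_1)+(1-\kappa)\Delta(f(x),c_2)$ from idempotency and the unfolding of $L_{\textnormal{\L K}}^\kappa(\Delta)$ is correct, since the four coefficients collapse to $\kappa$ and $1-\kappa$). The paper's matched decomposition is shorter; your intermediate bound is slightly more informative, as it cleanly separates the statement ``the convex structure map of $B$ is nonexpansive for $\LK{\Delta}$'' from ``pushforward along $f$ is nonexpansive for the \L K liftings,'' with only the latter using nonexpansiveness of $f$. No gaps.
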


Hence we can say that the theory $\vdash_E$ axiomatises convex algebras $(\mD(A), \sem{\sig})$ with the \L K distance.

\section{Conclusion}\label{conclusion}

We have presented an extension of the quantitative algebra framework of 
\cite{DBLP:conf/lics/BacciMPP18,radu2016, DBLP:conf/lics/MardarePP17, DBLP:conf/calco/BacciMPP21,DBLP:conf/lics/MardarePP21} allowing us to reason on generalised metric spaces and on algebraic operations that are nonexpansive up to a lifting. This has allowed, as an illustrative example, the axiomatisation of the  \L ukaszyk--Karmowski distance on probability distributions.

One direction of future work is to explore if, and how, the recent results developed for the framework of \cite{DBLP:conf/lics/BacciMPP18} can be adapted and generalised to our setting. For example, tensor product of theories (\cite{DBLP:conf/calco/BacciMPP21} and techniques to handle fixedpoints \cite{DBLP:conf/lics/MardarePP21}.

In another direction, one can look for further generalisations. For example, it would be interesting to investigate how our treatment of $\GMet$ compares with the general relational apparatus of \cite{DBLP:conf/calco/FordMS21} and find a way to lift their more general arities. Another interesting possibility is to consider liftings of the entire signature functor 
     \[\sig := \coprod_{\op:n\in \sig} A^n \quad \sig(f):= \coprod_{\op:n \in \sig} f^n.\] 
 rather than just liftings of each of the operations. 

From a foundational standpoint, the question of  what classes of monads (e.g., finitary ones) can be constructed as term monads for quantitative theories is still open.

Generally, we plan to look at more interesting examples to drive our research on all these topics.

\bibliographystyle{plainnat}
\bibliography{biblio}
\newpage
\section{Appendix}\label{appendix}
\etocsettocstyle{}{}
\etocsettocdepth{3}
\localtableofcontents
\subsection{Background}
\subsubsection{Additional Result on Monads}
We need an additional result on monads in the full proof of Theorem \ref{EM:theorem}.
\begin{definition}[Monad functor]
    Let $(\mon,\eta^{\mon},\mu^{\mon})$ be a monad on $\Cat$ and $(\monb,\eta^{\monb},\mu^{\monb})$ a monad on $\Catb$ . A \emph{monad functor} from $\mon$ to $\monb$ is a pair $(F,\lambda)$ comprising a functor $F: \Cat \rightarrow \Catb$ and a natural transformation $\lambda: \monb F \Rightarrow F \mon$ such that (1) $\lambda \circ \eta^{\monb}F = F\eta^{\mon}$ and (2) $\lambda \circ \mu^{\monb}F = F\mu^{\mon} \circ \lambda \mon \circ \monb \lambda$.
\end{definition}
\begin{proposition}[\cite{Street1972}]\label{prop:monadfunctorfunctor}
    Let $(F,\lambda): \mon \rightarrow \monb$ be a monad functor, then there is a functor $F-\circ \lambda: \EM(\mon) \rightarrow \EM(\monb)$ sending an $\mon$--algebra $\alpha: \mon A \rightarrow A$ to $F\alpha \circ \lambda_A: \monb(FA) \rightarrow FA$ and a morphism $f: (A,\alpha) \rightarrow (A^\prime,\alpha^\prime)$ to $Ff$.
\end{proposition}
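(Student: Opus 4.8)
The plan is a straightforward diagram chase: I would verify in turn that the stated assignment sends objects of $\EM(\mon)$ to objects of $\EM(\monb)$, sends morphisms to morphisms, and is functorial. The only tools needed are the naturality of $\lambda:\monb F \Rightarrow F\mon$ and the two monad-functor coherence laws (1) $\lambda\circ\eta^{\monb}F = F\eta^{\mon}$ and (2) $\lambda\circ\mu^{\monb}F = F\mu^{\mon}\circ\lambda\mon\circ\monb\lambda$, read componentwise at a fixed object $A$.

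First I would fix an $\mon$--algebra $(A,\alpha)$ and set $\beta := F\alpha\circ\lambda_A:\monb(FA)\to FA$, then check the two $\monb$--algebra axioms of Definition \ref{def:algebra-of-a-monad}. The unit law is immediate: precomposing with $\eta^{\monb}_{FA}$ and using the $A$--component of (1), namely $\lambda_A\circ\eta^{\monb}_{FA} = F\eta^{\mon}_A$, gives $\beta\circ\eta^{\monb}_{FA} = F(\alpha\circ\eta^{\mon}_A) = F(\id_A) = \id_{FA}$ by the $\mon$--algebra unit law.

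The crux is the multiplication law $\beta\circ\monb\beta = \beta\circ\mu^{\monb}_{FA}$. Expanding $\monb\beta = \monb F\alpha\circ\monb\lambda_A$, I would first push the inner $\lambda$ past $F\alpha$ using naturality of $\lambda$ at $\alpha:\mon A\to A$, i.e. $\lambda_A\circ\monb F\alpha = F\mon\alpha\circ\lambda_{\mon A}$. This rewrites the left-hand side as $F(\alpha\circ\mon\alpha)\circ\lambda_{\mon A}\circ\monb\lambda_A$, whereupon the $\mon$--algebra associativity $\alpha\circ\mon\alpha = \alpha\circ\mu^{\mon}_A$ turns it into $F\alpha\circ\big(F\mu^{\mon}_A\circ\lambda_{\mon A}\circ\monb\lambda_A\big)$. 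Finally the $A$--component of (2), $F\mu^{\mon}_A\circ\lambda_{\mon A}\circ\monb\lambda_A = \lambda_A\circ\mu^{\monb}_{FA}$, collapses the bracketed tail and yields exactly $\beta\circ\mu^{\monb}_{FA}$. I expect this computation, and specifically keeping the whiskerings straight --- distinguishing $\lambda_{\mon A}$ (coming from $\lambda\mon$) from $\monb\lambda_A$ (coming from $\monb\lambda$) --- to be the main, if modest, obstacle; everything else is bookkeeping.

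For morphisms, given an $\mon$--algebra map $f:(A,\alpha)\to(A',\alpha')$, I would show $Ff$ is a $\monb$--algebra map between the image algebras by the chain $Ff\circ(F\alpha\circ\lambda_A) = F(\alpha'\circ\mon f)\circ\lambda_A = F\alpha'\circ\lambda_{A'}\circ\monb(Ff)$, using first the morphism condition $f\circ\alpha = \alpha'\circ\mon f$ and then naturality of $\lambda$ at $f$. Preservation of identities and composites is then inherited verbatim from $F$, since the action on arrows is literally $f\mapsto Ff$. This establishes that $F-\circ \lambda:\EM(\mon)\to\EM(\monb)$ is a well-defined functor.
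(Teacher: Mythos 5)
Your proof is correct, but it is worth noting that the paper itself does not prove this proposition at all: it attributes the statement to Street's \emph{The formal theory of monads} and its entire proof text is a pointer to a lower-level (string-diagrammatic) proof in Marsden's notes. Your proposal supplies exactly the componentwise verification that those references leave implicit, and every step checks out: the unit law follows from the $A$--component of coherence law (1) together with the $\mon$--algebra unit axiom; the multiplication law is correctly handled by first applying naturality of $\lambda$ at the morphism $\alpha:\mon A\to A$ (giving $\lambda_A\circ\monb F\alpha = F\mon\alpha\circ\lambda_{\mon A}$), then the $\mon$--algebra associativity, then the $A$--component of coherence law (2); and you correctly distinguish the whiskered components $\lambda_{\mon A} = (\lambda\mon)_A$ and $\monb(\lambda_A) = (\monb\lambda)_A$, which is indeed the one place where such a proof typically goes wrong. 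The morphism condition via naturality of $\lambda$ at $f$ and the observation that functoriality is inherited verbatim from $F$ complete the argument. What the paper's citation-based route buys is brevity and the 2-categorical context (Street's result holds for monads in any 2-category, of which this is the $\mathbf{Cat}$ instance); what your route buys is a self-contained, elementary proof requiring nothing beyond the definitions already stated in the paper.
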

\begin{proof}
    A lower level proof is drawn in \cite{Marsden2014}.
\end{proof}
\subsubsection{Products and Coproducts in $\GMet$}
In Section \ref{sec:gmets}, we gave the construction of products and coproducts in the category $\GMet$ with no proof nor reference to a proof. We prove this here. We first prove the base case in $\FRel$.
\begin{proposition}
    Let $\{(A_i,d_i) \mid i \in I\}$ be non-empty family of fuzzy relations $\{(A_i,d_i) \mid i \in I\}$, the product is $(\prod_{i \in I} A_i, \sup_{i \in I} d_i)$ with the usual projections and the coproduct is $\left( \coprod_{i \in I} A_i, \amalg_{i \in I} d_i \right)$ with the usual coprojections.
\end{proposition}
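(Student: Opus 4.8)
The plan is to verify the two universal properties directly, exploiting the fact that $\FRel$ sits over $\Set$ via the forgetful functor and that the carrier of both the candidate product and the candidate coproduct is precisely the $\Set$-level product $\prod_{i\in I} A_i$ and coproduct $\coprod_{i\in I} A_i$. Consequently the mediating maps demanded by each universal property are forced to be the unique $\Set$-theoretic ones, so the only real content is that (i) the projections/coprojections are nonexpansive, and (ii) the induced mediating map is again nonexpansive. Uniqueness then comes for free from uniqueness in $\Set$.

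For the product, I would first check that each projection $\pi_j\colon(\prod_i A_i,\sup_i d_i)\to(A_j,d_j)$ is a morphism: for all $\vec a,\vec b$ we have $d_j(\vec a_j,\vec b_j)\le\sup_i d_i(\vec a_i,\vec b_i)$ by definition of the supremum, so $\pi_j$ is nonexpansive. Then, given any $(C,\delta)\in\FRel$ together with nonexpansive maps $f_j\colon(C,\delta)\to(A_j,d_j)$, let $f\colon C\to\prod_i A_i$ be the unique tuple map with $\pi_j\circ f=f_j$. Nonexpansiveness of $f$ follows because each $d_i(f_i(c),f_i(c'))\le\delta(c,c')$, and taking the supremum over $i$ preserves this bound, giving $(\sup_i d_i)(f(c),f(c'))\le\delta(c,c')$. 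This is exactly where the choice of $\sup$ as the product fuzzy relation is essential: it is the least fuzzy relation making all projections nonexpansive.

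For the coproduct, I would first note that each coprojection $\kappa_j\colon(A_j,d_j)\to(\coprod_i A_i,\amalg_i d_i)$ is in fact isometric, since two points in its image lie in the same summand and the coproduct relation restricts to $d_j$ there. Given $(C,\delta)$ and nonexpansive maps $g_j\colon(A_j,d_j)\to(C,\delta)$, let $g\colon\coprod_i A_i\to C$ be the unique copairing with $g\circ\kappa_j=g_j$. To see $g$ is nonexpansive, take $a\in A_j$ and $b\in A_k$: if $j=k$ then $\delta(g(a),g(b))=\delta(g_j(a),g_j(b))\le d_j(a,b)=(\amalg_i d_i)(a,b)$; if $j\ne k$ then $(\amalg_i d_i)(a,b)=1\ge\delta(g(a),g(b))$ since $\delta$ takes values in $[0,1]$. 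The one point to get right — and essentially the only subtlety in the whole argument — is this cross-summand case: assigning the maximal value $1$ to pairs from distinct summands is precisely what makes nonexpansiveness automatic there, so no constraint is imposed between the images of different $g_j$. Uniqueness of $f$ and $g$ is immediate from the corresponding uniqueness in $\Set$, which completes both verifications. I do not expect any genuine obstacle; the argument is routine once one observes that the only nontrivial check is the off-diagonal block of the coproduct relation.
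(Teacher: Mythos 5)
Your proposal is correct and follows essentially the same route as the paper: both reduce existence and uniqueness of the mediating maps to the universal properties in $\Set$, then verify nonexpansiveness of the projections/coprojections and of the induced maps, with the same supremum estimate for the product and the same case split (same summand versus distinct summands, using that distances are bounded by $1$) for the coproduct. No gaps.
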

\begin{proof}
    \textbf{Product.} The projections $\pi_i$ are clearly nonexpansive. Let $(A,d) \xrightarrow{f_i} (A_i,d_i)$ be a family of nonexpansive maps. The universal property of the product in $\Set$ yields a unique function $!:(A,d) \rightarrow (\prod_{i \in I}A_i,\sup_{i \in I}d_i)$ such that $\pi_i \circ ! = f_i$. Now, it is enough to prove the function is nonexpansive. For any $a,b\in A$, we have \begin{align*}
        (\sup_{i \in I} d_i)(!(a), !(b) &= (\sup_{i \in I} d_i)((f_i(a))_{i \in I}, (f_i(b))_{i \in I})\\
        &= \sup_{i \in I} d_i(f_i(a),f_i(b))\\
        &\leq \sup_{i \in I} d(a,b)\\
        &= d(a,b)
    \end{align*}

    \textbf{Coproduct.} The coprojections $\kappa_i$ are clearly nonexpansive (they are in fact isometries). Let $(A_i,d_i) \xrightarrow{f_i} (A,d)$ be a family of nonexpansive maps. The universal property of the coproduct in $\Set$ yields a unique function $!:(\coprod_{i \in I}A_i, \amalg_{i \in I}d_i) \rightarrow (A,d)$ such that $! \circ \kappa_i = f_i$. Now, it is enough to prove the function is nonexpansive. For any $a \in A_j$ and $b \in A_k$, if $j\neq k$, \[(\amalg_{i \in I}d_i)(a,b) = 1 \geq d(!(a),!(b)).\]
    If $j = k$,
    \[(\amalg_{i \in I}d_i)(a,b) = d_j(a,b) \geq d(f_j(a),f_j(b)) = d(!(a),!(b)).\]
\end{proof}
Now, we prove that if each fuzzy relation $(A_i, d_i)$ satisfies an axiom of \eqref{eq:symm}--\eqref{eq:strongtrineq}, then the product and coproduct satisfy that axiom. It follows that (co)products in $\GMet$ exist and are computed just like those in $\FRel$.
\begin{proposition}
    Fix a subset $G$ of the axioms \eqref{eq:symm}--\eqref{eq:strongtrineq} and let $\{(A_i,d_i) \mid i \in I\}$ be non-empty family of fuzzy relations. If every $(A_i,d_i)$ satisfies the axioms in $G$, then the product and the coproduct satisfy the axioms in $G$.
\end{proposition}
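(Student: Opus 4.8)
The plan is to verify each of the five axioms \eqref{eq:symm}--\eqref{eq:strongtrineq} separately, and independently for the product $(\prod_{i} A_i, \sup_{i} d_i)$ and the coproduct $(\coprod_{i} A_i, \amalg_{i} d_i)$, assuming throughout that the chosen axiom holds in every factor $(A_i,d_i)$. Since the two constructions are unrelated and the axioms do not interact, this reduces to ten small checks; the product cases are uniform and short, whereas the coproduct cases need a case distinction on which summands the points inhabit.

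For the product, write $d_\times = \sup_{i} d_i$ and note that it takes values in $[0,1]$ since each $d_i$ does. Symmetry \eqref{eq:symm}, reflexivity \eqref{eq:refl} and identity of indiscernibles \eqref{eq:idofind} all follow pointwise: $\sup_{i} d_i(a_i,b_i)=\sup_{i} d_i(b_i,a_i)$; $\sup_{i} d_i(a_i,a_i)=\sup_{i} 0 = 0$; and $\sup_{i} d_i(a_i,b_i)=0$ forces $d_i(a_i,b_i)=0$, hence $a_i=b_i$, for every $i$, i.e. $\vec a=\vec b$. The two triangle inequalities rest on two order-theoretic facts about suprema applied to $x_i=d_i(a_i,b_i)$ and $y_i=d_i(b_i,c_i)$: subadditivity $\sup_{i}(x_i+y_i)\le \sup_{i} x_i + \sup_{i} y_i$ yields \eqref{eq:trineq}, and the identity $\sup_{i}\max\{x_i,y_i\}=\max\{\sup_{i} x_i,\sup_{i} y_i\}$ yields \eqref{eq:strongtrineq}.

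For the coproduct, write $\amalg d$ for $\amalg_{i} d_i$, so $\amalg d(a,b)=d_j(a,b)$ when $a,b$ both lie in $A_j$ and $\amalg d(a,b)=1$ otherwise. Symmetry is immediate (the off-diagonal value $1$ is symmetric, and within a summand we use symmetry of $d_j$); reflexivity holds because $a$ lies in the same summand as itself; and identity of indiscernibles holds because $\amalg d(a,b)=0$ rules out the off-diagonal case, forcing $a,b$ into a common $A_j$ with $d_j(a,b)=0$. For \eqref{eq:trineq} and \eqref{eq:strongtrineq} with $a\in A_j$, $b\in A_k$, $c\in A_l$, I would split on whether $j=l$. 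If $j=l$ and $k=j$, the inequality is exactly the corresponding axiom in $A_j$. If $j=l$ but $k\neq j$, the right-hand side is at least $1$ (one summand $1$ for the $\max$ form, two for the additive form), and $1\ge d_j(a,c)$. If $j\neq l$, then the left-hand side is $1$, while $k$ must differ from at least one of $j,l$, so the right-hand side also attains $1$; this last step crucially uses the global bound $d_i\le 1$.

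The main obstacle is precisely this coproduct triangle-inequality case analysis: one must confirm that whenever the endpoints $a,c$ lie in different summands (so $\amalg d(a,c)=1$), at least one of the intermediate steps $a\to b$ or $b\to c$ also crosses between summands and hence contributes a $1$ on the right-hand side. This is the only point where the specific off-diagonal value $1$, together with the bound $d_i\le 1$, is genuinely used, and it is what lets the same argument settle both the additive form \eqref{eq:trineq} and the $\max$-based form \eqref{eq:strongtrineq} at once. Everything else is routine.
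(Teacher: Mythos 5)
Your proposal is correct and follows essentially the same route as the paper: ten independent checks, with the product cases handled pointwise via the subadditivity of $\sup$ for \eqref{eq:trineq} and the interchange of $\sup$ and $\max$ for \eqref{eq:strongtrineq}, and the coproduct triangle inequalities settled by the same case analysis (your split on $j=\ell$ versus the paper's split on whether $j\neq k$ or $k\neq\ell$ is an immaterial reorganization, since in every non-diagonal case both arguments reduce to the bound $\text{LHS}\leq 1\leq \text{RHS}$).
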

\begin{proof}
    \textbf{Product.} We proceed with each axiom independently: we suppose each $(A_i,d_i)$ satisifies it and show $(\prod_{i \in I}A_i,\sup_{i \in I}d_i)$ also satisfies it. 
    \begin{enumerate}
        \item[\eqref{eq:symm}] For any $\vec{a},\vec{b} \in \prod_{i \in I}$, since $d_i(\vec{a}_i,\vec{b}_i) = d_i(\vec{b}_i,\vec{a}_i)$ for all $i\in I$, the two sets $\left\{ d_i(\vec{a}_i,\vec{b}_i) \mid i \in I \right\}$ and $\left\{ d_i(\vec{b}_i,\vec{a}_i) \right\}$ are equal, and so are their supremums. We conclude \[(\sup_{i \in I}d_i)(\vec{a},\vec{b}) = (\sup_{i \in I}d_i)(\vec{b},\vec{a}).\]
        \item[\eqref{eq:refl}] For any $\vec{a} \in \prod_{i\in I}A_i$, since $d_i(\vec{a}_i,\vec{a}_i) = 0$ for all $i\in I$, the two sets $\left\{ d_i(\vec{a}_i,\vec{a}_i) \mid i \in I \right\}$ and $\{0\}$ are equal, and so are their supremums. We conclude \[(\sup_{i \in I}d_i)(\vec{a},\vec{a}) = 0.\]
        \item[\eqref{eq:idofind}] For any $\vec{a},\vec{b} \in \prod_{i \in I}$, if $(\sup_{i \in I}d_i)(\vec{a},\vec{b}) =0$, we have 
        \[\forall i \in I, d_i(\vec{a}_i,\vec{b}_i) \leq (\sup_{i \in I}d_i)(\vec{a},\vec{b}) =0,\]
        which implies $\forall i \in I, \vec{a}_i = \vec{b}_i$. We conclude $\vec{a} = \vec{b}$.
        \item[\eqref{eq:trineq}] For any $\vec{a},\vec{b},\vec{c} \in \prod_{i \in I}$, we have 
        \begin{equation}\label{eq:suptrineq}
            d_i(\vec{a}_i,\vec{c}_i) \leq  d_i(\vec{a}_i,\vec{c}_i) + d_i(\vec{b}_i,\vec{c}_i),
        \end{equation}
        and using standard properties of the supremum, we obtain
        \begin{align*}
            (\sup_{i \in I}d_i)(\vec{a},\vec{c}) &= \sup_{i \in I}d_i(\vec{a}_i,\vec{c}_i)\\
            &\leq \sup_{i \in I}\left( d_i(\vec{a}_i,\vec{b}_i) + d_i(\vec{b}_i,\vec{c}_i) \right)\\
            &\leq \sup_{i \in I}d_i(\vec{a}_i,\vec{b}_i) + \sup_{i \in I}d_i(\vec{b}_i,\vec{c}_i)\\
            &= (\sup_{i \in I}d_i)(\vec{a},\vec{b})+ (\sup_{i \in I}d_i)(\vec{b},\vec{c})
        \end{align*}
        \item[\eqref{eq:strongtrineq}] For any $\vec{a},\vec{b},\vec{c} \in \prod_{i \in I}$, we have 
        \begin{equation}\label{eq:supstrongtrineq}
            d_i(\vec{a}_i,\vec{c}_i) \leq  \max\{d_i(\vec{a}_i,\vec{c}_i),d_i(\vec{b}_i,\vec{c}_i)\},
        \end{equation}
        and using standard properties of the supremum, we obtain
        \begin{align*}
            (\sup_{i \in I}d_i)(\vec{a},\vec{c}) &= \sup_{i \in I}d_i(\vec{a}_i,\vec{c}_i)\\
            &\leq \sup_{i \in I}\left( \max\{d_i(\vec{a}_i,\vec{b}_i) , d_i(\vec{b}_i,\vec{c}_i)\}\right)\\
            &\leq \max\left\{ \sup_{i \in I}d_i(\vec{a}_i,\vec{b}_i) , \sup_{i \in I}d_i(\vec{b}_i,\vec{c}_i) \right\}\\
            &= \max\left\{ (\sup_{i \in I}d_i)(\vec{a},\vec{b}), (\sup_{i \in I}d_i)(\vec{b},\vec{c}) \right\}
        \end{align*}
    \end{enumerate}
    \textbf{Coproduct.} We proceed with each axiom independently: we suppose each $(A_i,d_i)$ satisifies it and show $(\coprod_{i \in I}A_i,\amalg_{i \in I}d_i)$ also satisfies it.
    \begin{enumerate}
        \item[\eqref{eq:symm}] For any $a \in A_j$ and $b \in A_k$, if $j\neq k$, then \[(\amalg_{i \in I}d_i)(a,b) = 1 = (\amalg_{i \in I}d_i)(b,a),\] otherwise if $j=k$,
        \[(\amalg_{i \in I}d_i)(a,b) = d_j(a,b) = d_j(b,a) = (\amalg_{i \in I}d_i)(b,a).\]
        \item[\eqref{eq:refl}] For any $a \in A_j$, we have 
        \[(\amalg_{i \in I}d_i)(a,a) = d_j(a,a) = 0.\]
        \item[\eqref{eq:idofind}] For any $a \in A_j$ and $b \in A_k$, if $j\neq k$, then \[(\amalg_{i \in I}d_i)(a,b) = 1 \neq 0,\] otherwise if $j=k$,
        \[(\amalg_{i \in I}d_i)(a,b) = 0 \implies d_j(a,b) = 0 \implies a=b.\]
        \item[\eqref{eq:trineq}] For any $a \in A_j$ and $b \in A_k$, $c \in A_\ell$, if either $j\neq k$ or $k \neq \ell$, then \[(\amalg_{i \in I}d_i)(a,c) \leq  1\leq (\amalg_{i \in I}d_i)(a,b) + (\amalg_{i \in I}d_i)(b,c).\] otherwise if $j=k$ and $k= \ell$, then $j = \ell$, thus
        \begin{align*}
            (\amalg_{i \in I}d_i)(a,c) &= d_j(a,c)\\
            &\leq d_j(a,b) + d_j(b,c)\\
            &= (\amalg_{i \in I}d_i)(a,b)+ (\amalg_{i \in I}d_i)(b,c).
        \end{align*}
        \item[\eqref{eq:strongtrineq}] For any $a \in A_j$ and $b \in A_k$, $c \in A_\ell$, if either $j\neq k$ or $k \neq \ell$, then \[(\amalg_{i \in I}d_i)(a,c) \leq  1\leq \max\left\{  (\amalg_{i \in I}d_i)(a,b) ,(\amalg_{i \in I}d_i)(b,c) \right\}.\] otherwise if $j=k$ and $k= \ell$, then $j = \ell$, thus
        \begin{align*}
            (\amalg_{i \in I}d_i)(a,c) &= d_j(a,c)\\
            &\leq \max\left\{ d_j(a,b) + d_j(b,c) \right\}\\
            &= \max\left\{(\amalg_{i \in I}d_i)(a,b), (\amalg_{i \in I}d_i)(b,c)\right\}.
        \end{align*}
    \end{enumerate}
\end{proof}

\subsection{Proofs of Section \ref{section_termfree}}
\subsubsection{Proof of Theorem \ref{thm:qtermmonad}}
We divide the proof in multiple lemmas, the first being a technical lemma. In short, it states that if $f:(A,d) \rightarrow (B,\Delta)$ is nonexpansive, then any $(\qsig,S)$--algebra with carrier $(B,\Delta)$ can be extended to a $(\qsig_A,S_A)$--algebra.\footnote{This result was implicitly used in the proof of Theorem \ref{term_is_free:thm}.}
\begin{lemma}\label{lem:nonexpextendedalg}
    Let $\algB:= (B,\Delta,\sem{\qsig}_B)$ be a $(\qsig,S)$--algebra. For any nonexpansive map $f:(A,d) \rightarrow (B,\Delta)$, there is a $(\qsig_A,S_A)$--algebra $(B,\Delta,\sem{\qsig_A}_{B,f})$ such that for any $t \in \term{\sig}A = \term{\sig_A}\emptyset$, $\sem{t}^{\assign}_{B,f} = \sem{\term{\sig}f(t)}^{\id_B}_B$ with $\assign:\emptyset \rightarrow B$ being the only possible assignment ($\id_B$ is omitted in the sequel).
\end{lemma}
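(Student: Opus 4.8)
The plan is to make explicit the construction that was used implicitly in Step 2 of the proof of Theorem~\ref{term_is_free:thm}: equip $(B,\Delta)$ with a $\qsig_A$--algebra structure that reuses $\sem{\qsig}_B$ on the old operations and interprets each fresh constant through $f$. First I would set $\sem{\op}_{B,f} := \sem{\op}_B$ for every $\op:n:L_\op \in \qsig$, and $\sem{a}_{B,f} := f(a)$ for every constant $a \in A$. The only thing to check for this to be a bona fide $\qsig_A$--algebra is that each constant interpretation is $L_a$--nonexpansive. Since $L_a = L_\times$ is the $0$--ary product lifting, $L_a(B,\Delta)$ is the terminal object of $\GMet$, and any map out of the terminal object is automatically nonexpansive: if reflexivity holds in $\GMet$ then $\Delta(f(a),f(a)) = 0 = d_{\mathbf{1}}(\ast,\ast)$, and otherwise $d_{\mathbf{1}}(\ast,\ast) = 1$ dominates everything. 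The carrier space $(B,\Delta)$ is untouched, so it remains a $\GMet$ space.

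Next I would verify $(B,\Delta,\sem{\qsig_A}_{B,f}) \in \Alg(\qsig_A, S_A)$, where $S_A = S \cup \{\emptyset \vdash a =_{d(a,a')} a' \mid (a,a') \in A \times A\}$. The clauses inherited from $S$ mention only operation symbols of $\qsig$ and are evaluated under assignments $X \to B$; since the $\qsig$--part of $\sem{\qsig_A}_{B,f}$ agrees with $\algB$, these clauses are satisfied precisely because $\algB \satisfies S$. The genuinely new clauses are the constant equations $\emptyset \vdash a =_{d(a,a')} a'$, whose satisfaction amounts to $\Delta(\sem{a}_{B,f}, \sem{a'}_{B,f}) = \Delta(f(a), f(a')) \leq d(a,a')$, which is exactly the nonexpansiveness of $f$. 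This is the load--bearing step of the lemma: it is where the hypothesis on $f$ is consumed, and it explains why only nonexpansive maps extend to the enriched signature.

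Finally I would establish the displayed identity $\sem{t}_{B,f} = \sem{\term{\sig}f(t)}_B$ by structural induction on $t \in \term{\sig_A}\emptyset = \term{\sig}A$. In the base case $t = a$ with $a \in A$, both sides equal $f(a)$: the left by construction $\sem{a}_{B,f} = f(a)$, the right because $\term{\sig}f(a) = f(a) \in B$ and $\sem{f(a)}^{\id_B}_B = f(a)$. In the inductive step $t = \op(t_1,\dots,t_n)$ with $\op:n \in \sig$, I would unfold $\sem{\op(t_1,\dots,t_n)}_{B,f} = \sem{\op}_B(\sem{t_1}_{B,f},\dots,\sem{t_n}_{B,f})$ using $\sem{\op}_{B,f} = \sem{\op}_B$, apply the induction hypothesis $\sem{t_i}_{B,f} = \sem{\term{\sig}f(t_i)}_B$ to each argument, and recognise the result as $\sem{\op(\term{\sig}f(t_1),\dots,\term{\sig}f(t_n))}_B = \sem{\term{\sig}f(t)}_B$ by the inductive definition of $\term{\sig}f$ on compound terms. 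I do not expect a genuine obstacle in this last part; the care required is purely notational, namely keeping the identification $\term{\sig}A \cong \term{\sig_A}\emptyset$ and the two interpretation brackets $\sem{-}_{B,f}$ (closed $\sig_A$--terms in the extended algebra) and $\sem{-}^{\id_B}_B$ (terms over $B$ under the identity assignment) clearly separated.
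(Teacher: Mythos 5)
Your proposal is correct and follows essentially the same route as the paper's proof: define $\sem{\op}_{B,f}=\sem{\op}_B$ and $\sem{a}_{B,f}=f(a)$, observe that the clauses of $S$ are untouched while the new constant clauses reduce exactly to nonexpansiveness of $f$, and conclude the displayed identity by structural induction on terms. The only addition is your explicit check that the constant interpretations are $L_\times$--nonexpansive out of the terminal object, which the paper leaves implicit but which is indeed automatic.
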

\begin{proof}
    Setting $\sem{\op}_{B,f} = \sem{\op}_B$ for every $\op \in \sig$ and $\sem{a}_{B,f} = f(a)$ for every $a \in A$, we get all the interpretations in $\sem{\qsig_A}_{B,f}$. We write $\algB_f := (B,\Delta, \sem{\qsig_A}_{B,f})$. Note that $\algB_f$ still satisfies the clauses in $S$ as they do not involve the constants from $A$ and $\algB$ satisfied them. Moreover, since $f$ is nonexpansive,
    \[\Delta\left( \sem{a}_B , \sem{a^\prime}_B \right) = \Delta( f(a), f(a^\prime)) \leq d(a,a^\prime),\]
    thus $\algB_f$ satisfies the additional clauses on constants ($\emptyset \Rightarrow  a=_{d(a,a')} a'$) that belong to $S_A$. We conclude $\algB_f \in \Alg(\qsig_A,S_A)$.

    We proceed by induction for the last part of the lemma. If $t = a \in A$, we have $\sem{t}^{\assign}_{B,f} = f(a) = \sem{f(a)}_B$. If $t = \op(t_1,\dots, t_n)$ and we assume $\sem{t_i}^{\assign}_{B,f} = \sem{\term{\sig}f(t_i)}_B$ for each $1\leq i\leq n$, we have 
    \begin{align*}
        \sem{t}^{\assign}_{B,f} &= \sem{\op}_{B,f}(\sem{t_1}^{\assign}_{B,f},\dots, \sem{t_n}^{\assign}_{B,f})\\
        &= \sem{\op}_B(\sem{t_1}^{\assign}_{B,f},\dots, \sem{t_n}^{\assign}_{B,f})\\
        &= \sem{\op}_B(\sem{\term{\sig}f(t_1)}_B,\dots, \sem{\term{\sig}f(t_n)}_B)\\
        &= \sem{\term{\sig}f(t)}_B.
    \end{align*}
\end{proof}

\begin{lemma}
    The map $\widehat{\eta}_{(A,d)}$ is nonexpansive.
\end{lemma}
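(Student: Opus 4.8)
The plan is to unfold the definition of nonexpansiveness and reduce the claim to a single invocation of Lemma \ref{equation_extraction_lemma}. By construction, $\widehat{\eta}_{(A,d)}$ sends $a \in A$ to $[a]_{\termeq{A}}$, and the fuzzy relation on the codomain $\qterm{\qsig,S}(A,d)$ is $\qterm{\qsig,S}d = d_{\vdash_{S_A}}$. Hence, proving that $\widehat{\eta}_{(A,d)}\colon (A,d) \to \qterm{\qsig,S}(A,d)$ is nonexpansive amounts to establishing the inequality $d_{\vdash_{S_A}}([a]_{\termeq{A}}, [a']_{\termeq{A}}) \leq d(a,a')$ for every pair $a,a' \in A$. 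Here I would note that $d_{\vdash_{S_A}}$ is well defined on $\termeq{A}$--equivalence classes by Item \ref{termlemma:fuzzyrel:comp} of Lemma \ref{equation_extraction_lemma}, so this inequality is meaningful.

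The key step is to recall how the extended theory $\vdash_{S_A}$ was defined: it is the $\GMet$ theory generated by $\vdash_S$ together with the clauses $\emptyset \vdash a =_{d(a,a')} a'$ ranging over all $(a,a') \in A \times A$, where each $a \in A$ is identified with the corresponding constant of $\qsig_A$. In particular, for fixed $a,a'$ the quantitative equation $a =_{d(a,a')} a'$ is the conclusion of a premise-free Horn clause belonging to $\vdash_{S_A}$, so by the definition of $\mathrm{QE}(\vdash_{S_A})$ in Definition \ref{equation_extraction} we have $(a =_{d(a,a')} a') \in \mathrm{QE}(\vdash_{S_A})$.

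Finally, I would apply Item \ref{termlemma:fuzzyrel:arch} of Lemma \ref{equation_extraction_lemma}, which characterises the induced distance by $d_{\vdash_{S_A}}(s,t) \leq \varepsilon$ if and only if $(s =_\varepsilon t) \in \mathrm{QE}(\vdash_{S_A})$. Instantiating $s = a$, $t = a'$, and $\varepsilon = d(a,a')$ gives precisely $d_{\vdash_{S_A}}([a]_{\termeq{A}}, [a']_{\termeq{A}}) \leq d(a,a')$, as required. I do not expect any genuine obstacle in this argument: all the substance has already been absorbed into the construction of the theory extension $\vdash_{S_A}$ (which forces the constant-distance clauses into the theory) and into the Archimedean characterisation of $d_{\vdash_{S_A}}$ from Lemma \ref{equation_extraction_lemma}; the only point requiring a moment of care is the passage to equivalence classes, which is licensed by Item \ref{termlemma:fuzzyrel:comp}.
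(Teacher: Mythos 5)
Your proof is correct, and it is the direct argument one would hope for: the generating clause $\emptyset\vdash a=_{d(a,a')}a'$ of the extended theory $\vdash_{S_A}$ places $a=_{d(a,a')}a'$ in $\mathrm{QE}(\vdash_{S_A})$, and Items \ref{termlemma:fuzzyrel:arch} and \ref{termlemma:fuzzyrel:comp} of Lemma \ref{equation_extraction_lemma} (indeed, for the direction you need, just the definition of $d_{\vdash_{S_A}}$ as an infimum) then give $\qterm{\qsig,S}d([a]_{\termeq{A}},[a']_{\termeq{A}})\leq d(a,a')$. The paper's appendix proof takes a genuinely different and more roundabout route: it invokes Lemma \ref{lem:nonexpextendedalg} for an auxiliary nonexpansive map $f\colon(A,d)\to(B,\Delta)$ and derives a chain of implications about $[\term{\sig}f(s)]$ and $[\term{\sig}f(t)]$ --- a chain whose conclusion is really the content of Lemma \ref{lem:termf} (well-definedness and nonexpansiveness of $\qterm{\qsig,S}(f)$) rather than a statement about $\widehat{\eta}_{(A,d)}$ itself. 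Moreover, the paper's opening step asserts $\qterm{\qsig,S}\Delta([f(a)]_{\termeq{B}},[f(a')]_{\termeq{B}})\leq\Delta(f(a),f(a'))$ without justification, and that is exactly the inequality (for the space $(B,\Delta)$) that your argument establishes. So your proof is not only valid but more elementary, and it supplies the basic fact on which the paper's own write-up silently relies; what the paper's detour buys is the machinery for the neighbouring functoriality lemma, not anything needed for $\widehat{\eta}$.
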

\begin{proof}
    Apply Lemma \ref{lem:nonexpextendedalg} to the term algebra $(\qterm{\qsig,S}B,\qterm{\qsig,S}\Delta, \sem{\qsig})$ and the map $f': A \rightarrow \qterm{\qsig,S}B$ defined by $a \mapsto [f(a)]_{\termeq{B}}$ which is nonexpansive as 
    \[\qterm{\qsig,S}\Delta([f(a)]_{\termeq{B}},[f(a')]_{\termeq{B}})\leq \Delta(f(a),f(a')) \leq d(a,a').\]\todo{Possible space saving.}
    We find that $(\qterm{\qsig,S}B,\qterm{\qsig,S}\Delta, \sem{\qsig}_f)$ satisfies all the clauses in $S_A$ and for any $t \in \term{\sig}A$, $\sem{t}^\assign_f = \sem{\term{\sig}f(t)} = [\term{\sig}f(t)]_{\termeq{B}}$ (with $\assign: \emptyset \rightarrow \qterm{\qsig,S}B$). We obtain the following implications (we leave the equivalences implicit) which prove the lemma.
    \begin{align*}
        &[s] = [t] &&\qterm{\qsig,S}d([s], [t])\leq \varepsilon\\
        &\Leftrightarrow \emptyset \relgen{S_A} s = t &&\Leftrightarrow \emptyset \relgen{S_A} s =_\varepsilon t\\
        &\Rightarrow \sem{s}^\assign_f = \sem{t}^\assign_f &&\Rightarrow \qterm{\qsig,S}\Delta(\sem{s}^\assign_f, \sem{t}^\assign_f)\leq \varepsilon\\
        &\Rightarrow [\term{\sig}f(s)] = [\term{\sig}f(t)] &&\Rightarrow \qterm{\qsig,S}\Delta([\term{\sig}f(s)], [\term{\sig}f(t)])\leq \varepsilon
    \end{align*}
\end{proof}
\begin{lemma}\label{lem:munonexp}
    The map $\widehat{\mu}_{(A,d)}$ is well-defined on equivalence classes and nonexpansive.
\end{lemma}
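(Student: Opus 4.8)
The plan is to avoid manipulating representatives directly and instead to identify $\widehat{\mu}_{(A,d)}$ with a homomorphism produced by the freeness theorem, so that both claimed properties follow at once. The crucial observation is that, by definition of the functor $\qterm{\qsig,S}$, the domain $\qterm{\qsig,S}\qterm{\qsig,S}(A,d)$ is exactly the free algebra $\qterm{\qsig,S}(\qterm{\qsig,S}A,\qterm{\qsig,S}d)$ on the $\GMet$ space underlying the algebra $\qterm{\qsig,S}(A,d)$.

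First I would apply Theorem \ref{term_is_free:thm} with source $\GMet$ space $(\qterm{\qsig,S}A,\qterm{\qsig,S}d)$, target algebra $\qterm{\qsig,S}(A,d)\in\Alg(\qsig,S)$, and the (nonexpansive) identity map $\id$ on $(\qterm{\qsig,S}A,\qterm{\qsig,S}d)$. This yields a unique $\qsig$--algebra homomorphism $\id^{*}\colon \qterm{\qsig,S}\qterm{\qsig,S}(A,d)\rightarrow \qterm{\qsig,S}(A,d)$ with $\id^{*}\circ\widehat{\eta}=\id$. Since $\id^{*}$ is a morphism in $\Alg(\qsig,S)$, it is in particular a genuine (well-defined) function on $\termeq{\qterm{\qsig,S}A}$--equivalence classes and is nonexpansive as a map of $\GMet$ spaces.

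Second I would prove $\id^{*}=\widehat{\mu}_{(A,d)}$, which transfers these two properties to $\widehat{\mu}_{(A,d)}$. By the construction of $f^{*}$ in the proof of Theorem \ref{term_is_free:thm} (Step 3), $\id^{*}$ is the homomorphic extension interpreting each added constant $[t]_{\termeq{A}}$ as $\id([t]_{\termeq{A}})=[t]_{\termeq{A}}$; explicitly, $\id^{*}$ sends the class of a term $T$ over $\qterm{\qsig,S}A$ to the interpretation $\sem{T}$ of $T$ in $\qterm{\qsig,S}(A,d)$. A routine induction on the structure of $T$, using that the operations of the term algebra act purely syntactically ($\sem{\op}([t_1]_{\termeq{A}},\dots,[t_n]_{\termeq{A}})=[\op(t_1,\dots,t_n)]_{\termeq{A}}$), shows that this interpretation equals the flattened term $[t(t_1,\dots,t_n)]_{\termeq{A}}$, which is precisely the value assigned by $\widehat{\mu}_{(A,d)}$. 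Because $\id^{*}$ is already known to be well-defined, the flattening formula defining $\widehat{\mu}_{(A,d)}$ is independent of the chosen representative $T$, and $\widehat{\mu}_{(A,d)}=\id^{*}$ is nonexpansive.

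The main obstacle is bookkeeping rather than conceptual difficulty: one must carefully unwind the definition of $\id^{*}$ from the freeness proof and line it up with the explicit flattening formula, keeping track of the identification $\term{\sig}(\qterm{\qsig,S}A)\cong \term{\sig_{\qterm{\qsig,S}A}}\emptyset$ and of the two distinct quotients $\termeq{A}$ (inner) and $\termeq{\qterm{\qsig,S}A}$ (outer). An alternative, fully elementary route would verify directly that $\termeq{\qterm{\qsig,S}A}$--equivalent representatives flatten to $\termeq{A}$--equivalent terms and that flattening is nonexpansive, invoking the congruence clauses $\textsf{Comp}_\ell,\textsf{Comp}_r$ and the $L$--\textsf{NE} rule; but routing through Theorem \ref{term_is_free:thm} reuses work already in place and is considerably shorter.
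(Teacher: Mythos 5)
Your proof is correct, but it takes a different route from the paper's. The paper does \emph{not} invoke the freeness theorem here: it applies Lemma \ref{lem:nonexpextendedalg} to the term algebra $(\qterm{\qsig,S}A,\qterm{\qsig,S}d,\sem{\qsig})$ and the identity map on $\qterm{\qsig,S}A$, obtaining a model $(\qterm{\qsig,S}A,\qterm{\qsig,S}d,\sem{\qsig}_{\id})$ of the extended theory $\vdash_{S_{\qterm{\qsig,S}A}}$ in which the interpretation $\sem{t}^{\assign}_{\id}$ of a term $t$ over $\qterm{\qsig,S}A$ is exactly $\widehat{\mu}_{(A,d)}([t])$; it then chains the equivalence $\qterm{\qsig,S}(\qterm{\qsig,S}d)([s],[t])\leq\varepsilon \Leftrightarrow \emptyset\relgen{S_{\qterm{\qsig,S}A}} s=_\varepsilon t$ with soundness of the derived (quantitative) equations in that model to get nonexpansiveness, and the analogous chain for equations to get well-definedness. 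Your approach instead packages the same ingredients through the universal property: it is the standard ``multiplication as $U\varepsilon F$'' argument, with $\id^*$ playing the role of the counit at the free algebra. This is legitimate and arguably cleaner, but it inverts the paper's order of dependencies: Theorem \ref{thm:qtermmonad} (of which this lemma is a part) is established before Theorem \ref{term_is_free:thm}, so you are obliged to check that the proof of Theorem \ref{term_is_free:thm} nowhere uses $\widehat{\mu}$ --- it does not (it relies only on Lemma \ref{equation_extraction_lemma}, the extension Lemma \ref{lem:nonexpextendedalg}, classical freeness of $\term{\sig_A,E}\emptyset$, and soundness), so the reordering is harmless, but this should be said explicitly rather than left implicit. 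Your identification $\id^*=\widehat{\mu}_{(A,d)}$ by induction on term structure is sound, and you correctly observe that it simultaneously disposes of the choice of outer representative (via well-definedness of $\id^*$) and of inner representatives (via well-definedness of $\sem{\op}$ on $\termeq{A}$--classes). What the paper's route buys is independence from the freeness theorem and a shorter, self-contained argument; what yours buys is reuse of the universal property and a conceptual explanation of \emph{why} $\widehat{\mu}$ must be a morphism.
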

\begin{proof}
    Apply Lemma \ref{lem:nonexpextendedalg} to the term algebra $(\qterm{\qsig,S}A,\qterm{\qsig,S}d, \sem{\qsig})$ and the identity $\id: \qterm{\qsig,S}A \rightarrow \qterm{\qsig,S}A$ which is nonexpansive. We find that $(\qterm{\qsig,S}A,\qterm{\qsig,S}d, \sem{\qsig}_{\id})$ satisfies all the clauses in $S_{\qterm{\qsig,S}A}$ and for any $t \in \term{\sig}(\qterm{\qsig}A)$, $\sem{t}^\assign_{\id} = \sem{\term{\sig}\id(t)} = [\term{\sig}\id(t)]_{\termeq{A}}$ (with $\assign: \emptyset \rightarrow \qterm{\qsig,S}A$). We obtain the following implications which prove nonexpansiveness of $\widehat{\mu}_{(a,d)}$ (well-definedness is proven similarly).
    \begin{align*}
        &\qterm{\qsig,S}\left( \qterm{\qsig,S}d \right)([s]_{\termeq{\qterm{\qsig,S}A}}, [t]_{\termeq{\qterm{\qsig,S}A}})\leq \varepsilon\\
        &\Leftrightarrow \emptyset \relgen{S_{\qterm{\qsig,S}A}} s =_\varepsilon t\\
        &\Rightarrow \qterm{\qsig,S}d(\sem{s}^\assign_{\id}, \sem{t}^\assign_{\id})\leq \varepsilon\\
        &\Rightarrow \qterm{\qsig,S}d([\term{\sig}\id(s)]_{\termeq{A}}, [\term{\sig}\id(t)]_{\termeq{A}})\leq \varepsilon\\
        &\Rightarrow \qterm{\qsig,S}d(\widehat{\mu}_{(A,d)}([s]_{\termeq{\qterm{\qsig,S}A}}), \widehat{\mu}_{(A,d)}([t]_{\termeq{\qterm{\qsig,S}A}}))\leq \varepsilon
    \end{align*}
    The last implication holds by $[\term{\sig}\id(t)]_{\termeq{A}} = \widehat{\mu}_{(A,d)}([t]_{\termeq{\qterm{\qsig,S}A}})$.
\end{proof}
The associativity of $\widehat{\mu}$, the unitality of $\widehat{\eta}$ and the naturality of both all follow from their counterpart for $\mu^{\sig}$ and $\eta^{\sig}$. This concludes the proof of Theorem \ref{thm:qtermmonad}.

\subsubsection{Proof of Theorem \ref{EM:theorem}}
Let $E=\mathrm{E}(\vdash_S)$. In order to construct the isomorphism $\EM(\qterm{\qsig,S}) \cong \Alg(\qsig,S)$ we 
will make use of the isomorphism $P:\EM(\term{\sig,E}) \cong \Alg(\sig,E):P^{-1}$ which exists by Proposition \ref{prop:algtermisalgsigeq}. We will also make use of the forgetful functor $\forget : \Alg(\qsig,S) \rightarrow \Alg(\sig,E)$ 
    that forgets about the generalized metric space structure and the fact some clauses in $S$ are satisfied. Our proof that $\EM(\qterm{\qsig,S}) \cong \Alg(\qsig,S)$ is divided in three key steps. 
    
In (Step 1) we construct a functor $F: \EM(\qterm{\qsig,S}) \rightarrow \EM(\term{\sig,E})$ so that we have the following picture.

\begin{equation}\label{diag:summary}
\begin{tikzcd}
    {\EM(\term{\qsig,S})} & {\Alg(\qsig,S)} \\
    {\EM(\term{\sig,E})} & {\Alg(\sig,E)}
    \arrow["F"', from=1-1, to=2-1]
    \arrow["U", from=1-2, to=2-2]
    \arrow["P"', shift right=1, from=2-1, to=2-2]
    \arrow["{P^{-1}}"', shift right=1, from=2-2, to=2-1]
\end{tikzcd}
\end{equation}
In (Step 2), we prove that:
\begin{enumerate}
\item for any $(A,d,\alpha) \in \EM(\qterm{\qsig,S})$, there exists $(A,d,\sem{\qsig}_{\alpha}) \in \Alg(\qsig,S)$ such that $PF(A,d,\alpha) = \forget(A,d,\sem{\qsig}_{\alpha})$, and 
\item for any $\algA = (A,d,\sem{\qsig}) \in \Alg(\qsig,S)$, there exists $(A,d,\alpha_{\algA})$ such that $P^{-1}\forget(\algA) = F(A,d,\widehat{\alpha}_\algA)$. 
\end{enumerate}
Finally, in (Step 3) we conclude $\widehat{P}$ and $\widehat{P}^{-1}$ acting trivially on morphisms and as below on objects (with the notation introduced in Step 2) define functors that are inverse to each other.
\begin{align*}
    \widehat{P}:&\EM(\qterm{\qsig,S}) \rightarrow \Alg(\qsig,S)&\EM(\qterm{\qsig,S}) \leftarrow \Alg(\qsig,S):\widehat{P}^{-1}\\
    &(A,d,\alpha) \mapsto (A,d,\sem{\qsig}_{\alpha})&(A,d,\widehat{\alpha}_{\algA}) \mapsfrom (A,d,\sem{\qsig})= \algA\\
\end{align*}
Unrolling the definitions, it will emerge that \eqref{diag:liftingpres} commutes.
\begin{equation}\label{diag:liftingpres}
    \begin{tikzcd}
        {\EM(\term{\qsig,S})} & {\Alg(\qsig,S)} \\
        {\EM(\term{\sig,E})} & {\Alg(\sig,E)}
        \arrow["F"', from=1-1, to=2-1]
        \arrow["U", from=1-2, to=2-2]
        \arrow["P"', shift right=1, from=2-1, to=2-2]
        \arrow["{P^{-1}}"', shift right=1, from=2-2, to=2-1]
        \arrow["{\widehat{P}}"', shift right=1, from=1-1, to=1-2]
        \arrow["{\widehat{P}^{-1}}"', shift right=1, from=1-2, to=1-1]
    \end{tikzcd}
\end{equation}
Then, using the fact that $P$ and $P^{-1}$ are inverses and the fact that the vertical functors forget information that is not modified by $\widehat{P}$ and $\widehat{P}^{-1}$, we can infer the latter pair are inverses too. We conclude $\EM(\qterm{\qsig,S}) \cong \Alg(\qsig,S)$.

\paragraph{Step 1: Construction of $F$.} The functor $F$ will be constructed, by application of Proposition \ref{prop:monadfunctorfunctor} ($F= (\forget-\circ \collapse{})$), by proving that $(U,\collapse{})$ is a monad functor from $\qterm{\qsig,S}$ to $\term{\sig,E}$ where $U:\GMet\rightarrow \Set$ is the forgetful functor and the natural transformation $$\collapse{(A,d)}: \term{\sig,E}A \rightarrow \term{\sig}A/{\termeq{S_A}}$$ is defined as:
$$
[t]_E   \stackrel{\collapse{(A,d)}}{\mapsto}[t]_{\termeq{S_A}},
$$
where $\termeq{S_A}=\mathrm{E}(\vdash_{S_A})$ is the set of equations in the theory $\vdash_S$ extended by $(A,d)$.\footnote{It is a non-trivial observation that $E$ can be a strict subset of $\mathrm{E}(\vdash_{S_A})$. An instance of this happening is in the theory of convex semilattices with black-hole (see Theorem 44 of \cite{DBLP:conf/lics/MioSV21}).} To lighten the notation, we write $[t]$ as a shorthand for $[t]_{E}$, and $\qeqclass{t}$ for either $[t]_{\termeq{S_A}}$ or $[t]_{\termeq{S_{ \qterm{\qsig,S}(A,d)}}}$ with the context making it clear which of the two is intended. Therefore the action of $\collapse{}$ always looks like \[[t] \stackrel{\collapse{(A,d)}}{\mapsto} \qeqclass{t}.\]

We will show that $(U,\collapse{})$ is a monad functor from $\qterm{\qsig,S}$ to $\term{\sig,E}$. First, we show $\collapse{}: \term{\sig,E}\forget \Rightarrow \forget\qterm{\qsig,S}$ is natural, i.e.: it makes \eqref{diag:naturalp} commute for any $f:(A,d) \rightarrow (B,\Delta)$.
\begin{equation}\label{diag:naturalp}
    \begin{tikzcd}
        {\term{\sig,E}A} & {\term{\sig,E}B} \\
        {\term{\sig}A/{\termeq{S_A}}} & {\term{\sig}B/{\termeq{S_B}}}
        \arrow["{\term{\sig,E}f}", from=1-1, to=1-2]
        \arrow["{\collapse{(A,d)}}"', from=1-1, to=2-1]
        \arrow["{\qterm{\qsig,S}f}"', from=2-1, to=2-2]
        \arrow["{\collapse{(B,\Delta)}}", from=1-2, to=2-2]
    \end{tikzcd}
\end{equation}
Starting with $[t]$ in the top left, the bottom path yields $\qeqclass{t}$ then $\qterm{\qsig,S}f(\qeqclass{t}) = \qeqclass{\term{\sig}f(t)} $  and the top path yields $\term{\sig,E}f([t]) = [\term{\sig}f(t)]$ then $\qeqclass{\term{\sig}f(t)}$. Second, we show that $\collapse{} \cdot \eta^{\sig,E} \forget = \forget\widehat{\eta}$. At component $(A,d)$ and for any $a \in A$, we have
\[\collapse{(A,d)}(\eta_A(a)) = \collapse{(A,d)}([a]) =\qeqclass{a} = \forget\widehat{\eta}_{(A,d)}(a).\]
Finally, we prove that \eqref{diag:halfliftp} commutes as follows: 
\begin{equation}\label{diag:halfliftp}
{\fontsize{8.4}{8.4}
    \begin{tikzcd}
        {\term{\sig,E}\term{\sig,E}A} && {\term{\sig,E}A} \\
        {\term{\sig,E}(\term{\sig}A/{\termeq{S_A}})} & {\term{\sig}(\term{\sig}A/{\termeq{S_A}})/{\termeq{S_{\qterm{\qsig,S}(A,d)}}}} & {\term{\sig}A/{\termeq{S_A}}}
        \arrow["{\collapse{(A,d)}}", from=1-3, to=2-3]
        \arrow["{\collapse{\qterm{\qsig,S}(A,d)}}"', from=2-1, to=2-2]
        \arrow["{\widehat{\mu}_{(A,d)}}"', from=2-2, to=2-3]
        \arrow["{\mu^{\sig,E}_A}", from=1-1, to=1-3]
        \arrow["{\term{\sig,E}\collapse{(A,d)}}"', from=1-1, to=2-1]
    \end{tikzcd}
    }
\end{equation}
\begin{equation*}
{\fontsize{8.4}{8.4}
\begin{tikzcd}
    {[t([t_1],\dots,[t_n])]} && {[t(t_1,\dots,t_n)]} \\
    {[t(\qeqclass{t_1},\dots,\qeqclass{t_n})]} & {\qeqclass{t(\qeqclass{t_1},\dots,\qeqclass{t_n})}} & {\qeqclass{t(t_1,\dots,t_n)}}
    \arrow[maps to, from=1-1, to=2-1]
    \arrow[maps to, from=2-1, to=2-2]
    \arrow[maps to, from=2-2, to=2-3]
    \arrow[maps to, from=1-1, to=1-3]
    \arrow[maps to, from=1-3, to=2-3]
\end{tikzcd}
}
\end{equation*}

Concretely, the functor $F= (\forget-\circ \collapse{})$, which is indeed a functor by Proposition \ref{prop:monadfunctorfunctor}, acts as follows on objects:
\begin{center}
\begin{tabular}{c c c}
$\EM(\qterm{\qsig,S})$ &  & $ \EM(\term{\sig,E})$\\
$\alpha: \qterm{\qsig,S}(A,d) \rightarrow (A,d)  $ & $\mapsto$ & $ U\alpha \circ \collapse{(A,d)}: \term{\sig,E}A \rightarrow A$
\end{tabular}
\end{center}
It acts trivially on morphisms, namely if $f:A \rightarrow B$ is a $\qterm{\qsig,S}$--algebra homomorphism $(A,d,\alpha) \rightarrow (B,\Delta,\beta)$, then it is sent to $f:A \rightarrow B$ which is a $\term{\sig,E}$--algebra homomorphism $F(A,d,\alpha) \rightarrow F(B,\Delta,\beta)$.

\paragraph{Step 2.1: the functor $\widehat{P}:\EM(\qterm{\qsig,S}) \rightarrow \Alg(\qsig,S)$.}

Let $\alpha: \qterm{\qsig,S}(A,d) \rightarrow (A,d)$ be in $\EM(\qterm{\qsig,S})$ and denote $(A,\sem{\qsig}_{\alpha})$ the $\sig$--algebra obtained from applying $P$ to $\forget\alpha \circ \collapse{(A,d)}$. Explicitly, for each $\op:n \in \sig$, $\sem{\op}_{\alpha}$ sends $(a_1,\dots,a_n)$ to $\alpha(\qeqclass{\op(a_1,\dots,a_n)})$. We claim that $\algA_\alpha := (A,d,\sem{\qsig}_{\alpha}) \in \Alg(\qsig,S)$. Namely, for any $\op:n:L \in \qsig$, $\sem{\op}_{\alpha}$ is nonexpansive with respect to the lifting $L(A,d)$ and $\algA_\alpha \satisfies S$.

First, we show $\sem{\op}_{\alpha}$ is nonexpansive. Given $\vec{a}, \vec{b} \in L(A,d)$, let $\Delta$ be the restriction of $d$ on $\vec{a}\cup \vec{b}$, we have
\begin{align*}
    d(\alpha(\qeqclass{\op(\vec{a})}), \alpha(\qeqclass{\op(\vec{b})})) &\leq \qterm{\qsig,S}d(\qeqclass{\op(\vec{a})},\qeqclass{\op(\vec{b})})\\
    &\leq L(\Delta)(\vec{a},\vec{b})\\
    &= L(d)(\vec{a},\vec{b}).
\end{align*}
The first inequality holds because $\alpha$ is nonexpansive, the second inequality uses the rule $L$--\textsf{NE}, and the equality is the fact that $L$ preserves isometric embeddings. 

Next, we show $\algA_\alpha$ satisfies $S$. We can view any assignment $\assign: X \rightarrow A$ as an assignment $\assign: X \rightarrow \qterm{\qsig,S}A$, and to distinguish the two extensions to arbitrary $\sig$--terms, we write
\[\sem{-}_{\alpha}^{\assign}: \term{\sig}X \rightarrow A\text{ and }
\assign^*:\term{\sig}X \rightarrow \qterm{\qsig,S}A.\]
We claim that for any basic (quantitative) equation $\eqn \in \eqns{\sig}X$, \[\algA_{\alpha} \satp{\assign} \eqn  \implies  \qterm{\qsig,S}(A,d) \satp{\assign} \eqn.\]
Indeed, for any $x \in X$, we have $\sem{x}_{\alpha}^{\assign} = \assign(x) = \assign^*(x)$, thus if $\eqn$ is quantitative, w.l.o.g. it is $x=_{\varepsilon}y$, the following implications hold:
\begin{align*}
    \algA_{\alpha} \satp{\assign} \eqn &\Leftrightarrow d(\sem{x}_{\alpha}^{\assign}),\sem{y}_{\alpha}^{\assign}))\leq \varepsilon\\
    &\Leftrightarrow d(\assign(x),\assign(y))\leq \varepsilon\\
    &\Rightarrow \emptyset \vdash_{S_A} \assign(x) =_{\varepsilon} \assign(y)\\
    &\Leftrightarrow \qterm{\qsig,S}d(\assign^*(x),\assign^*(y))\leq \varepsilon\\
    &\Leftrightarrow \qterm{\qsig,S}(A,d) \satp{\assign} \eqn.
\end{align*}
The non-invertible implication holds because $\assign(x)$ and $\assign(y)$ are elements of $A$, so the clause $\emptyset \implies \assign(x)=_{d(\assign(x),\assign(y))} \assign(y)$ belongs to $S_A$. Using \textsf{Max} yields $\emptyset \vdash_{S_A} \assign(x) =_{\varepsilon} \assign(y)$. A very similar argument works when $\eqn$ is not quantitative.

Let $\bigwedge_{i\in I}\eqn_i \Rightarrow \eqn$ be a clause in $S$ (each $\eqn_i$ is basic), and suppose $\algA_{\alpha} \satp{\assign}\eqn_i$ for each $i \in I$. By our argument above, we also have $\qterm{\qsig,S}(A,d) \satp{\assign}\eqn_i$, and since $\qterm{\qsig,S}(A,d) \in \Alg(\qsig,S)$, it satisfies all clauses in $S$. We infer $\qterm{\qsig,S}(A,d) \satp{\assign}\eqn$. Now, suppose $\eqn$ has the shape $s=_\varepsilon t$ (a very similar argument will work if $\eqn$ is not quantitative), we have $\qterm{\qsig,S}d(\assign^*(s),\assign^*(t)) \leq \varepsilon$ and since $\alpha$ is nonexpansive, we also have $d(\alpha(\assign^*(s)),\alpha(\assign^*(t))) \leq \varepsilon$. Now, one can show by induction that $\sem{-}^{\assign}_{\alpha} = \alpha \circ \assign^*$, thus $\algA_{\alpha} \satp{\assign} s=_\varepsilon t$. We conclude $\algA \satisfies S$.

This describes the action of $\widehat{P}$ on objects. On morphisms, we said the action is trivial because if $f:(A,d_A,\alpha) \rightarrow (B,d_B,\beta)$ is a homomorphism of $\qterm{\qsig,S}$--algebras, then the underlying function $f:A \rightarrow B$ is nonexpansive and it is a homomorphism of $\sig$--algebras $(A,\sem{\sig}_\alpha) \rightarrow (B,\sem{\sig}_\beta)$. Therefore, it is also a $(\qsig,S)$--algebra homomorphism $(A,d_A,\sem{\sig}_\alpha) \rightarrow (B,d_B,\sem{\sig}_\beta)$. Functoriality is easy to check.

\paragraph{Step 2.2: the functor $\widehat{P}:\EM(\qterm{\qsig,S}) \rightarrow \Alg(\qsig,S)$.}

Let $\algA = (A,d,\sem{\qsig})$ be in $\Alg(\qsig,S)$ and denote $\alpha_\algA:\term{\sig,E}A \rightarrow A$ the $\term{\sig,E}$--algebra obtained from applying $P^{-1}$ to $\forget\algA$. We claim that $\alpha_\algA$ is in the image of $\forget-\circ \collapse{}$. We first show $\alpha_\algA$ is compatible with $\termeq{S_A}$ and nonexpansive with respect to $\termd{S_A}$.

For the former, suppose that $\emptyset \relgen{S_A} s=t$ with $s,t \in \term{\sig}A$. Setting $\sem{a} = a$ for every $a \in A$, we can check that $\algA^+ := (A,d,\sem{\qsig_A}) \in \Alg(\qsig_A,S_A)$. Therefore, by Theorem \ref{thm:soundness}, we have $\algA^+ \satisfies s=t$. Thus, for the only possible assignment $\assign: \emptyset \rightarrow A$ (recall that $s, t \in \term{\sig}A \subseteq \term{\sig_A}\emptyset$), we find
\[\alpha_\algA(s) = \sem{s}^{\assign} = \sem{t}^{\assign} = \alpha_{\algA}(t).\]

For the latter, we can use the same reasoning starting with the assumption $\emptyset \relgen{S_A} s=_\varepsilon t$ to obtain $d(\alpha_\algA(s),\alpha_\algA(t))\leq \varepsilon$.

We now have a nonexpansive map $\widehat{\alpha}_\algA: \qterm{\qsig,S}(A,d) \rightarrow (A,d)$ defined by $\widehat{\alpha}_\algA(\qeqclass{t}) = \alpha_\algA(t)$. Equivalently, it can be inductively defined: for any $a \in A$, $\widehat{\alpha}_\algA(\qeqclass{a}) = a$ and $\forall \op:n \in \sig$, $\forall t_1,\dots, t_n\in \term{\sig}A$,
\[\widehat{\alpha}_\algA(\qeqclass{\op(t_1,\dots,t_n)}) = \sem{\op}\left( \widehat{\alpha}_\algA(\qeqclass{t_1}),\dots, \widehat{\alpha}_\algA(\qeqclass{t_n}\qeqclass) \right).\] It remains to show it is a $\qterm{\qsig,S}$--algebra. This is a direct consequence of $\alpha_\algA$ being a $\term{\sig,E}$--algebra. Indeed, for any $a \in A$, we have
\begin{align*}
    \widehat{\alpha}_\algA(\widehat{\eta}_{(A,d)}(a)) &= \widehat{\alpha}_\algA(\qeqclass{a})\\
    &= \alpha_\algA(a) = a,
\end{align*}
and for any $\qeqclass{t(\qeqclass{t_1},\dots,\qeqclass{t_n})} \in $, we have
\begin{align*}
    &\widehat{\alpha}_\algA(\qterm{\qsig,S}(\widehat{\alpha}_\algA)\left( \qeqclass{t(\qeqclass{t_1},\dots,\qeqclass{t_n})}\right))\\
    &= \widehat{\alpha}_\algA \left( \qeqclass{t(\alpha_\algA(t_1),\dots,\alpha_\algA(t_n))} \right)\\
    &= \alpha_\algA \left(t(\alpha_\algA(t_1),\dots,\alpha_\algA(t_n)) \right)\\
    &= \alpha_\algA(t(t_1,\dots,t_n))\\
    &= \widehat{\alpha}_\algA(\qeqclass{t(t_1,\dots,t_n)})\\
    &= \widehat{\alpha}_\algA(\widehat{\mu}_{(A,d)}\left( \qeqclass{t(\qeqclass{t_1},\dots,\qeqclass{t_n})}\right))\\
\end{align*}

This describes the action of $\widehat{P}^{-1}$ on objects. On morphisms, an argument similar the one above yield the functor $\widehat{P}^{-1}$.

\subsection{Proofs of Section \ref{examples}}
In this Section, whenever $p \in (0,1)$, we denote $\overline{p} = 1-p$.
\subsubsection{Proof of Proposition \ref{prop:LKdiffuse}}

Let $\distr, \distrb \in \mD A$, since $\LK{d}(\distr,\distrb)$ is a sum of products of numbers in $[0,1]$, we find that $\LK{d}$ has type $\mD A\times \mD A \rightarrow [0,1]$, i.e. it is a fuzzy relation. It is also clear that $\LK{d}$'s definition does not depend on the order of the inputs, so it is symmetric \eqref{eq:symm}. For the triangle inequality \eqref{eq:trineq}, we have the following derivation for all $\distr,\distrb,\distrc \in \mD A$, where $x$, $y$, and $z$ range in $\supp{\distr}$, $\supp{\distrb}$ and $\supp{\distrc}$ respectively.
\begin{align*}
    &\LK{d}(\distr, \distrb) + \LK{d}(\distrb, \distrc)\\
    &= \sum_{(x,y)} \distr(x)\distrb(y)d(x,y) + \sum_{(y,z)} \distrb(y)\distrc(z)d(y,z)\\
    &= \sum_y \distrb(y)\left( \sum_x \distr(x)d(x,y) + \sum_z \distrc(z)d(y,z)) \right)\\
    &= \sum_y \distrb(y)\left( \sum_{(x,z)} \distr(x)\distrc(z)d(x,y) + \sum_{(x,z)} \distr(x)\distrc(z)d(y,z)) \right)\\
    &= \sum_y \distrb(y)\left( \sum_{(x,z)} \distr(x)\distrc(z)(d(x,y) + d(y,z)) \right)\\
    &\geq \sum_y \distrb(y)\left( \sum_{(x,z)} \distr(x)\distrc(z)d(x,z)\right)\\
    &= \sum_{(x,z)} \distr(x)\distrc(z)d(x,z) = \LK{d}(\distr,\distrc)
\end{align*}

\subsubsection{Proof of Lemma \ref{lem:LKpliftpresiso}}
Let $(A,d)$ be a diffuse metric space, and $p \in (0,1)$. For any subset $A'\subseteq A$ and any $a,b,a',b' \in A'$,
\begin{align*}
    &L^p_{\textnormal{\L K}}(d)((a,b),(a',b'))\\
    &= \LK{d}(\sem{+_p}(a,b),\sem{+_p}(a',b'))\\
    &=\LK{d}(pa+\overline{p}b,pa'+\overline{p}b')\\
    &= p^2d(a,a')+p\overline{p}d(a,b')+\overline{p}pd(b,a') + \overline{p}^2d(b,b')\\
    &= p^2d|_{A'}(a,a')+p\overline{p}d|_{A'}(a,b')+\overline{p}pd|_{A'}(b,a') + \overline{p}^2d|_{A'}(b,b')\\
    &= \LK{d|_{A'}}(pa+\overline{p}b,pa'+\overline{p}b')\\
    &= \LK{d|_{A'}}(\sem{+_p}(a,b),\sem{+_p}(a',b'))\\  
    &= L^p_{\textnormal{\L K}}(d|_{A'})((a,b),(a',b')).
\end{align*}
In other words, if $i: A' \hookrightarrow A$ is the inclusion function (without loss of generality, these are the only isometric embeddings we need to consider) $L^p_{\textnormal{\L K}}(i)$ is an isometric embedding.

\subsubsection{Proof of Lemma \ref{lem:convcombLKNE}}
For any $\distr, \distr', \distrb,\distrb' \in \mD A$, we have the following derivation where $x$ and $y$ range over the union of the support of all these distributions.
\begin{align*}
    &\LK{d}(\sem{+_p}(\distr,\distrb),\sem{+_p}(\distr',\distrb'))\\
    &= \LK{d}(p\distr+\overline{p}\distrb,p\distr'+\overline{p}\distrb')\\
    &= \sum_{x,y}\left( p\distr(x)+\overline{p}\distrb(x) \right)\left( p\distr'(y)+\overline{p}\distrb'(y) \right)d(x,y)\\
    &= \sum_{x,y}\Big( p^2\distr(x)\distr'(y) + p\overline{p}\distr(x)\distrb'(y) \\
    &\qquad + \overline{p}p\distrb(x)\distr'(y) + \overline{p}^2\distrb(x)\distrb'(y) \Big)d(x,y)\\
    &= p^2\sum_{x,y}\distr(x)\distr'(y)d(x,y) + p\overline{p}\sum_{x,y}\distr(x)\distrb'(y)d(x,y)\\
    &\quad + \overline{p}p\sum_{x,y}\distrb(x)\distr'(y)d(x,y)+ \overline{p}^2\sum_{x,y}\distrb(x)\distrb'(y)d(x,y)\\
    &= p^2\LK{d}(\distr,\distr')+p\overline{p}\LK{d}(\distr,\distrb')\\
    &\quad +\overline{p}p\LK{d}(\distrb,\distr') + \overline{p}^2\LK{d}(\distrb,\distrb')\\
    &= \LK{\LK{d}}(\sem{+_p}(\dirac{\distr},\dirac{\distrb}),\sem{+_p}(\dirac{\distr'},\dirac{\distrb'}))\\
    &= L^p_{\textnormal{\L K}}(\LK{d})\left( (\distr,\distrb),(\distr',\distrb') \right).
\end{align*}

\subsubsection{Proof of Theorem \ref{thm:freealgLK}}
Let $\eta_{(A,d)} : (A,d) \rightarrow (\mD A, \LK{d})$ be defined by $a \mapsto \dirac{a}$, we show that for any $(B,\Delta,\sem{+_p}_B)$ and nonexpansive map $f:(A,d) \rightarrow (B,\Delta)$, there exists a unique homomorphism $f^* : A \rightarrow B$ in $\Alg(\qsig_{\textnormal{\L K}},E)$ such that $f^* \circ \eta = f$. This is summarized in \eqref{diag:freeconvalg}.
\begin{equation}\label{diag:freeconvalg}
\begin{tikzcd}
	{(A,d)} & {(\mD A,\LK{d})} & {(\mD A,\LK{d},\sem{+_p})} & {} \\
	& {(B,\Delta)} & {(B,\Delta,\sem{-}_B)}
	\arrow["{\eta_{(A,d)}}", from=1-1, to=1-2]
	\arrow["f"', from=1-1, to=2-2]
	\arrow[""{name=0, anchor=center, inner sep=0}, "{f^*}", dashed, from=1-3, to=2-3]
	\arrow["{\text{in }\DMet}", shift left=5, draw=none, from=1-1, to=1-2]
	\arrow[""{name=1, anchor=center, inner sep=0}, "{f^*}", dashed, from=1-2, to=2-2]
	\arrow["{\text{in }\Alg(\qsig_{\textnormal{\L K}},E)}", shift left=5, draw=none, from=1-2, to=1-4]
	\arrow["\forget"', shorten <=16pt, shorten >=16pt, from=0, to=1]
\end{tikzcd}
\end{equation}
First, we show that $\eta_{(A,d)}$ is nonexpansive. Let $a, a' \in (A,d)$, the following derivation shows $\eta^{\textnormal{\L K}}_{(A,d)}$ is an isometry.
\begin{align*}
    \LK{d}(\dirac{a}, \dirac{a'}) &= \sum_{x \in \supp{\dirac{a}}}\sum_{y \in \supp{\dirac{a'}}}\dirac{a}(x)\dirac{a'}(y)d(a,a')\\
    &= \dirac{a}(a)\dirac{a'}(a')d(a,a')\\
    &= d(a,a')
\end{align*}
We already know $(\mD A, \sem{+_p})$ is a convex algebra and Lemma \ref{lem:convcombLKNE} tells us each $\sem{+_p}: L^p_{\textnormal{\L K}}(\mD A, \LK{d}) \rightarrow (\mD A, \LK{d})$ is nonexpansive, thus $(\mD A, \LK{d}, \sem{+_p})$ is a convex \L K algebra. Now, since $\mD A$ is the free convex algebra on $A$ and $a \mapsto \dirac{a}$ is the universal morphism witnessing this, we have convex algebra homomorphism $f^*:(\mD A, \sem{+_p}) \rightarrow (B,\sem{+_p}_B)$ making the triangle above commute. It remains to show it is nonexpansive to conclude it is a morphism in $\Alg(\qsig_{\textnormal{\L K}},E)$.

Briefly, $f^*$ sends a probability distribution $\distr$ on $A$ to the interpretation in $B$ of a term in $\term{\sig_{\textnormal{\L K}}}A$ corresponding to $\distr$ where every occurence of $a$ has been replaced by $f(a)$. For instance, if $\supp{\distr} = \{a_1,\dots, a_n\}$, one could write
\[f^*(\distr) = \sem{+_{\distr(a_1)}}_B\left( f(a_1), \sem{+_{\frac{\distr(a_2)}{1-\distr(a_1)}}}_B(f(a_2), \dotsb \right).\]
In particular, we have $f^*(\dirac{a}) = f(a)$ for any $a \in A$. Moreover, since $f^*$ is a homomorphism, for any $\distr, \distr' \in \mD A$ and $p \in (0,1)$, we have 
\[f^*(p\distr +\overline{p}\distr') = f^*(\sem{+_p}(\distr,\distr')) = \sem{+_p}_B(f^*(\distr),f^*(\distr').\]
More details can be inferred from \cite{jacobs:2010}.

We are now ready to show $f^*$ is nonexpansive. We proceed by induction on the size of the support of $\distr, \distrb \in \mD A$. For the base case, we must have $\distr = \dirac{a}$ and $\distrb= \dirac{b}$ for $a,b \in A$, then it is easy to compute
\[\Delta(f^*(\dirac{a}),f^*(\dirac{b})) = \Delta(f(a),f(b)) \leq d(a,b) = \LK{d}(\dirac{a},\dirac{b}).\]
Suppose $f^*$ is nonexpansive on all pairs of distributions $\distr$ and $\distrb$ with $2 < |\supp{\distr}|+|\supp{\distrb}| < n$, and fix any $\distr,\distrb \in \mD A$ with  $|\supp{\distr}|+|\supp{\distrb}| = n$. It is always possible to rewrite $\distr = p\dirac{a}+\overline{p}\distr'$ and $\distrb = p\dirac{b}+\overline{p}\distrb'$ such that $|\supp{\distr'}|+|\supp{\distrb'}| < n$ (without loss of generality, we can pick $a$ that has the smallest weight $p$ in $\distr$ and $b$ has weight at least $p$ in $\distrb$).

By the induction hypothesis, we have the following inequalities (recalling that $f^*(\dirac{a}) = f(a)$ and $f^*(\dirac{b}) = f(b)$).
\begin{align*}
    \Delta(f(a),f(b)) &\leq \LK{d}(\dirac{a},\dirac{b})\\
    \Delta(f(a),f^*(\distrb')) &\leq \LK{d}(\dirac{a},\distrb')\\
    \Delta(f^*(\distr'),f(b)) &\leq \LK{d}(\distr',\dirac{b})\\
    \Delta(f^*(\distr'), f^*(\distrb')) &\leq \LK{d}(\distr',\distrb')
\end{align*}
Then, we have the following derivation where $x$ and $y$ range over $\supp{\distr}$ and $\supp{\distr'}$ respectively.
\begin{align*}
    &\Delta(f^*(\distr),f^*(\distrb))\\
    &= \Delta\left( \sem{+_p}_B(f^*(\dirac{a}),f^*(\distr')),\sem{+_p}_B(f^*(\dirac{b}),f^*(\distrb')) \right)\\
    &\leq L^p_{\textnormal{\L K}}(\Delta)\left( (f^*(\dirac{a}),f^*(\distr')), (f^*(\dirac{b}),f^*(\distrb')) \right)\\
    &= L^p_{\textnormal{\L K}}(\Delta)\left( (f(a),f^*(\distr')), (f(b),f^*(\distrb')) \right)\\
    &= \LK{\Delta}\left( pf(a)+\overline{p}f^*(\distr'), pf(b)+\overline{p}f^*(\distrb') \right)\\
    &= p^2\Delta(f(a),f(b))+p\overline{p}\Delta(f(a),f^*(\distrb'))\\
    &\quad +\overline{p}p\Delta(f^*(\distr'), f(b)) + \overline{p}^2\Delta(f^*(\distr'),f^*(\distrb'))\\
    &\leq p^2\LK{d}(\dirac{a},\dirac{b})+p\overline{p}\LK{d}(\dirac{a},\distrb')\\
    &\quad +\overline{p}p\LK{d}(\distr',\dirac{b}) + \overline{p}^2\LK{d}(\distr',\distrb')\\
    &= p^2\sum_{x,y}\dirac{a}(x)\dirac{b}(y)d(x,y) + p\overline{p}\sum_{x,y}\dirac{a}(x)\distrb'(y)d(x,y)\\
    &\quad + \overline{p}p\sum_{x,y}\distr(x)\dirac{b}(y)d(x,y) + \overline{p}^2\sum_{x,y}\distr(x)\distrb'(y)d(x,y)\\
    &= \sum_{x,y}(p\dirac{a}(x)+\overline{p}\distr'(x))(p\dirac{b}(y)+\overline{p}\distrb'(y))d(x,y)\\
    &= \sum_{x,y}\distr(x)\distrb(y)d(x,y)\\
    &= \LK{d}(\distr,\distrb)
\end{align*}
The first inequality holds by $L^p_{\textnormal{\L K}}$--nonexpansiveness of $\sem{+_p}_B$, the second holds by the induction hypothesis (the four inequalities written above).

\end{document}